\newtheorem{Theorem}{Theorem}[section]
\newtheorem{Lemma}[Theorem]{Lemma}
\newtheorem{proposition}[Theorem]{Proposition}
\newtheorem{corollary}[Theorem]{Corollary}
\newtheorem{remark}[Theorem]{Remark}
\newtheorem{definition}[Theorem]{Definition}
\numberwithin{equation}{section}
\newcommand{\R}{{\mathbb R}}
\newcommand{\Z}{{\mathbb Z}}
\title{Arithmetic phase transitions for Mosaic Maryland model}
\author{Jiawei He}
\address{Chern Institute of Mathematics and LPMC, Nankai University, Tianjin 300071, China}
\email{hermit\_well@163.com}
\author {Xu Xia}
\address{Chern Institute of Mathematics and LPMC, Nankai University, Tianjin 300071, China} 
\email{xiaxu14@mails.ucas.ac.cn}
\begin{document}
\maketitle

\begin{abstract}
We give a precise description of spectral types of the Mosaic Maryland model   with any irrational frequency,   which provides a quasi-periodic  unbounded  model with non-monotone potential has arithmetic phase transition. 
\end{abstract}

\section{Introduction}
In this paper, we study the spectral property of almost-periodic unbounded Schr\"odinger operator
\begin{equation*}
\left(H u\right)_{n}=u_{n+1}+u_{n-1}+\lambda v_n u_{n},
\end{equation*}
where $v_n$ is a unbounded almost-periodic sequence.  An important  example is 
\begin{equation*}
\left(H_{\lambda \tan , \alpha, \theta} u\right)_{n}=u_{n+1}+u_{n-1}+\lambda \tan \pi (\theta+n\alpha) u_{n},
\end{equation*}
 where $\lambda \in \mathbb{R}$ is the coupling
constant, $\alpha \in \mathbb{R}\backslash \mathbb{Q} $ is the frequency, $\theta\notin \frac{1}{2}+\alpha \mathbb{Z}+\mathbb{Z}$ is the phase. In the following, we just denote 
$\Theta \triangleq \frac{1}{2}+\alpha \mathbb{Z}+\mathbb{Z}$, and we say ``all $\theta$",  if $\theta \notin \Theta$.

This model  was  first proposed by Grempel, Fishman, and Prange in 1982 \cite{ref6} as a model stemming from the study of quantum chaos,
and later dubbed the Maryland model by B. Simon \cite{simon85}.
In recent years,  Maryland model has received extensive research due to the rich backgrounds in quantum physics \cite{berry1984, fishman10, ganeshan2014, longhi2021}.
The richness of its spectral theory, abundance of unusual features, and amenability to analysis also make it a crucial component of general conjectures and counterexamples in mathematics.

It is worth noting that if the potential is unbounded, there is no absolutely continuous spectrum for all $\theta$  \cite{simon1989}. As a result, it is natural to expect localized eigenfunctions.
Grempel-Fishman-Prange \cite{ref6} obtained in an essentially rigorous way, a dense set of explicitly determined eigenvalues, corresponding to exponentially decaying eigenfunctions, for Diophantine frequencies.  Here $\alpha$ is Diophantine, if there exist $\gamma, \sigma>0$, such that
$$ \inf _{j \in \mathbb{Z}}|n \alpha-j| \geq{\frac{\gamma}{|n|^{\sigma}}} \quad \forall n \neq 0 .$$
Indeed for Diophantine frequencies $\alpha$,  Maryland model has Anderson localization: pure point spectrum with exponentially decaying eigenfunctions, for all $\theta$ \cite{figotin1984, simon85}. 
Motivated by the Maryland model, Anderson localization for Maryland-type model have recently sparked tremendous interest in spectral theory of Schr\"odinger operator.  In general, $V$ is called Maryland-type potential, if $V$ is a  function 
\begin{equation*}
f:(-1 / 2,1 / 2) \rightarrow(-\infty,+\infty), \quad f(-1 / 2 \pm 0)=\mp \infty,
\end{equation*}
and can be extended into $\mathbb{R} \backslash(\mathbb{Z}+1 / 2)$ by 1-periodicity. We call  $V$  is  Lipschitz monotone if  there exists $\gamma>0$ such that $f(y)-f(x) \geqslant$ $\gamma(y-x)$ for all $0<x<y<1$. Using a KAM-type
procedure, Béllissard, Lima, and Scoppola   \cite{bellissard1983} obtained  Anderson localization for a class of meromorphic functions $V$ whose restrictions onto $\mathbb{R}$ are also 1-periodic and Lipschitz monotone with Diophantine frequencies. This  result \cite{bellissard1983} is perturbative, that is, once $V$ is fixed, one can only obtain localization for the potential $\lambda V$ with $\lambda \geqslant \lambda_{0}(\alpha)$, where $\lambda_{0}$ depends on the Diophantine constant of $\alpha$ and does not have a uniform lower bound on a full measure set of frequencies. Additionally, a large family of 1D quasiperiodic operators with unbounded monotone potentials and Diophantine frequencies were constructed in \cite{kachkovskiy2019}  as the non-perturbative outcome of Anderson localization. 
Another example of Maryland-type potential in a closely related setting is given in \cite{kachkovskiy2021}, where the authors show that Anderson localization can still be explored in operators with unbounded monotone potential, which are not required to be strictly monotone and can have flat segments under certain geometric conditions. As we can see, all  the preceding results all assume that the potential is monotonic, a natural question is whether or not a non-monotone $V$ exhibits Anderson localization. That is the primary motivation for our paper.

Our second motivation stems from the Maryland model's phase transition. Before we go any further, we consider another more famous quasi-periodic model:
\begin{equation*}
(H_{\lambda,\alpha,\theta} u)_n= u_{n+1}+u_{n-1} +2\lambda \cos 2
\pi (n\alpha + \theta) u_n,
\end{equation*}
This model is known as almost-Mathieu operator (or Aubry-Andre model in physical liteature), which is a bounded self-adjoint operator on $\ell^2(\Z)$. 
The Almost-Mathieu operator(AMO) was first  proposed by Peierls \cite{Pe},
as a model for an electron on a 2D lattice, acted on by a homogeneous
magnetic field \cite{Ha,R}.
AMO
undergoes a phase transition at $\lambda=1,$ where the Lyapunov
exponent changes from zero everywhere on the spectrum \cite{ref26} to
positive everywhere on the spectrum \cite{H}. Aubry-Andre conjectured
\cite{AA80}, that at $\lambda=1$
 the spectrum changes from absolutely continuous for $\lambda<1$
to pure point for $\lambda>1.$ This has since been proved, for all
$\alpha,\theta$ for $\lambda<1$ \cite{L93,Aab,AD,ref3}  and for Diophantine $\alpha,\theta$
(so a.e.) for $\lambda>1$ \cite{ref11}. Indeed, there exists second transition line from singular continuous spectrum to pure point spectrum. 
Let $p_{n} / q_{n}$ be the continued fraction approximates of $\alpha$. The index $\beta(\alpha)$ that measures Liouvilleness of the frequency is defined as follows:
\begin{equation*}
\beta(\alpha)=\limsup _{n \rightarrow \infty} \frac{\ln q_{n+1}}{q_{n}}.
\end{equation*}
If $ 1<\lambda<e^{\beta}$, then $H_{\lambda,\alpha,\theta}$ has purely singular continuous spectrum  for all $\theta$ \cite{ayz}, and if $\lambda>e^{\beta}$, then  $H_{\lambda,\alpha,\theta}$ has Anderson localization  for $\gamma(\alpha,\theta)=0$ \cite{ayz,ref84}, where we recall that
$$\gamma(\alpha,\theta)=-\limsup_{k\rightarrow \infty} \frac{\ln\|2\theta+k\alpha\|_{\R/\Z} }{|k|}.$$
Moreover, if $\ln|\lambda| <\gamma(\alpha,\theta),$ $H_{\lambda,\alpha,\theta}$ has purely singular continuous spectrum \cite{JLiu1}, and if $\ln|\lambda| >\gamma(\alpha,\theta),$ then $H_{\lambda,\alpha,\theta}$ has Anderson localization  for $\beta(\alpha)=0$ \cite{JLiu1}. To summarize, AMO has two different types of resonances: frequency resonances and phase resonances, where 
$\beta(\alpha)$ measures exponential strength of the frequency resonances, and $\gamma(\alpha,\theta)$ measure exponential strength of the phase resonances.  The second transition line claims that the operator displays localization when the Lyapunov exponent beats frequency/phase resonances.

Let us return to the Maryland model. As previously stated, if the frequency $\alpha$  is Diophantine, the Maryland model has Anderson localization for all  $\theta$ \cite{figotin1984, simon85}.
 Indeed it was shown by Jitomirskaya-Liu  \cite{ref8} that  $\sigma_{p p}\left(H_{\lambda \tan, \alpha, \theta}\right)$ can be characterized arithmetically: by defining an index
 \begin{equation}\label{de}
\delta(\alpha, \theta):=\limsup _{n \rightarrow \infty} \frac{\ln q_{n+1}+\ln \left\|q_{n}\left(\theta-\frac{1}{2}\right)\right\|_{\mathbb{T}}}{q_{n}},
\end{equation}
Jitomirskaya-Liu  \cite{ref8} show that
\begin{equation}\label{pp}\sigma_{p p}\left(H_{\lambda \tan, \alpha, \theta}\right)=\left\{E: L(E) \geq \delta(\alpha, \theta)\right\},\end{equation} while we have  $$\sigma_{s c}\left(H_{\lambda \tan, \alpha, \theta}\right)=\overline{\left\{E: L(E)<\delta(\alpha, \theta)\right\}}$$ where $L(E)$ is the Lyapunov exponent. 
 This makes the Maryland the  first model  where arithmetic spectral transition is described without
any parameter exclusion. %Namely, an index $\delta(\alpha, \theta) \in[-\infty, \infty]$ (see (\ref{de})) was introduced in \cite{ref8} and it was shown that $\sigma_{p p}\left(H_{\lambda, \alpha, \theta}\right)=\left\{E: L_{\lambda}(E) \geq \delta(\alpha, \theta)\right\}$, while $\sigma_{s c}\left(H_{\lambda, \alpha, \theta}\right)=\overline{\left\{E: L_{\lambda}(E)<\delta(\alpha, \theta)\right\}}$ where $L_{\lambda}(E)$ is the Lyapunov exponent. %
It should be noted that  the proofs of localization in \cite{ref8}, as well as the original physics paper \cite{ref6}, is based on  a Cayley transform \cite{simon85} that reduced the eigenvalue problem to solving certain explicit cohomological equation.
In \cite{ref13}, the authors provided a constructive proof for the localization component  by expanding Jitomirskaya's localization approach \cite{ref11}, obtaining  Anderson localization for all $\theta$ and Diophantine $\alpha$. 
Quite recently,  Han-Jitomirskaya-Yang \cite{ref81} extended \cite{ref13}, gave a  constructive proof of \eqref{pp} for any 
 irrational $\alpha$.  More importantly, they investigated that, different from AMO, the Maryland model has another resonance: anti-resonance; this type of observation is critical in proving the arithmatic transition. The natural question is whether there are other quasi-periodic unbounded models that exhibit arithmetic  phase transitions, and whether the monotonicity is an essential assumption.

To answer these questions, we study the following unbounded Schr\"odinger operator:
\begin{equation}
\left(H_{V_1, \alpha, \theta} u\right)_{n}=u_{n+1}+u_{n-1}+\lambda V_1(\theta+\frac{n \alpha}{2},n) u_{n}=E u_{n},
\label{123}
\end{equation}
where $$
V_1(\theta, n)=\left\{\begin{array}{cc}
  \tan  \pi \theta, & n \in 2 \mathbb{Z}, \\
0, & \text { else }.
\end{array}  \right.
$$
Be aware that this potential exhibits strong oscillations, we refer to it as the Mosaic Maryland operator. The name of the operator was inspired by a recently researched quasi-periodic mosaic model\cite{ref22,wang2020}:
 \begin{equation*}
	(H_{V_{2},\alpha,\theta}u)_n=u_{n+1}+u_{n-1}+V_{\theta}(n)u_{n},
\end{equation*}
where
\begin{equation*}\quad
	V_{\theta}(n)=\left\{\begin{matrix}2\lambda \cos2\pi(\theta+n\alpha),&n\in \kappa \mathbb{Z},\\ 0,&else,\end{matrix}\right.\quad\lambda>0.
\end{equation*}
and the authors demonstrate the existence of exact mobility edges \cite{ref22}, which are energies separating absolutely continuous spectrum from pure point spectrum. For the mosaic Maryland  model, we show the following phase transition result:
\begin{Theorem}\label{main1}
 Let $\alpha \in \mathbb{R} \backslash \mathbb{Q}$,  then Lyapunov exponent of $H_{V_1, \alpha, \theta}$ satisfy 
 $$ L(E)= arccosh (\frac{\sqrt{(E^2-4)^2+(\lambda E )^2}+\sqrt{(E^4+(\lambda E)^2)}}{4}).$$
 Moreover, $H_{V_1, \alpha, \theta}$
     has purely singular continuous spectrum on $\left\{E:0< L(E)<\delta(\alpha, \theta)/2\right\}$, and pure point spectrum on $\left\{E: L(E)>\delta(\alpha, \theta)/2\right\}$.
\end{Theorem}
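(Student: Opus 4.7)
My plan is to reduce the Mosaic Maryland eigenvalue problem to the standard Maryland model by a parity decimation, then invoke the arithmetic spectral transition of Han--Jitomirskaya--Yang \cite{ref81}. At an odd site $n=2k+1$ the potential vanishes, so the eigenvalue equation $H_{V_1,\alpha,\theta}u=Eu$ immediately gives $u_{2k+1}=(u_{2k}+u_{2k+2})/E$ for $E\neq 0$. Substituting into the even-site equation at $n=2k$ and setting $w_k:=u_{2k}$ yields
\[
w_{k+1}+w_{k-1}+\lambda E\,\tan\bigl(\pi(\theta+k\alpha)\bigr)\,w_k=(E^2-2)\,w_k,
\]
which is exactly the Maryland eigenvalue equation $H_{\lambda E\tan,\alpha,\theta}w=\widetilde Ew$ at spectral parameter $\widetilde E:=E^2-2$, coupling $\lambda E$, frequency $\alpha$, and phase $\theta$. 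Two $n$-steps of the Mosaic transfer cocycle compose into one $k$-step of the reduced Maryland cocycle, so the respective Lyapunov exponents are proportional with a factor of $2$; this is the origin of the $\delta(\alpha,\theta)/2$ threshold in the statement, since the Maryland transition \cite{ref8,ref81} occurs at $L^{\mathrm{Mar}}=\delta(\alpha,\theta)$.

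For the closed-form Lyapunov exponent I would start from the classical Maryland formula
\[
L^{\mathrm{Mar}}(\widetilde E;\widetilde\lambda)=\log\left|\tfrac{1}{2}\bigl(\widetilde E+\mathrm{i}\widetilde\lambda+\sqrt{(\widetilde E+\mathrm{i}\widetilde\lambda)^2-4}\,\bigr)\right|
\]
and specialize to $\widetilde E=E^2-2$, $\widetilde\lambda=\lambda E$. The pivotal algebraic observation is the factorization
\[
(E^2-2+\mathrm{i}\lambda E)^2-4=(E^2-4+\mathrm{i}\lambda E)\cdot E(E+\mathrm{i}\lambda)=:ab,
\]
with $b-a=4$. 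Writing $c:=(a+b)/2=E^2-2+\mathrm{i}\lambda E$ and using the identity $c\pm\sqrt{ab}=\tfrac12(\sqrt a\pm\sqrt b)^2$, one obtains $|c+\sqrt{ab}|+|c-\sqrt{ab}|=|a|+|b|$. Combined with $|c+\sqrt{ab}|\cdot|c-\sqrt{ab}|=4$, this gives $4\cosh L=|a|+|b|$, which, with $|a|=\sqrt{(E^2-4)^2+\lambda^2E^2}$ and $|b|=\sqrt{E^4+\lambda^2E^2}$, reproduces the stated formula.

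For the spectral dichotomy, note that $u\leftrightarrow w$ is a bijection between formal solutions at $E\neq 0$ with comparable $\ell^2$ norms, so $E$ is an $\ell^2$-eigenvalue of $H_{V_1,\alpha,\theta}$ iff $E^2-2$ is an $\ell^2$-eigenvalue of $H_{\lambda E\tan,\alpha,\theta}$. On $\{L(E)>\delta(\alpha,\theta)/2\}$ the reduced Maryland operator has an Anderson-localized eigenfunction $w$ at $\widetilde E$ by \cite{ref81}; the lift $u_{2k}=w_k$, $u_{2k+1}=(w_k+w_{k+1})/E$ is an exponentially decaying eigenfunction of the Mosaic operator. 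On $\{0<L(E)<\delta(\alpha,\theta)/2\}$ the reduced Maryland has no eigenfunction at $\widetilde E$, hence the Mosaic has none at $E$; combined with the absence of absolutely continuous spectrum for unbounded potentials \cite{simon1989}, the spectrum is purely singular continuous.

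The main technical difficulty will be the spectral-type transfer: while the decimation is transparent for formal solutions, making this rigorous for the full spectral decomposition requires tracking that the non-unitary decimation (which formally breaks at $E=0$) preserves subordinacy and spectral-measure support. In particular, $E\mapsto E^2-2$ is two-to-one, so each Maryland eigenfunction at $\widetilde E$ generates two candidate Mosaic eigenfunctions at $\pm E$ with $E^2=\widetilde E+2$, and one must verify that the lifts are linearly independent and exhaust the eigenspace. The algebraic derivation of the Lyapunov formula is routine once the factorization $c^2-4=ab$ is identified; the exceptional point $E=0$ and the boundary set $\{L=\delta/2\}$ must be handled separately but are spectrally negligible.
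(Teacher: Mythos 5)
Your decimation route is sound and genuinely different from what the paper does. Eliminating the zero-potential odd sites is a Schur-complement step that turns the mosaic equation at $E\neq 0$ into the standard Maryland equation with coupling $\lambda E$, frequency $\alpha$, phase $\theta$ and energy $E^{2}-2$; the correspondence is implemented by the constant conjugacy $\left(\begin{smallmatrix}1&0\\-1&E\end{smallmatrix}\right)$ between the two-step mosaic cocycle $D_{E}^{V_{1}}$ and the reduced Maryland cocycle, so the Lyapunov exponents agree up to the factor $2$, and your factorization $c^{2}-4=(E^{2}-4+\mathrm{i}\lambda E)\,E(E+\mathrm{i}\lambda)$ is the same algebra the paper's appendix performs on the characteristic equation $x^{2}-(E^{2}-\mathrm{i}\lambda E-2)x+1=0$. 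Since the Maryland inputs you invoke (exponential decay of every polynomially bounded solution when $L^{\mathrm{Mar}}>\delta(\alpha,\theta)$, and $\sigma_{pp}=\{L^{\mathrm{Mar}}\ge\delta\}$) hold in \cite{ref8,ref81} for every coupling, every irrational $\alpha$ and every $\theta\notin\Theta$, the energy dependence of the coupling $\lambda E$ is harmless, and the spectral transfer is cleanest exactly along the lines you sketch: Schnol plus the explicit lift for the pure point side, and ``no eigenvalues'' plus Simon--Spencer absence of absolutely continuous spectrum \cite{simon1989} for the singular continuous side; no subordinacy theory is needed, and the worry about the two-to-one map $E\mapsto E^{2}-2$ ``exhausting the eigenspace'' is a non-issue, since you only ever argue at a fixed $E$. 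By contrast, the paper never performs this reduction: it computes the Lyapunov exponent by Avila's global theory applied to the complexified two-step cocycle, proves the singular continuous part by a sharp Gordon/telescoping argument, and reproves localization constructively in the style of \cite{ref81} (Green's function expansion, Lagrange interpolation, $\theta$-minimal anti-resonances) adapted to the even/odd structure. Your route is much shorter and exhibits the theorem as an essentially direct corollary of the known Maryland transition, while the paper's route is self-contained and shows the anti-resonance machinery survives a non-monotone unbounded potential, which is its stated purpose. One bookkeeping point you should fix (it is blurred in the paper's statement as well): the arccosh expression is the exponent of the reduced Maryland/two-step cocycle, i.e.\ $2L(E)$ with $L(E)$ the per-site mosaic exponent appearing in the thresholds $L\gtrless\delta/2$ (the paper's body consistently works with $2L(E)\gtrless\delta$), so state the formula and the transition with one consistent normalization, and note separately that $E=0$, where the decimation degenerates, has $L(0)=0$ and lies outside both regimes.
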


Let's give some comments why Theorem \ref{main1} is interesting. While  Cayley transform \cite{simon85} can be used to prove pure point part of the Maryland model, it doesn't work the mosaic Maryland model, thus to prove the Anderson localization part of Theorem \ref{main1}, we have to adopt the constructive proof \cite{ref81,ref13}.
Note that for quasi-periodic unbounded models, if the potential is monotonic and the frequency is Diophantine, there are essentially no resonances, making localization proof relatively simple, this can be seen either  from the KAM side \cite{bellissard1983} or from the Green's function estimation side\cite{ref13}. In our non-monotonic model, our proof follows from  \cite{ref81}, 
and we will further explore the anti-resonances lead to Anderson localization.
From the singular continuous side, the proof will based on sharp Gordon's argument  \cite{ayz,ref8}. To the best knowledge of the authors, we  present the first quasi-periodic unbounded  model with non-monotonic model, that has arithmetic phase transition.

We also note the Mosaic Maryland operator is generated by product systems, which corresponds to a periodic multiplicative modulation of Maryland potential. Clearly then, $V_{1}(\theta, n)$ admits a description in terms of the product system $X=\mathbb{T} \times \mathbb{Z}_{2} $, $T: X \rightarrow X,(\theta, n) \mapsto (\tilde{\alpha}+\theta, n+1)$. In particular,
$$
V_{1}(n, \theta )=V_{\omega}(n)=f\left(T^{n} \omega\right).
$$
where $\omega=\left(n, \theta\right)$ and
$$
f\left(n, \theta\right)=\tan(\pi(\theta))f_{2}(n),
$$
with $f_{2}(n)=\delta_{n \bmod  2, 0}$.
Indeed, it is a special case of ergodic Schr\"{o}dinger  operators defined over product dynamical systems in which one factor is periodic and the other factor is either a subshift over a finite alphabet or an irrational rotation of the circle. We point the reader to \cite{ref21} for a thorough account of spectral properties of dynamically defined Schr\"{o}dinger operators.

%The structure of the paper follows. We recall some general facts about  the
%transfer matrix cocycle and Rational approximations  in Section 2. In Section 3, the Lyapunov exponent of our model (defined in \ref{123}) is computed explicitly and prove $0\in \Sigma(H_{V_1, \tilde{\alpha}, \theta})$. We discuss the case of $E $ with $E\neq 0$ and $L_{\lambda}(E)>\delta(\alpha, \theta)/2$ in Section 4 in particular giving the proof of Theorem \ref{local} by using Proposition \ref{pr1} and \ref{pr2}.  After the preparations  we present in Section 5, Proposition \ref{pr1} and \ref{pr2} will be proved in Section 6 and Section 7  respectively. Finally, We discuss the  case of $E $ with $E\neq 0$ and $L_{\lambda}(E)<\delta(\alpha, \theta)/2$ 

\section{Preliminaries}

\subsection{Rational approximations}
Let $\alpha \in(0,1) \backslash \mathbb{Q}, a_{0}=0$, and let $\alpha_{0}=\alpha$. Inductively for $k \geq 1$,
$$
a_{k}=\left[\alpha_{k-1}^{-1}\right], \quad \alpha_{k}=\alpha_{k-1}^{-1}-a_{k}=G\left(\alpha_{k-1}\right)=\left\{\frac{1}{\alpha_{k-1}}\right\} .
$$
Let $p_{0}=0, p_{1}=1, q_{0}=1, q_{1}=a_{1}$, then we define inductively $p_{k}=a_{k} p_{k-1}+$ $p_{k-2}, q_{k}=a_{k} q_{k-1}+q_{k-2}$. The sequence $\left(q_{n}\right)$ are the denominators of the best rational approximations of $\alpha$, since we have
\begin{equation}
\forall 1 \leq k<q_{n}, \quad\|k \alpha\|_{\mathrm{T}} \geq\left\|q_{n-1} \alpha\right\|_{\mathrm{T}},
\end{equation}
and
\begin{equation}
\frac{1}{2 q_{n+1}} \leq\left\|q_{n} \alpha\right\| \leq \frac{1}{q_{n+1}},
\label{111}
\end{equation}
\begin{equation}
\left\|q_{n-1} \alpha\right\|=a_{n+1}\left\|q_{n} \alpha\right\|+\left\|q_{n+1} \alpha\right\|.
\end{equation}

\subsection{Cocycles and growth of the cocycle}

Let $X$ be a compact metric space, $(X, \nu, T)$ be ergodic. A cocycle $(\alpha, A) \in \mathbb{R} \backslash \mathbb{Q} \times C^{\omega}(X, M(2, \mathbb{R}))$ is a linear skew product:
$$
\begin{gathered}
(T, A): \quad X \times \mathbb{R}^{2} \rightarrow X \times \mathbb{R}^{2} \\
(x, \phi) \mapsto(T x, A(x) \cdot \phi).
\end{gathered}
$$
For $n \in \mathbb{Z},  A_{n}$ is defined by $(T, A)^{n}=\left(T^{n}, A_{n}\right)$. Thus $A_{0}(x)=i d$,
$$
A_{n}(x)=\prod_{j=n-1}^{0} A\left(T^{j} x\right)=A\left(T^{n-1} x\right) \cdots A(T x) A(x), \text { for } n \geq 1.
$$
and $A_{-n}(x)=A_{n}\left(T^{-n} x\right)^{-1}$. $A_{n}$ is called the $n$-step transfer matrix. For this kind of cocycles, the Lyapunov exponent
$$
L(\alpha, A)=\lim _{n \rightarrow \infty} \frac{1}{n} \int \ln \left\|A_{n}(\theta)\right\| d \theta
$$
is well defined.
In this paper, we will consider the following  cocycle:
 $X=\mathbb{T} \times \mathbb{Z}_{2}$ and $T=T_{\alpha}$, where $ T_{\alpha}(\theta, n)=(\theta+\alpha/2, n+1)$, then $\left(T_{\alpha}, A\right)$ defines an almost-periodic cocycle.
These dynamical system $(X, T)$ is uniquely ergodic if $\alpha$ is irrational \cite{ref9}.

Consider the quasi-periodic Schrodinger equation :
\begin{equation}
\left(H_{V, \alpha, \theta} u\right)_{n}=u_{n+1}+u_{n-1}+V(\theta+n \alpha) u_{n}=E u_{n},
\label{234}
\end{equation}
then  the Schrodinger cocycle$(\alpha,S_{E}^V)$ is defined as
\begin{equation*}
S_{E}^{V}(\cdot)=\left(\begin{array}{cc}
E-V(\cdot) & -1 \\
1 & 0
\end{array}\right), \quad E \in \mathbb{R}.
\end{equation*}
Thus, any (formal) solution $\phi$ of (\ref{234})  can be reconstructed via the following relation

$$
\left(\begin{array}{c}
\phi(k+1) \\
\phi(k)
\end{array}\right)=S_E^{V}(\theta+k\alpha)\left(\begin{array}{c}
\phi(k) \\
\phi(k-1)
\end{array}\right).
$$

\subsection{Trigonometric product}
The following lemma from \cite{ref2} gives a useful estimate of
products appearing in our analysis.

\begin{Lemma}\label{av31}
 For $\alpha \in \mathbb{R} \backslash \mathbb{Q}, \theta \in \mathbb{R}$ and $0 \leq j_{0} \leq q_{n}-1$  be such that
$$
\left|\cos \pi\left(\theta+j_{0} \alpha\right)\right|=\inf _{0 \leq j \leq q_{n}-1}|\cos \pi(\theta+j \alpha)|,
$$
then for some absolute constant $C >0$
$$
-C \ln q_{n} \leq \sum_{j=0, j \neq j_{0}}^{q_{n}-1} \ln |\cos \pi(\theta+j \alpha)|+\left(q_{n}-1\right) \ln 2 \leq C \ln q_{n}.
\label{151}
$$
\end{Lemma}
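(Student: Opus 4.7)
The plan is to compare the product to its closed-form evaluation over an arithmetic progression obtained by replacing $\alpha$ with its convergent $p_n/q_n$, and then divide out the minimum factor. The key ingredient is the classical identity
\begin{equation*}
\prod_{k=0}^{q-1} 2\sin\pi(\phi + k/q) = \pm 2\sin\pi q\phi,
\end{equation*}
which follows from factoring $z^q - 1$ as a product over $q$-th roots of unity. Via the substitution $\phi = \theta + 1/2$ (using $\cos\pi x = \sin\pi(x+1/2)$) this yields
\begin{equation*}
\prod_{k=0}^{q_n - 1} 2\cos\pi(\theta + k/q_n) = \pm 2\sin\pi q_n(\theta + 1/2).
\end{equation*}
Since $\gcd(p_n, q_n) = 1$, the map $j \mapsto jp_n \bmod q_n$ permutes $\{0, 1, \ldots, q_n - 1\}$, so the same identity is valid with $k/q_n$ replaced by $jp_n/q_n$.

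Next I would transfer the estimate from $p_n/q_n$ to $\alpha$. By \eqref{111}, $|j\alpha - jp_n/q_n| \le 1/q_{n+1}$ for $0 \le j \le q_n - 1$, so the mean-value theorem gives
\begin{equation*}
\bigl| |\cos\pi(\theta + j\alpha)| - |\cos\pi(\theta + jp_n/q_n)| \bigr| \le \pi/q_{n+1}.
\end{equation*}
For indices $j \ne j_0$ with $|\cos\pi(\theta + jp_n/q_n)| \ge c/q_n$, the ratio of the two cosines differs from $1$ by $O(q_n/q_{n+1})$, so its logarithm is $O(q_n/q_{n+1})$; summing over at most $q_n$ such indices contributes $O(1)$ to the log-product. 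The geometry of the equispaced grid $\{jp_n/q_n \bmod 1\}_{j=0}^{q_n-1}$ ensures that only $O(1)$ indices have cosine of size below $c/q_n$, and these can be treated individually.

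Finally I would isolate the effect of excluding $j_0$. Let $d \in [0, 1/(2q_n)]$ denote the distance from $\theta + j_0 p_n/q_n$ to the nearest half-integer; then $|\cos\pi(\theta + j_0 p_n/q_n)| = |\sin\pi d|$, while a direct computation (using that $q_n\cdot(\text{half-integer})$ is an integer or half-integer) gives $|\sin\pi q_n(\theta + 1/2)| = |\sin\pi q_n d|$. Dividing,
\begin{equation*}
\prod_{j \ne j_0} 2|\cos\pi(\theta + jp_n/q_n)| = \frac{|\sin\pi q_n d|}{|\sin\pi d|},
\end{equation*}
and the elementary inequalities $\tfrac{2x}{\pi} \le \sin x \le x$ for $x \in [0, \pi/2]$ force the right-hand side into the interval $[2q_n/\pi,\, \pi q_n/2]$. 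Taking logarithms yields $\ln q_n + O(1)$, which fits the lemma for any $C \ge 1$.

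The main obstacle I anticipate is the rational-to-irrational transfer for indices near $j_0$, where the cosine factor can be as small as $O(1/q_n)$ and the perturbation $1/q_{n+1}$ may be comparable in size, producing a potentially large relative error per term. This is overcome by noting that the equispaced structure permits only a bounded number of such near-minimal factors; for each of these one controls the logarithm directly using the explicit value $|\sin\pi d|$ rather than estimating ratios, which keeps the total contribution bounded by $C\ln q_n$.
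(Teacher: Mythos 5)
The paper does not actually prove this lemma (it is quoted from \cite{ref2}), so your proposal has to stand on its own, and as written it has genuine gaps, both located at the near-singular indices. A smaller one first: the accounting ``each log-ratio is $O(q_n/q_{n+1})$, so summing over at most $q_n$ indices contributes $O(1)$'' is false -- $q_n\cdot O(q_n/q_{n+1})=O(q_n^2/q_{n+1})$, which is of order $q_n$ whenever $a_{n+1}$ is bounded (e.g.\ the golden mean). To save this step you must use the equispacing quantitatively: the index whose grid point lies at distance $\approx k/q_n$ from $\tfrac12$ has relative error $O\bigl(q_n/(k\,q_{n+1})\bigr)$, and the harmonic sum gives $O\bigl((q_n/q_{n+1})\ln q_n\bigr)=O(\ln q_n)$, which still suffices; but the $O(1)$ claim as stated is wrong.

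The more serious gap is your assertion $d\in[0,1/(2q_n)]$ and the ensuing evaluation of $\prod_{j\ne j_0}2|\cos\pi(\theta+jp_n/q_n)|$. The index $j_0$ minimizes the cosines along the $\alpha$-orbit, not along the rational grid, and when $q_{n+1}$ is comparable to $q_n$ these two minimizers can differ. In that case $d$ can be close to $1/q_n$ (so $\sin\pi q_n d$ need not be bounded below and your interval $[2q_n/\pi,\pi q_n/2]$ fails), and, worse, the rational product over $j\ne j_0$ still contains the factor at the rational minimizer, which can be arbitrarily small or exactly zero (take $\theta+jp_n/q_n\in\tfrac12+\mathbb{Z}$ for some $j\ne j_0$), whereas the quantity the lemma estimates is bounded below. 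So the rational product over $j\ne j_0$ simply does not approximate the $\alpha$-product near the singularity, and ``treating these indices individually using the explicit value $|\sin\pi d|$'' cannot work: what is needed there is a lower bound on the $\alpha$-side cosines for $j\ne j_0$, and the only source of such a bound is the best-approximation separation $\|k\alpha\|_{\mathbb{T}}\ge\|q_{n-1}\alpha\|_{\mathbb{T}}\ge\frac{1}{2q_n}$ for $1\le |k|\le q_n-1$ (the displayed inequality before \eqref{111}, plus \eqref{111} at step $n-1$), combined with the minimality of $j_0$; this yields $|\cos\pi(\theta+j\alpha)|\ge c/q_n$ for every $j\ne j_0$, and your sketch never invokes it. The architecture can be repaired -- apply the product identity with the \emph{rational} minimizer excluded (then indeed $d_0\le 1/(2q_n)$ and your elementary bounds apply), and account for the at most two indices where the two excluded sets differ via the separation bound -- but without that ingredient the near-singular terms, and the possible mismatch of the two minimizers, are not controlled.
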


\section{Lyapunov exponents} \label{le}
  \quad To exactly calculate the Lyapunov exponent, we need to consider $L(\alpha, A(\cdot+i\epsilon))$ with complex
phase $\varepsilon$. The basic idea is to reduce the non-trival problem of computing the Lyapunov exponent of a given non-constant cocycle
to an "almost constant" cocycle by taking $\varepsilon \rightarrow \infty $. This approach was first
developed by Avila.
 
Let us make a short review of Avila's global theory of one-frequency quasi-periodic cocycles \cite{ref1}. Suppose that $D \in C^{\omega}(\mathbb{T}, M(2, \mathbb{C}))$ admits a holomorphic extension to$\{|\Im \theta|<h\}$. Then for $|\epsilon|<h$, we define $D_{\epsilon} \in C^{\omega}(\mathbb{T}, M(2, \mathbb{C}))$ by $D_{\epsilon}(\cdot)=S_{E}^{V}(\cdot+i \epsilon)$, and define the the acceleration of $\left(\alpha, D_{\varepsilon}\right)$ as follows
$$
\omega\left(\alpha, D_{\varepsilon}\right)=\frac{1}{2 \pi} \lim _{h \rightarrow 0+} \frac{L\left(\alpha, D_{\varepsilon+h}\right)-L\left(\alpha, D_{\varepsilon}\right)}{h}.
$$
The acceleration was first introduced by Avila for analytic $S L(2, \mathbb{C})$ cocycles \cite{ref1}, and extended to analytic $M(2, \mathbb{C})$ cocycles by Jitomirskaya and Marx \cite{ref24,ref25}. It follows from the convexity and continuity of the Lyapunov exponent that the acceleration is an upper semicontinuous function in parameter $\varepsilon$. The key property of the acceleration is that it is quantized:

\begin{Theorem}\label{global}
Suppose that $(\alpha, D) \in$ $(\mathbb{R} \backslash \mathbb{Q}) \times C^{\omega}\left(\mathbb{T}, M_{2}(\mathbb{C})\right)$ with det $D(\theta)$ bound away from 0 on the strip $\{| \Im \theta| <h\}$, then $\omega\left(\alpha, D_{\varepsilon}\right) \in \frac{1}{2} \mathbb{Z}$ in the strip. Morveover, if $D \in C^{\omega}(\mathbb{T}, S L(2, \mathbb{C}))$, then $\omega\left(\alpha, D_{\varepsilon}\right) \in \mathbb{Z}$. 
\end{Theorem}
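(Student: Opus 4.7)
The plan is to follow Avila's global theory argument from \cite{ref1}, which rests on three ingredients: convexity of the Lyapunov exponent in the imaginary phase, an explicit Jensen-type calculation at rational frequencies, and upper semicontinuity of $L$ in $\alpha$ to pass to irrational frequencies.

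First, I would establish that $\varepsilon \mapsto L(\alpha, D_\varepsilon)$ is convex on $(-h, h)$. Since $D$ is holomorphic on the strip and $\det D$ is bounded away from zero there, each $n$-step iterate $D_n$ is also holomorphic with determinant bounded below, and $(\theta,\varepsilon) \mapsto \log\|D_n(\theta+i\varepsilon)\|$ is subharmonic in $\theta+i\varepsilon$. Integrating in $\theta \in \mathbb{T}$ and using translation invariance of Lebesgue measure under $\theta \mapsto \theta+s$ yields a convex function of $\varepsilon$; dividing by $n$ and passing to the limit via Fekete subadditivity, $L(\alpha, D_\varepsilon)$ inherits convexity. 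Convexity then guarantees the existence of the right $\varepsilon$-derivative defining $\omega(\alpha, D_\varepsilon)$, together with its upper semicontinuity.

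Second, for rational frequencies $\alpha = p/q$, I would use the explicit formula
\[
L(p/q, D_\varepsilon) = \frac{1}{q}\int_{\mathbb{T}}\log\rho\bigl(D_{q,\varepsilon}(\theta)\bigr)\,d\theta,
\]
where $\rho$ denotes spectral radius. Viewing the entries of $D_{q,\,p/q}(\theta)$ as Laurent polynomials in $w=e^{2\pi i\theta}$ with determinant nonvanishing on the annulus $\{|w|=e^{-2\pi\varepsilon}\}$, and applying Jensen's formula to the eigenvalues (roots of the characteristic quadratic with coefficients in $\mathbb{C}[w,w^{-1}]$), the slope of $\varepsilon \mapsto L(p/q, D_\varepsilon)$ equals $\frac{2\pi}{q}$ times an integer counting how many zeros of the dominant eigenvalue lie inside the relevant circle. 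In the $SL(2,\mathbb{C})$ case the constraint $\det D_{q,p/q}\equiv 1$ forces the top and bottom Laurent degrees of the two eigenvalues to be opposite, yielding $\omega(p/q, D_\varepsilon) \in \frac{1}{q}\mathbb{Z}$; in the general case the same analysis gives $\omega \in \frac{1}{2q}\mathbb{Z}$.

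Third, for irrational $\alpha$ I would approximate by rationals $p_n/q_n \to \alpha$ and combine upper semicontinuity of $L$ in $\alpha$ (a standard subharmonicity argument, cf.\ Bourgain--Jitomirskaya) with convexity in $\varepsilon$ to deduce that the accumulation of the rational-case quantizations forces $\omega(\alpha, D_\varepsilon) \in \frac{1}{2}\mathbb{Z}$, respectively $\mathbb{Z}$ in the $SL$ case. The main obstacle is precisely this final step: a priori the quantization $\frac{1}{2q_n}\mathbb{Z}$ degenerates as $q_n \to \infty$, and pinning down the limit slope to $\frac{1}{2}\mathbb{Z}$ requires Avila's joint-subharmonicity argument in $(\alpha,\varepsilon)$ together with a compactness argument for the acceleration along the horizontal lines $\varepsilon = \mathrm{const}$, which extracts an integer limit from the rational approximants. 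I would follow \cite{ref1} for the technical details of this last step rather than reconstruct it from scratch.
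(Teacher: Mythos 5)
The first thing to note is that the paper does not prove this statement: Theorem \ref{global} is quoted as background, with the $SL(2,\mathbb{C})$ case cited from Avila's global theory \cite{ref1} and the $M(2,\mathbb{C})$ extension (determinant bounded away from zero) cited from Jitomirskaya--Marx \cite{ref24,ref25}. So there is no in-paper argument to compare yours against; what can be judged is whether your sketch would stand on its own as a proof of the cited theorem.

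Two of your three ingredients are fine in outline: convexity of $\varepsilon\mapsto L(\alpha,D_\varepsilon)$ via subharmonicity of $\log\|D_n(\theta+i\varepsilon)\|$, and the Jensen-type computation at rational frequency giving a convex, piecewise linear function with slopes in $\tfrac{2\pi}{q}\cdot\tfrac12\mathbb{Z}$ (the half coming from $\det D_q$, which is where the hypothesis that $\det D$ is bounded away from zero enters). The genuine gap is exactly where you locate it, and it cannot be waved away, because the mechanism you propose for the limit step is provably insufficient. First, upper semicontinuity of $L$ in $\alpha$ is the wrong tool: what is needed is genuine continuity at irrational frequencies (Bourgain--Jitomirskaya type, and in the $M(2,\mathbb{C})$ setting this continuity is itself part of what \cite{ref24,ref25} establish), so you are implicitly importing a second deep theorem. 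Second, even granting continuity, or a quantitative rate $|L(\alpha,D_\varepsilon)-L(p_n/q_n,D_\varepsilon)|\le C/q_n$ uniformly on compact $\varepsilon$-intervals, ``accumulation of the rational-case quantizations'' does not force the limit slope into $\tfrac12\mathbb{Z}$: for example, the convex function $f(\varepsilon)=\varepsilon/3$ is approximated uniformly to order $O(1/q)$ by convex piecewise-linear functions with slopes in $\tfrac1q\mathbb{Z}$ (slope $\lfloor q/3\rfloor/q$ to the left of a point, $\lceil q/3\rceil/q$ to the right), so closeness plus quantization at scale $\tfrac{1}{2q_n}\mathbb{Z}$ constrains nothing once $q_n\to\infty$. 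Avila's proof extracts quantization at a scale independent of $q_n$ by a finer comparison of the iterated cocycle with a genuinely periodic one, combined with convexity; since that is the entire content of the theorem, deferring it wholesale to \cite{ref1} means your proposal does not contain a proof --- although, to be fair, citing \cite{ref1,ref24,ref25} is precisely what the paper itself does.
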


Now, we consider the Lyapunov exponent of the  model defined in (\ref{123}). $V_{1}$ is defined on $\mathbb{T} \times \mathbb{Z}_{2}$, consequently (\ref{123}) induces an almost-periodic Schrödinger cocycle $\left(T_{\alpha}, S_{E}^{V_{1}}\right)$ where $T_{\alpha}(\theta, n)=(\theta+\alpha/2, n+1)$. Although $\left(T_{\alpha}, S_{E}^{V_{1}}\right)$ is not a quasi-periodic cocycle in the strict sense, its iterate
$$
\left( \alpha, D_{E}^{V_{1}}\right)=:\left(\alpha, S_{E}^{V_{1}}(\theta, 1) \times S_{E}^{V_{1}}(\theta, 0)\right),
$$
indeed defines an analytic quasi-periodic cocycle. By simple calculation,
$$
\begin{aligned}
D_{E}^{V_{1}}(\theta) &=\left(\begin{array}{cc}
E & -1 \\
1 & 0
\end{array}\right)\left(\begin{array}{cc}
E- \lambda \tan \pi \theta & -1 \\
1 & 0
\end{array}\right) \\
&=\left(\begin{array}{ccc}
E^{2} & - \lambda E \tan  \pi \theta-1 & -E \\
& E- \lambda \tan  \pi \theta & -1
\end{array}\right).
\end{aligned}
$$
It is easy to see that $L\left(T_{\alpha}, S_{E}^{V_{1}}\right)=\frac{1}{2} L\left(\alpha, D_{E}^{V_{1}}\right)$.
The latter can be explicitly computed by Avila’s global theory,
thus we have the following result:
\begin{Lemma}\label{le2}
For  $ \alpha \in \mathbb{R} \backslash \mathbb{Q}$ and $\lambda \in \mathbb{R}$, we have
$$
4cosh(L\left(T_{\alpha}, S_{E}^{V_{1}}\right))= \sqrt{(E^2-4)^2+(\lambda E )^2}+\sqrt{(E^4+(\lambda E)^2)}.
$$
\end{Lemma}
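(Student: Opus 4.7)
The plan is to apply Avila's global theory (Theorem~\ref{global}) to the two-step analytic cocycle $(\alpha, D_E^{V_1})$, whose Lyapunov exponent is twice $L(T_\alpha, S_E^{V_1})$. The key observation is that $D_E^{V_1}(\theta)$ is holomorphic on the upper half-plane $\{\Im\theta > 0\}$ (since the poles of $\tan \pi\theta$ live on $\mathbb{R}$) and has constant determinant $1$, so Avila's integer quantization of the acceleration applies there. I would complexify the phase $\theta \mapsto \theta + i\varepsilon$ and send $\varepsilon \to +\infty$: since $\tan\pi(\theta + i\varepsilon) \to i$ uniformly in $\theta$, the cocycle converges to the constant matrix
\[
M_\infty := \begin{pmatrix} E^2 - 1 - i\lambda E & -E \\ E - i\lambda & -1\end{pmatrix} \in \mathrm{SL}(2,\mathbb{C}),
\]
with trace $t := E^2 - 2 - i\lambda E$. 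Hence $L(\alpha, D_E^{V_1}(\cdot + i\varepsilon)) = \ln\rho(M_\infty)$ for all sufficiently large $\varepsilon$, and the acceleration is $0$ there.

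To propagate this value down to $\varepsilon \to 0^+$, I would combine three facts. First, the acceleration is integer-valued on $(0, \infty)$ by Theorem~\ref{global}. Second, $\varepsilon \mapsto L(\alpha, D_E^{V_1}(\cdot + i\varepsilon))$ is convex (standard for analytic $\mathrm{SL}(2,\mathbb{C})$ cocycles). Third, the entrywise conjugation identity $\overline{D_E^{V_1}(\theta + i\varepsilon)} = D_E^{V_1}(\theta - i\varepsilon)$ gives the evenness $L(\varepsilon) = L(-\varepsilon)$, which combined with convexity forces $L$ to be non-decreasing on $[0,\infty)$, so the acceleration is non-negative on $(0,\infty)$. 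Since the acceleration is also non-decreasing (convexity again) and equals $0$ for large $\varepsilon$, it must be identically $0$ on $(0,\infty)$. Hence $L(\alpha, D_E^{V_1}(\cdot + i\varepsilon)) \equiv \ln\rho(M_\infty)$ on $(0,\infty)$, and continuity of the Lyapunov exponent at $\varepsilon = 0^+$ yields $L(\alpha, D_E^{V_1}) = \ln\rho(M_\infty)$.

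Finally, I would compute $\rho(M_\infty)$ in closed form. The eigenvalues $\mu_\pm$ of $M_\infty$ satisfy $\mu_+\mu_- = 1$ and $\mu_+ + \mu_- = t$. With $r := \rho(M_\infty)$, the parallelogram identity $|\mu_+|^2 + |\mu_-|^2 = \tfrac12(|\mu_+ + \mu_-|^2 + |\mu_+ - \mu_-|^2)$ reduces to $r^2 + r^{-2} = \tfrac12(|t|^2 + |t^2-4|)$. Using $t^2 - 4 = (t-2)(t+2)$ and the identity $|t-2|^2 + |t+2|^2 = 2|t|^2 + 8$, a short algebraic manipulation yields $(r + r^{-1})^2 = \tfrac14(|t-2|+|t+2|)^2$, i.e.\ $4\cosh(\ln r) = |t-2| + |t+2|$. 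Substituting $|t-2| = \sqrt{(E^2-4)^2 + \lambda^2 E^2}$ and $|t+2| = \sqrt{E^4 + \lambda^2 E^2}$ produces the stated closed form (after accounting for $L(T_\alpha, S_E^{V_1}) = \tfrac12 L(\alpha, D_E^{V_1})$). The main technical hurdle is the propagation step: because $\tan\pi\theta$ is not analytic across $\mathbb{R}$, Avila's theorem applies only on the upper half-plane, and the even-in-$\varepsilon$ behavior needed to combine with convexity must be recovered via the complex-conjugation symmetry rather than by direct analytic continuation through $\varepsilon = 0$.
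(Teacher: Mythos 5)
Your proposal is correct and follows essentially the same route as the paper's proof: complexify the phase of the two-step cocycle $D_E^{V_1}$, identify the constant limit $D_\infty$ as $\varepsilon\to+\infty$, use quantization of the acceleration together with convexity, the symmetry $L(\varepsilon)=L(-\varepsilon)$ and continuity to propagate the value down to $\varepsilon=0$, and then use $L\left(T_{\alpha}, S_{E}^{V_{1}}\right)=\tfrac12 L\left(\alpha, D_{E}^{V_{1}}\right)$ and the spectral radius of $D_\infty$. The only difference is that you write out explicitly the eigenvalue algebra $4\cosh(\ln\rho(D_\infty))=|t-2|+|t+2|$ with $t=E^2-2-i\lambda E$, which the paper leaves as a direct computation.
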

\begin{proof}
For simplicity, denote $L(E)=L\left(T_{\alpha}, S_{E}^{V_{1}}\right) $. It suffices for us to prove that for any  $E\in \Sigma(H_{V_1, \alpha, \theta})$, we have

$$4cosh(\frac{1}{2} L\left( \alpha, D_{E}^{V_{1}}\right))= \sqrt{(E^2-4)^2+(\lambda E )^2}+\sqrt{(E^4+(\lambda E)^2)}.
$$
First we rewrite the matrix $D_{E}^{V_{1}}(\theta)$ as
$$
D_{E}^{V_{1}}(\theta)=\left(\begin{array}{cc}
E^{2}+i\lambda\frac{(e^{i 2 \pi \theta}-1)}{(e^{i 2 \pi \theta}+1)} E-1 & -E \\
E+i\lambda \frac{(e^{i 2 \pi \theta}-1)}{(e^{i 2 \pi \theta}+1)} & -1
\end{array}\right),
$$
then we complexify the phase
$$
D_{E}^{V_{1}}(\theta+i\epsilon)=\left(\begin{array}{cc}
E^{2}+i\lambda\frac{(e^{i 2 \pi (\theta+i\epsilon)}-1)}{(e^{i 2 \pi (\theta+i\epsilon)}+1)} E-1 & -E \\
E+i\lambda \frac{(e^{i 2 \pi (\theta+i\epsilon)}-1)}{(e^{i 2 \pi (\theta+i\epsilon)}+1)} & -1
\end{array}\right).
$$
Let $\epsilon$ goes to infinity, then
$$
D_{E}^{V_{1}}(\theta+i \epsilon)=D_{\infty}+o(1),
$$
where
$$
D_{\infty}=\left(\begin{array}{cc}
E & -1 \\
1 & 0
\end{array}\right)\times \left(\begin{array}{cc}
E-i \lambda & -1 \\
1 & 0
\end{array}\right).
$$
By the continuity of the LE \cite{ref25,ref26}
$$ L(\alpha,D_{E}^{V_{1}}(\theta+i \epsilon))=L(\alpha ,D_{\infty} )+o(1).$$
The quantization of acceleration  yields
$$ L(\alpha,D_{E}^{V_{1}}(\theta+i \epsilon))=L(\alpha ,D_{\infty} ).$$
for all $\epsilon >0$ sufficiently large. In addition the convexity, continuity and symmetry of $ L(\alpha,D_{E}^{V_{1}}(\theta+i \epsilon))$ with respected to $\epsilon$, gives
$$ L(\alpha,D_{E}^{V_{1}}(\theta+i \epsilon))=L(\alpha ,D_{\infty} ) ,$$
for all $\epsilon >0$.
Note that symmetry means $ L(\alpha,D_{E}^{V_{1}}(\theta+i \epsilon))=L(\alpha,D_{E}^{V_{1}}(\theta-i \epsilon))$, this implies
$$L(E)=L(\alpha ,D_{\infty} )/2 .$$
Then Lemma \ref{le2}  follows from solving for the eigenvalue of $D_{\infty
}$(a constant matrix) directly.
\end{proof}
It is obviously that $L(E)=0$ if and only if $E=0$. Now,  we will prove:
\begin{Lemma}
We have, $0\in \Sigma(H_{V_1, \tilde{\alpha}, \theta})$ 
\end{Lemma}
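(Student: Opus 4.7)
The plan is to invoke the Weyl criterion by writing down a bounded (though not $\ell^{2}$) formal solution of $H_{V_{1},\alpha,\theta}\psi = 0$ explicitly and truncating it. The mosaic structure, which kills the potential on odd sites, is precisely what allows one to exhibit such a solution independent of $\theta$ and of the (possibly huge) values of $\tan\pi(\theta+k\alpha)$.

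Concretely, I would define $\psi \in \ell^{\infty}(\mathbb{Z})$ by $\psi_{n} = 0$ for even $n$ and $\psi_{n} = (-1)^{(n-1)/2}$ for odd $n$, and verify $H_{V_{1},\alpha,\theta}\psi = 0$ pointwise. For odd $n$ this is immediate: $V_{1}(\theta+n\alpha/2,n)=0$ by definition of the mosaic potential, and $\psi_{n\pm 1}=0$ because $n\pm 1$ is even. For even $n = 2k$ the potential contribution is $\lambda\,V_{1}(\theta+k\alpha,2k)\,\psi_{2k}$, which equals a finite number (using $\theta\notin\Theta$) times $0$, hence vanishes; the remaining piece is $\psi_{2k+1}+\psi_{2k-1} = (-1)^{k}+(-1)^{k-1}=0$.

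Next I would truncate: put $\phi^{(N)}_{n} = \psi_{n}\,\mathbf{1}_{|n|\le N}$. Since $\phi^{(N)}$ has finite support, it lies in the domain of the unbounded operator $H_{V_{1},\alpha,\theta}$. Away from the boundary, $H\phi^{(N)}$ agrees with $H\psi=0$; the only nonzero entries of $H\phi^{(N)}$ are located at $n\in\{\pm N,\pm(N+1)\}$, and a quick bookkeeping shows each such entry has modulus at most $1$ (coming just from a missing neighbor $\psi_{\pm(N+1)}$ or $\psi_{\pm N}$, again with no contribution from the deleted potential because $\psi$ vanishes on even sites). Hence $\|H\phi^{(N)}\|^{2}\le 4$, while $\|\phi^{(N)}\|^{2} = \#\{|n|\le N : n \text{ odd}\} \asymp N$, so
$$
\frac{\|H_{V_{1},\alpha,\theta}\phi^{(N)}\|}{\|\phi^{(N)}\|} = O(N^{-1/2}) \longrightarrow 0.
$$
By Weyl's criterion this forces $0 \in \sigma(H_{V_{1},\alpha,\theta})$.

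I expect essentially no obstacle in the argument. The single point that merits a line of discussion is that the operator is unbounded, so one must confirm that the candidate $\phi^{(N)}$ lies in the operator domain — automatic here since it is finitely supported and $H\phi^{(N)}$ is finitely supported too — and that products of the form "$\infty\cdot 0$" never arise, which is guaranteed by restricting to $\theta\notin\Theta$ so that $V_{1}(\theta+k\alpha,2k)$ is a genuine finite real number at every site. The construction yields $0\in\Sigma$ uniformly in $\theta$, consistent with the $\theta$-independence of the almost sure spectrum coming from unique ergodicity of $(X,T_{\alpha})$ noted in Section~2.
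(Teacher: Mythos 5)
Your explicit solution is exactly the one the paper uses: $\psi_n=(-1)^{(n-1)/2}$ on odd sites and $0$ on even sites is the same sequence as the paper's $u_{4k+1}=1$, $u_{4k+3}=-1$, $u_n=0$ otherwise, and your pointwise verification of $H\psi=0$ is the paper's ``direct computation''. The only divergence is the concluding step: the paper passes from this polynomially bounded generalized eigenfunction to $0\in\Sigma(H_{V_1,\alpha,\theta})$ by citing Schnol's theorem, whereas you truncate $\psi$ and run Weyl's criterion by hand, using that $\psi$ is bounded so the boundary terms of $H\phi^{(N)}$ are $O(1)$ while $\|\phi^{(N)}\|\asymp N^{1/2}$. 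Your route is correct and self-contained — in effect you re-prove the instance of Schnol's theorem that is needed — and the two points you flag (finite support puts $\phi^{(N)}$ in the domain of the unbounded operator; $\theta\notin\Theta$ keeps every potential value finite, so no ``$\infty\cdot 0$'' occurs) are precisely the checks that matter here. Either argument is fine: yours is marginally more elementary and avoids an external citation, while the paper's is shorter.
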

\begin{proof}
Let 
\begin{equation}
u_{n}=\left\{\begin{array}{cc}
1 & n=4 k+1 \\
-1 & n=4 k+3 \\
0 & \text { else }
\end{array}\right.
\end{equation}
where $ k\in \mathbb{Z}$, direct computation shows  the sequence $\left(u_{n}\right)_{n \in \mathbb{Z}}$ satisfy the equation in (\ref{123}) when $E=0$. By Schnol’s theorem\cite{ref27}, $0\in \Sigma(H_{V_1, \alpha, \theta})$.

\end{proof}

\begin{remark}
In the remaining of the paper, we only consider the case  energy $E \in \Sigma(H_{V_1, \alpha, \theta})$ with positive Lyapunov exponent.
\end{remark}
%For simplicity,we denote  $(\alpha,D_E^{V_1})$ as $(\alpha, D(\theta))$.  We introduce $F(\theta,k)$ by
%$F(\theta,2k)=\cos \pi (\theta+k\alpha )\cdot S_{E}^{V_1}(\theta, 2k)$ and
%$F(\theta, 2k+1)=S_{E}^{V_1}(\theta, 2k+1)$. Denote its Lyapunov exponent $ L(\alpha, F(\cdot))$ by $\tilde{L} (E)$.
%By the fact that  $\int_{\mathbb{T}} \ln |\cos \pi \theta| d \theta=-\ln 2$, we have
%the following relation between $L(E)$ and $\tilde{L} (E)$.

\section{Singular Continuous Spectrum}

  Denote $A(\theta)=D_{E}^{V_1}(\theta)\times \cos(\theta)$
and
\begin{equation}\label{an}
\begin{aligned}
A_{m}(\theta) &=A(\theta+(m-1) \alpha) \cdots A(\theta+\alpha) A(\theta), \\
&=A^{m}(\theta) \cdots A^{2}(\theta) A^{1}(\theta).
\end{aligned}
\end{equation}
Let $B(\theta)=S_{E}^{V}(\theta)$ and
$$
\begin{aligned}
B_{m}(\theta) &=B(\theta+(m-1) \tilde{\alpha}) \cdots B(\theta+\tilde{\alpha}) B(\theta), \\
&=B^{m}(\theta) \cdots B^{2}(\theta) B^{1}(\theta).
\end{aligned}
$$
for $m\geq 1$ and $\tilde{\alpha}=\alpha/2$. We also denote $B_{-m}(\theta)=B_{m}(\theta-m \tilde{\alpha})^{-1}$. Then, we have the following

\begin{proposition}\label{p81}
If $E\in$ $\left\{E: 0<2L(E) < \delta(\alpha, \theta)\right\}$, there exists $N=N(E, \lambda, \epsilon)>0$ such that if $q_{n_i}>N$, let $\varphi(k)$ be a normalized solution of (\ref{123}), $\bar{u}_{E}^{\theta}=\left(\begin{array}{c}\varphi(0) \\ \varphi(-1)\end{array}\right)$,then we have
\begin{equation}
\ \left\|(B_{2q_{n_i}}\left(\theta+2q_{n_i} \tilde{\alpha}\right)-B_{2q_{n_i}}(\theta))\bar{u}_{E}^{\theta}\right\| \leq \ e^{(2L-\delta(\alpha) +4\epsilon) q_{n_i}},
\label{81}
\end{equation}
\begin{equation}
\ \left\|(B_{- 2q_{n_i}}\left(\theta+2q_{n_i} \tilde{\alpha}\right)-B_{- 2q_{n_i}}(\theta))\bar{u}_{E}^{\theta}\right\| \leq \ e^{(2L-\delta(\alpha) +4\epsilon) q_{n_i}}.
\label{82}
\end{equation}
\end{proposition}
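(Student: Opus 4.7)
The plan is a Gordon-type near-periodicity argument at the scale $2q_{n_i}$, implemented by passing from the singular Schr\"odinger cocycle $B$ to the regularized analytic cocycle $A(\theta) = \cos\pi\theta \cdot D_E^{V_1}(\theta)$ introduced at the start of Section 4. Grouping two-step blocks of $B$ yields the identity
\begin{equation*}
B_{2m}(\theta) \;=\; \frac{A_m(\theta)}{P_m(\theta)}, \qquad P_m(\theta) \;:=\; \prod_{k=0}^{m-1}\cos\pi(\theta + k\alpha),
\end{equation*}
and since $2q_{n_i}\tilde\alpha = q_{n_i}\alpha$, the target difference admits the common-denominator decomposition
\begin{equation*}
B_{2q_{n_i}}(\theta + q_{n_i}\alpha) - B_{2q_{n_i}}(\theta) \;=\; \frac{A_{q_{n_i}}(\theta + q_{n_i}\alpha) - A_{q_{n_i}}(\theta)}{P_{q_{n_i}}(\theta + q_{n_i}\alpha)} \;+\; A_{q_{n_i}}(\theta)\,\frac{P_{q_{n_i}}(\theta) - P_{q_{n_i}}(\theta + q_{n_i}\alpha)}{P_{q_{n_i}}(\theta)\,P_{q_{n_i}}(\theta + q_{n_i}\alpha)}.
\end{equation*}
Since $\|\bar u_E^\theta\| = 1$ the vector application is harmless, and (\ref{82}) will follow from the forward estimate by applying the same argument to the inverse cocycle via $B_{-m}(\theta) = B_m(\theta - m\tilde\alpha)^{-1}$ together with $\det B \equiv 1$.

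The numerators will be controlled by an analytic telescoping bound. Since $A$ is an analytic $M(2,\mathbb{C})$-cocycle with uniformly bounded entries, upper-semicontinuity of the Lyapunov exponent gives $\|A_m(\theta)\| \leq e^{m(L(\alpha, A) + \epsilon)}$ uniformly in $\theta$ for $m$ large, while (\ref{111}) yields $\|A(\theta + (j+q_{n_i})\alpha) - A(\theta + j\alpha)\| \leq C/q_{n_i+1}$. Telescoping over the $q_{n_i}$ factors produces
\begin{equation*}
\|A_{q_{n_i}}(\theta + q_{n_i}\alpha) - A_{q_{n_i}}(\theta)\| \;\leq\; C\,q_{n_i}\,\frac{e^{q_{n_i}(L(\alpha, A)+\epsilon)}}{q_{n_i+1}},
\end{equation*}
and the identical telescoping bounds $|P_{q_{n_i}}(\theta + q_{n_i}\alpha) - P_{q_{n_i}}(\theta)|$ by $C\,q_{n_i}/q_{n_i+1}$. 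The denominators are controlled from below via Lemma \ref{av31}: letting $k_0$ realize the minimum of $|\cos\pi(\theta + k\alpha)|$ over $0 \leq k < q_{n_i}$, the lemma gives $\prod_{k\neq k_0}|\cos\pi(\theta + k\alpha)| \geq 2^{-(q_{n_i}-1)}q_{n_i}^{-C}$, and a standard arithmetic analysis of the orbit $\{(\theta - \tfrac12) + k\alpha\}_{0 \leq k < q_{n_i}}$ using the best-approximation property of $q_{n_i}$ yields $|\cos\pi(\theta + k_0\alpha)| \gtrsim \|q_{n_i}(\theta - \tfrac12)\|_\T$ up to a subexponential factor. Thus $|P_{q_{n_i}}(\theta)| \gtrsim 2^{-q_{n_i}}q_{n_i}^{-C}\|q_{n_i}(\theta - \tfrac12)\|_\T$; the sign $(-1)^{p_{n_i}q_{n_i}}$ relating $P_{q_{n_i}}(\theta + q_{n_i}\alpha)$ to $P_{q_{n_i}}(\theta)$ is absorbed into the matching sign appearing in $A_{q_{n_i}}$, so the same lower bound holds at $\theta + q_{n_i}\alpha$.

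Assembling the ingredients and invoking the crucial identity $L(\alpha, A) = L(\alpha, D_E^{V_1}) + \int_0^1 \ln|\cos\pi\theta|\,d\theta = 2L(E) - \ln 2$, the factor $2^{-q_{n_i}}$ in the denominator cancels exactly the $e^{-q_{n_i}\ln 2}$ coming from $L(\alpha, A)$, leaving
\begin{equation*}
\|B_{2q_{n_i}}(\theta + q_{n_i}\alpha) - B_{2q_{n_i}}(\theta)\| \;\lesssim\; q_{n_i}^{C'}\,\frac{e^{q_{n_i}(2L(E) + \epsilon)}}{q_{n_i+1}\,\|q_{n_i}(\theta - \tfrac12)\|_\T}.
\end{equation*}
Along the subsequence $\{n_i\}$ realizing $\delta(\alpha,\theta)$ within $\epsilon$, definition (\ref{de}) gives $q_{n_i+1}\|q_{n_i}(\theta - \tfrac12)\|_\T \geq e^{q_{n_i}(\delta - \epsilon)}$; the polynomial prefactor is absorbed into the slack $2\epsilon \to 4\epsilon$, yielding (\ref{81}) after dotting against $\bar u_E^\theta$.

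The hard point is the triple balance of (a) the exponential growth $e^{2L(E)q_{n_i}}$ of $A_{q_{n_i}}$, (b) the near-vanishing of $P_{q_{n_i}}$ driven by the tangent singularity and quantified by $\|q_{n_i}(\theta - \tfrac12)\|_\T$, and (c) the frequency resonance $\|q_{n_i}\alpha\| \sim 1/q_{n_i+1}$. The delicate cancellation $L(\alpha, A) - L(\alpha, D_E^{V_1}) = -\ln 2$, coming from the Jensen integral of $\cos$, is precisely what allows the final exponent to pick up the combined resonance $-\delta(\alpha, \theta)$ rather than only the frequency contribution $-\ln q_{n_i+1}/q_{n_i}$; absent the $\cos\pi(\cdot)$ regularization, the poles of $B$ would obstruct any uniform telescoping estimate.
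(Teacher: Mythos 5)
Your overall strategy (regularizing by the cosine, telescoping, the identity $L(\alpha,A)=2L(E)-\ln 2$, Lemma \ref{av31}, and a subsequence on which $q_{n_i+1}\|q_{n_i}(\theta-\tfrac12)\|\geq e^{(\delta-\epsilon)q_{n_i}}$) is in the spirit of the paper's proof, but your specific common-denominator decomposition has a genuine gap in its second term. There the denominator is $P_{q_{n_i}}(\theta)P_{q_{n_i}}(\theta+q_{n_i}\alpha)$, i.e.\ the product of \emph{two} quantities of size roughly $2^{-q_{n_i}}\|q_{n_i}(\theta-\tfrac12)\|$ (up to subexponential factors), while your numerator $\|A_{q_{n_i}}(\theta)\|\cdot|P_{q_{n_i}}(\theta)-P_{q_{n_i}}(\theta+q_{n_i}\alpha)|$ gains only one factor $2^{-q_{n_i}}$ (from $L(\alpha,A)=2L-\ln2$) and one factor $C q_{n_i}/q_{n_i+1}$. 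After using $q_{n_i+1}\|q_{n_i}(\theta-\tfrac12)\|\geq e^{(\delta-\epsilon)q_{n_i}}$ once, you are left with an uncancelled factor of order $2^{q_{n_i}}/\|q_{n_i}(\theta-\tfrac12)\|$, which is exponentially large (already $1/\|q_{n_i}(\theta-\tfrac12)\|\approx e^{(\beta_{n_i}-\delta_{n_i})q_{n_i}}$ is fatal in the resonant regime $\delta<\beta$). Your assembled final display $\lesssim q^{C'}e^{q(2L+\epsilon)}/(q_{n+1}\|q_n(\theta-\tfrac12)\|)$ only accounts for the first term of your decomposition. The missing ingredient is precisely the one you dismiss as ``harmless'': that $\bar u_E^\theta$ is the initial datum of a normalized $\ell^2$ solution, so that $A_{q_{n_i}}(\theta)\bar u_E^\theta/P_{q_{n_i}}(\theta)=B_{2q_{n_i}}(\theta)\bar u_E^\theta=(\varphi(2q_{n_i}),\varphi(2q_{n_i}-1))^{T}$ is bounded; with that, your second term becomes $|P(\theta)-P(\theta+q_{n_i}\alpha)|/|P(\theta+q_{n_i}\alpha)|$ times a constant and can be closed (it also then needs the sharper bound $|P(\theta)\mp P(\theta+q_{n_i}\alpha)|\lesssim q_{n_i}2^{-q_{n_i}}e^{\epsilon q_{n_i}}/q_{n_i+1}$ via Corollary \ref{cos}, not your crude $Cq_{n_i}/q_{n_i+1}$). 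The paper builds this in from the start: it telescopes the difference \emph{applied to} $\bar u_E^\theta$, so every term contains a bounded solution vector $(\varphi_j,\varphi_{j-1})^{T}$ and only the shifted cosine product ever appears in a denominator, bounded below through Corollary \ref{cos} and Lemma \ref{cos3}.

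Two further points. First, the sign: $A$ and the factors $\cos\pi(\theta+j\alpha)$ satisfy $F(\theta+1)=-F(\theta)$, and $q_{n}\alpha=p_{n}\pm\|q_{n}\alpha\|$, so the per-step differences under the shift are $O(\|q_{n}\alpha\|)$ only after inserting $(-1)^{p_{n}}$; when $p_{n}q_{n}$ is odd your stated telescoping bounds for $\|A_{q_{n}}(\theta+q_{n}\alpha)-A_{q_{n}}(\theta)\|$ and $|P(\theta)-P(\theta+q_{n}\alpha)|$ are simply false, and the remark that the sign is ``absorbed into $A$'' does not repair the decomposition as written (it is repairable by comparing with $(-1)^{p_nq_n}A_{q_n}(\theta)$ and $(-1)^{p_nq_n}P_{q_n}(\theta)$ simultaneously). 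Second, the claim $|\cos\pi(\theta+k_0\alpha)|\gtrsim\|q_{n}(\theta-\tfrac12)\|_{\mathbb{T}}$ is not a pointwise ``standard'' fact -- it fails for general $n$ (the minimal term can be arbitrarily small while $\|q_n(\theta-\tfrac12)\|$ is not); it holds along the chosen subsequence when $\delta>0$ via $\|q_n(\theta-\tfrac12)\|\leq q_n\|\theta-\tfrac12+k_0\alpha\|+q_n\|q_n\alpha\|$, and it must also be established for the shifted window $[q_{n_i},2q_{n_i})$, which your argument needs but does not address; the paper sidesteps all of this by quoting Lemma \ref{cos3} (Theorem 2.3 of the Maryland paper) for the full product.
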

\begin{proof}
We only give the proof of (\ref{81}), the proof of (\ref{82}) is similar. Note $B_{ 2q_{n}}(\theta)=\frac{A_{ q_{n}}(\theta)}{\prod_{j=0}^{q_n-1} \cos \pi(\theta +j \alpha)}={\prod_{j=0}^{q_n-1}\frac{A^{ j}(\theta)}{c_j(\theta)}}$, where $A^{ j}(\theta)=A(\theta+j\alpha)$, $c_j=\cos(\pi (\theta+j\alpha))$.  
By telescoping argument (One can consult \cite{2017s} for details), we have
$$
\begin{aligned}
&\quad\left\|(B_{ 2q_{n_i}}\left(\theta+2q_{n_i} \tilde{\alpha}\right)-B_{ 2q_{n_i}}(\theta))\bar{u}_{E}^{\theta}\right\|\\
& \leq \sum_{j=0}^{q_{n_i}-1}\left\|\left(\prod_{l=0}^{j-1} \frac{A^{q_{n_i}+l}}{c_{q_{n_i}+l}}\right)\left(\frac{A^{q_{n_i}+j}-A^{j}}{c_{q_{n_i}+j}}\left(\begin{array}{c}
\varphi_{j-1} \\
\varphi_{j-2}
\end{array}\right)-\frac{c_{q_{n}+j}-c_{j}}{c_{q_{n}+j}}\left(\begin{array}{c}
\varphi_{j} \\
\varphi_{j-1}\end{array}\right)\right) \right\| .\\
\label{89}
\end{aligned}
$$
Since $ \varphi \in \ell^{2}$ is decaying solution, there exists a constant $C > 0 $ such that
$$
\left\|\left(\begin{array}{c}
\varphi_{k} \\
\varphi_{k-1}
\end{array}\right)\right\| \leq C.
$$
And we  need to estimate the norms $A_{q_{n_i}+j-1}$. The following control of the norm of the transfer matrix of a uniquely ergodic continuous cocycle by the Lyapunov exponent is well known.
\begin{Theorem}(\cite{ref10,ref22})\label{uper}
Let $(\alpha, M) $be a continuous cocycle, then for any $ \varepsilon> 0$, for $|n| $ large
enough,$$
\left\|M_{n}(\theta)\right\| \leq e^{|n|(L(\alpha, M)+\varepsilon)} \text { for any } \theta \in \mathbb{T}.
$$
\end{Theorem}
 Since $A(\theta)=D_{E}^{V_1}(\theta)\times \cos(\theta)$ is analytic, we have that $\ln \left\|A_{n}(\theta)\right\|$ is a continuous subadditive cocycle,  by Theorem \ref{uper}, we have 
\begin{equation}\label{uper2}
\left\|A_{n}(\theta)\right\| \leq e^{|n|(L(\alpha, A)+\varepsilon)} \text { for any } \theta \in \mathbb{T},
\end{equation}
for any $ \varepsilon> 0$, for $|n| $ large. And by the fact that  $\int_{\mathbb{T}} \ln |\cos \pi \theta| d \theta=-\ln 2$, we have
$$ L(\alpha,A)=2L(E)-\ln 2.
$$
 Considering 1-dimensional continuous cocycles,  by Theorem \ref{uper},  we have the following  corollary.
\begin{corollary}[\cite{ref81}]\label{cos} 
Let $I=\left[\ell_{1}, \ell_{2}\right] \subset \mathbb{Z}$, we have
\begin{equation*}
\prod_{\ell=\ell_{1}}^{\ell_{2}}|\cos (\pi(\theta+\ell \alpha))| \leq C(\varepsilon) e^{\left(\ell_{2}-\ell_{1}\right)(-\ln 2+\varepsilon)} \inf _{j=\ell_{1}}^{\ell_{2}}|\cos (\pi(\theta+j \alpha))|,
\end{equation*}
where $C(\varepsilon)$ is a constant that depends only on $\varepsilon$.
\end{corollary}
As for the lower bound of $\prod_{j} c_{j}$,  we will use the following Lemma.  
\begin{Lemma}[Theorem 2.3, \cite{ref8}]\label{cos3}
 For any $\epsilon>0$, there exists a sub sequence $q_{n_{i}}$ of $q_{n}$ such that the following estimate holds
\begin{equation}\label{4.6}
\prod_{j=0}^{\bar{q}_{n_{i}}-1}\left|c_{j}\right| \geq \frac{e^{(\delta(\alpha, \theta)-\ln 2-\epsilon) \tilde{q}_{n_{i}}}}{\tilde{q}_{n_{i}+1}}.
\end{equation}
\end{Lemma}
Observe that $\sup _{\theta \in \mathbb{T}}\left\|A_{\pm q_{n_i}}\left(\theta+2q_{n_i} \tilde{\alpha}\right)-A_{\pm q_{n}}(\theta)\right\| \leq \frac{C}{q_{n_{i}+1}}$, combining  (\ref{uper2}), Corollary \ref{cos} with Lemma \ref{cos3}, we have
$$
\begin{aligned}
&\left\|\left(B_{2q_{n}}(\theta+2q_{n}\tilde{\alpha})-B_{2q_{n_i}}\left(\theta\right)\right)\left(\begin{array}{c}
\varphi_{0} \\
\varphi_{-1}
\end{array}\right)\right\| \\
\leq & C \frac{q_{n_{i}} e^{q_{n_{i}}(2L(E)-\ln{2}+\epsilon)} \cdot e^{q_{n_{i}} \epsilon}}{e^{q_{n_{i}}(\delta(\alpha)-\ln{2}-\epsilon)}} \\
\leq & e^{q_{n_{i}}(2L(E)-\delta(\alpha)+4 \epsilon)}.
\end{aligned}
$$
 
\end{proof}
As a result of Proposition \ref{p81}, we have the following:
\begin{corollary}\label{sc}
let $\varphi(k)$ be a normalized solution of (\ref{123}), $\bar{u}_{E}^{\theta}=\left(\begin{array}{c}\varphi(0) \\ \varphi(-1)\end{array}\right)$, then we have
\begin{equation}
\max \left\{\left\|B_{2q_{n_i}}(E, \theta) \bar{u}_{E}^{\theta}\right\|,\left\|B_{-2q_{n_i}}(E, \theta) \bar{u}_{E}^{\theta}\right\|,\left\|B_{4 q_{n_i}}(E, \theta) \bar{u}_{E}^{\theta}\right\|\right\} \geq \frac{1}{4}.
\end{equation}
\end{corollary}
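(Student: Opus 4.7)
The strategy is a Cayley--Hamilton (sharp Gordon) argument. Set $M_1:=B_{2q_{n_i}}(\theta)\in SL(2,\RR)$ and $M_2:=B_{2q_{n_i}}(\theta+2q_{n_i}\tilde\alpha)$, so that $B_{4q_{n_i}}(\theta)=M_2 M_1$; since $B_{-2q_{n_i}}(\theta+2q_{n_i}\tilde\alpha)=M_1^{-1}$, estimate (\ref{82}) of Proposition~\ref{p81} reads $\|M_1^{-1}\bar u_E^\theta - B_{-2q_{n_i}}(\theta)\bar u_E^\theta\|\leq e^{(2L-\delta+4\epsilon)q_{n_i}}$.

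First I would apply the Cayley--Hamilton identity $M_1+M_1^{-1}=(\operatorname{tr} M_1)I$ to $\bar u_E^\theta$; combined with the rewritten (\ref{82}) this yields
\begin{equation*}
|\operatorname{tr} M_1|\leq \|B_{2q_{n_i}}(\theta)\bar u_E^\theta\| + \|B_{-2q_{n_i}}(\theta)\bar u_E^\theta\| + e^{(2L-\delta+4\epsilon)q_{n_i}}.
\end{equation*}
Then I would apply the other Cayley--Hamilton identity $M_1^2+I=(\operatorname{tr} M_1)M_1$ to $\bar u_E^\theta$ and replace $M_1^2\bar u_E^\theta$ by $B_{4q_{n_i}}(\theta)\bar u_E^\theta=M_2 M_1\bar u_E^\theta$ with controlled error. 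The key observation is that $\tilde\varphi(k):=\varphi(k+2q_{n_i})$ is a normalized $\ell^2$-solution of the eigenvalue problem at the shifted phase $\theta':=\theta+2q_{n_i}\tilde\alpha$ with initial vector $M_1\bar u_E^\theta$, and $\delta(\alpha,\theta')=\delta(\alpha,\theta)$ along the subsequence $\{n_i\}$. Re-applying Proposition~\ref{p81}(\ref{82}) at phase $\theta'$ therefore gives $\|M_2^{-1}M_1\bar u_E^\theta - \bar u_E^\theta\|\leq e^{(2L-\delta+4\epsilon)q_{n_i}}$, and multiplying the Cayley--Hamilton identity for $M_2$ on the right by $M_1$ and evaluating at $\bar u_E^\theta$ produces
\begin{equation*}
(\operatorname{tr} M_2)\,M_1\bar u_E^\theta = B_{4q_{n_i}}(\theta)\bar u_E^\theta + \bar u_E^\theta + O\bigl(e^{(2L-\delta+4\epsilon)q_{n_i}}\bigr).
\end{equation*}
A parallel application of Proposition~\ref{p81} at the phase $\theta-2q_{n_i}\tilde\alpha$ to the shifted solution $\varphi(\cdot-2q_{n_i})$ supplies $\operatorname{tr} M_2=\operatorname{tr} M_1+o(1)$, which combines with the previous display to give
\begin{equation*}
\|\bar u_E^\theta\|\leq \|B_{4q_{n_i}}(\theta)\bar u_E^\theta\| + |\operatorname{tr} M_1|\cdot\|B_{2q_{n_i}}(\theta)\bar u_E^\theta\| + o(1).
\end{equation*}

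Setting $c:=\max\{\|B_{2q_{n_i}}\bar u_E^\theta\|,\|B_{-2q_{n_i}}\bar u_E^\theta\|,\|B_{4q_{n_i}}\bar u_E^\theta\|\}$ and combining the two displayed bounds with $\|\bar u_E^\theta\|=1$ produces the quadratic inequality $1\leq 2c^2+c+o(1)$, whose positive root is $c\geq \tfrac12-o(1)\geq \tfrac14$ for $q_{n_i}$ sufficiently large, as required. The main obstacle is exactly replacing $M_1^2\bar u_E^\theta$ by $B_{4q_{n_i}}(\theta)\bar u_E^\theta$, with the concomitant $\operatorname{tr} M_2\approx\operatorname{tr} M_1$: a naive operator-norm bound $\|(M_1-M_2)M_1\bar u_E^\theta\|\leq \|M_1-M_2\|\,\|M_1\bar u_E^\theta\|$ fails because the unbounded tangent potential can make $\|M_1-M_2\|$ large (forcing the strictly stronger hypothesis $4L(E)<\delta(\alpha,\theta)$), so applying Proposition~\ref{p81} instead to the shifted normalized decaying solutions $\varphi(\cdot\pm 2q_{n_i})$ at the phases $\theta\pm 2q_{n_i}\tilde\alpha$ is what preserves the exponential smallness under the sharp hypothesis $2L(E)<\delta(\alpha,\theta)$.
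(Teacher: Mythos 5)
Your overall Cayley--Hamilton (Gordon-type) strategy is in the same spirit as what the paper invokes (it simply cites Lemma 3.2 of \cite{ref23}), but your execution has a genuine gap at the step ``$\operatorname{tr} M_2=\operatorname{tr} M_1+o(1)$''. Applying Proposition \ref{p81} at the phase $\theta-2q_{n_i}\tilde\alpha$ to the solution $\varphi(\cdot-2q_{n_i})$ only compares $B_{\pm 2q_{n_i}}(\theta)$ with $B_{\pm 2q_{n_i}}(\theta-2q_{n_i}\tilde\alpha)$ on the vector $(\varphi(-2q_{n_i}),\varphi(-2q_{n_i}-1))^{T}$; it never involves $M_2=B_{2q_{n_i}}(\theta+2q_{n_i}\tilde\alpha)$, and in any case closeness of the actions of two matrices on a single vector gives no control of traces. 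For the unbounded tangent potential in the resonant regime $\delta(\alpha,\theta)>0$, traces of consecutive blocks genuinely differ by huge factors, so this step cannot be repaired. Worse, the estimates you legitimately have --- (\ref{81}), (\ref{82}) at $\bar u_E^{\theta}$ together with your shifted version of (\ref{82}), i.e.\ $\|M_2^{-1}M_1\bar u_E^{\theta}-\bar u_E^{\theta}\|$ small --- do not imply the conclusion by linear algebra alone: with $\bar u=e_1$, $M_1=\begin{pmatrix}0.1&-999\\ 0.001&0.01\end{pmatrix}$, $M_2=\begin{pmatrix}0.1&-10\\ 0.001&9.9\end{pmatrix}$ and the left block equal to $M_1$, all of these hypotheses hold with zero error, yet $\|M_1e_1\|\approx 0.1$, $\|M_1^{-1}e_1\|\approx 0.01$ and $\|M_2M_1e_1\|=0.01$ are all below $1/4$. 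So without the trace comparison your quadratic inequality cannot be reached, and the trace comparison is not available.

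The missing ingredient is the forward estimate for the \emph{same} pair of blocks at the translated solution vector: $\|(B_{2q_{n_i}}(\theta+2q_{n_i}\tilde\alpha)-B_{2q_{n_i}}(\theta))\,B_{2q_{n_i}}(\theta)\bar u_E^{\theta}\|\leq e^{(2L-\delta+C\varepsilon)q_{n_i}}$, i.e.\ $B_{4q_{n_i}}(\theta)\bar u_E^{\theta}\approx B_{2q_{n_i}}(\theta)^{2}\bar u_E^{\theta}$. This is \emph{not} equivalent to your bound on $\|M_2^{-1}M_1\bar u-\bar u\|$, since $M_2-M_1=-M_2(M_2^{-1}-M_1^{-1})M_1$ and the conjugating factors have enormous norms (the example above separates the two). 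It does follow from the telescoping of Proposition \ref{p81} once the roles of the two phases are exchanged, so that the right-hand partial products hit the bounded solution vectors $(\varphi(2q_{n_i}+j),\varphi(2q_{n_i}+j-1))^{T}$ and the cosine lower bound is only needed on sub-windows of the original phase-$\theta$ window; then Lemma \ref{cos3} applies as is, and none of your auxiliary claims (normalization of the shifted solution, $\delta(\alpha,\theta+q_{n_i}\alpha)=\delta(\alpha,\theta)$, shift-invariance of the subsequence) are needed. With this estimate the argument closes using $M_1$ alone: if $\|B_{2q_{n_i}}\bar u\|<\tfrac14$ and $\|B_{-2q_{n_i}}\bar u\|<\tfrac14$, then $|\operatorname{tr} M_1|<\tfrac12+o(1)$ by your first display, whence $\|B_{4q_{n_i}}\bar u\|\geq\|M_1^{2}\bar u\|-o(1)\geq 1-|\operatorname{tr} M_1|\,\|M_1\bar u\|-o(1)>\tfrac14$; no comparison of $\operatorname{tr} M_2$ with $\operatorname{tr} M_1$ is required.
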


\begin{proof}
 The proof is essentially contained in Lemma 3.2 of \cite{ref23}. We remark that this result
is only valid in the subsequence $n_{i}$.
\end{proof}

Now as a result of Corollary \ref{sc}, one can conclude that $H_{V_1, \alpha, \theta}$ has purely singular continuous spectrum on $\left\{E: 0< L(E)<\delta(\alpha, \theta)/2\right\}$.

\section{Pure Point Spectrum}
In this section, we are devote to prove  Anderson localization in the  regime $\left\{E: L(E)>\delta(\alpha, \theta)/2\right\}$.  We first introduce some notations and recall the key framework,  modified from the one developed in \cite{ref81, ref84} also with adaptions from \cite{ref13, ref22}. 
For any generalized eigenvalue $E$,  assume $\phi$ is the corresponding generalized eigenfunction of  $H_{V_1, \alpha, \theta}$, without loss of generality
assume 
\begin{equation}\label{eigen1}
|\phi(0)|\geq 1,
\end{equation} 
and
\begin{equation}
|\phi(k)| \leq C_{0}|k|.
\label{eigen}
\end{equation}
We shall write $\delta(\alpha, \theta)$ as $\delta$ and $\beta(\alpha)$ as $\beta$ for simplicity.
Define
\begin{equation}\label{a53}
\beta_{n}:=\frac{\ln q_{n+1}}{q_{n}},
\end{equation}
and
\begin{equation} \label{a54}
\delta_{n}:=\frac{\ln \left\|q_{n}\left(\theta-\frac{1}{2}\right)\right\|-\ln \left\|q_{n} \alpha\right\|}{q_{n}},
\end{equation}
then one can check that
\begin{Lemma}\cite{ref81}\label{delta} 
 We have $0 \leq \delta \leq \beta$ for all $\alpha, \theta$, and
$\delta=\lim \sup \max \left(0, \delta_{n}\right)$.
\end{Lemma}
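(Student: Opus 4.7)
The plan is to relate the defining quantity of $\delta$ to $\delta_n$ by an algebraic rearrangement based on the sharp continued-fraction estimate \eqref{111}, and then to handle the two inequalities and the $\max$-identity separately.

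First I would rewrite, using the definitions \eqref{a53}, \eqref{a54} and \eqref{de},
$$
\frac{\ln q_{n+1} + \ln \|q_n(\theta-\tfrac{1}{2})\|}{q_n}
\;=\; \delta_n \;+\; \frac{\ln q_{n+1} + \ln\|q_n\alpha\|}{q_n}.
$$
By \eqref{111} the last summand lies in $\bigl[-\tfrac{\ln 2}{q_n},0\bigr]$, so it tends to $0$; taking $\limsup_n$ on both sides yields the clean bridge $\delta = \limsup_n \delta_n$. The upper bound $\delta\le\beta$ is then immediate: $\|q_n(\theta-\tfrac{1}{2})\|_{\mathbb{T}} \le \tfrac{1}{2}$ and $-\ln\|q_n\alpha\| \le \ln(2q_{n+1})$ combine to give $\delta_n \le \beta_n$, hence $\delta \le \beta$.

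The delicate half is $\delta \ge 0$. Once it is in hand, the general identity $\limsup_n \max(0,x_n) = \max\bigl(0,\limsup_n x_n\bigr)$ immediately promotes $\delta = \limsup_n \delta_n$ to $\delta = \limsup_n \max(0,\delta_n)$. To establish $\delta \ge 0$, set $\eta := \theta - \tfrac{1}{2}$; by the standing hypothesis $\theta \notin \Theta$ we have $\eta \notin \alpha\mathbb{Z}+\mathbb{Z}$. I would use the Ostrowski expansion of $\eta$ relative to $\alpha$, which must contain infinitely many nonzero digits precisely because $\eta \notin \alpha\mathbb{Z}+\mathbb{Z}$, to extract a subsequence $n_k$ along which $\|q_{n_k}\eta\|$ has the same exponential order as $\|q_{n_k}\alpha\|$ up to a sub-exponential factor. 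Along this subsequence $\delta_{n_k} \ge -o(1)$, and hence $\limsup_n \delta_n \ge 0$.

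The main obstacle I expect is the quantitative Ostrowski step: individual $n$ at which destructive cancellation in the tail of the Ostrowski series makes $\|q_n\eta\|$ exponentially smaller than $\|q_n\alpha\|$ are permitted, and one must rule out that this happens for \emph{all} large $n$ simultaneously. The standard remedy is to keep track of two consecutive Ostrowski digits rather than one, and to bound $q_n$ times the tail of the expansion explicitly. Modulo this input, all three assertions $0\le\delta$, $\delta\le\beta$, and $\delta = \limsup_n \max(0,\delta_n)$ follow from the bridge identity $\delta = \limsup_n \delta_n$ derived in the first paragraph together with the argument above.
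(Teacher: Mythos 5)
Your reduction is sound as far as it goes: the rearrangement $\frac{\ln q_{n+1}+\ln\|q_n(\theta-\frac12)\|}{q_n}=\delta_n+\frac{\ln q_{n+1}+\ln\|q_n\alpha\|}{q_n}$ together with \eqref{111} gives $\delta=\limsup_n\delta_n$, the bound $\delta_n\le\beta_n$ gives $\delta\le\beta$, and $\limsup_n\max(0,\delta_n)=\max(0,\limsup_n\delta_n)$ correctly reduces the whole lemma to the single inequality $\delta\ge 0$. But that inequality is the entire content of the statement (note the paper itself offers no argument and simply quotes \cite{ref81}), and you do not prove it: you sketch an Ostrowski-expansion strategy and explicitly defer its quantitative core (``modulo this input''). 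That is a genuine gap, and the obstacle you flag is real: writing $\theta-\frac12\equiv\sum_k b_k(q_k\alpha-p_k)$, the tail satisfies only $\sum_{k\ge n}b_k\|q_k\alpha\|\le 2\|q_{n-1}\alpha\|$, so after multiplying by $q_n$ the naive estimate yields a quantity of order $1$, not $o(1)$; extracting a subsequence along which $\|q_n(\theta-\frac12)\|$ is comparable to $\|q_n\alpha\|$ therefore needs a further idea that your proposal does not supply.

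The missing step has a short elementary proof that bypasses Ostrowski expansions entirely and only uses $\gcd(q_n,q_{n+1})=1$; this is essentially what the cited source does. Put $\eta=\theta-\frac12$ (so $\eta\notin\mathbb{Z}$ since $\theta\notin\Theta$), and let $c,d$ be the integers nearest to $q_n\eta$ and $q_{n+1}\eta$. If both $q_{n+1}\|q_n\eta\|<\frac12$ and $q_n\|q_{n+1}\eta\|<\frac12$, then the integer $q_{n+1}c-q_nd$ has modulus $\le q_{n+1}\|q_n\eta\|+q_n\|q_{n+1}\eta\|<1$, hence vanishes; coprimality forces $c=q_ne$, $d=q_{n+1}e$, so $\|\eta\|_{\mathbb{T}}\le\|q_n\eta\|/q_n$, whence $\|q_n\eta\|\ge q_n\|\eta\|_{\mathbb{T}}$ and $\delta_n\ge\frac{\ln\left(q_nq_{n+1}\|\eta\|_{\mathbb{T}}\right)}{q_n}\ge 0$ for $n$ large. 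Otherwise either $\|q_n\eta\|\ge\frac{1}{2q_{n+1}}\ge\frac12\|q_n\alpha\|$, giving $\delta_n\ge-\frac{\ln 2}{q_n}$, or $\|q_{n+1}\eta\|\ge\frac{1}{2q_n}$, which together with $\|q_{n+1}\alpha\|\le\frac{1}{q_{n+2}}$ and $q_{n+2}\ge q_{n+1}+q_n\ge 2q_n$ gives $\delta_{n+1}\ge 0$. In all cases $\max(\delta_n,\delta_{n+1})\ge-\frac{\ln 2}{q_n}$ for all large $n$, so $\limsup_n\delta_n\ge 0$, i.e. $\delta\ge 0$, and your first paragraph then finishes the lemma. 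I recommend replacing the Ostrowski sketch by this argument (or by a completed version of your sketch); as submitted, the crucial inequality is assumed rather than proved.
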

Fix a small $\varepsilon>0$ such that 
\begin{equation}\label{lle}
 2L(E)>\delta+700 \varepsilon.
\end{equation}
Since $\limsup _{n \rightarrow \infty} \delta_{n}=\delta$, we have that for $n>N(\varepsilon)$ large enough,
\begin{equation}\label{5.5}
2L(E)>\delta_{n}+680 \varepsilon.
\end{equation}
Then we have the following:
\begin{Theorem}\label{local2}
If $E\in$ $\left\{E: 2L(E) > \delta(\alpha, \theta)\right\}$,   let $\phi$ be an generalized eigenfunction satisfying $|\phi(0)| \geq 1$ and (\ref{eigen}). Then for $n>N\left(\alpha, E, \lambda, \varepsilon, C_{0}\right)$ large enough and $\frac{1}{6} q_{n} \leq|k|<\frac{1}{6} q_{n+1}$, we have
$$
|\phi(k)| \leq e^{-\left(L-\delta_{n}(\alpha)/2-330 \varepsilon\right)|k|}.
$$
\end{Theorem}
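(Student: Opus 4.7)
The plan is the block-Green's-function / Poisson-formula strategy of the constructive Anderson-localization proofs \cite{ref81,ref13,ref84}, adapted to the mosaic cocycle. For each site $k$ with $\tfrac{1}{6}q_n \leq |k| < \tfrac{1}{6}q_{n+1}$, I would construct a finite interval $I = [x_1, x_2] \ni k$ of length on the order of $q_n$ for which $|\det(E - H_{V_1,\alpha,\theta}|_I)|$ is ``regular'' in a sense detailed below, apply the Poisson formula
\[
\phi(k) = -G_I(k, x_1)\,\phi(x_1 - 1) - G_I(k, x_2)\,\phi(x_2 + 1),
\]
and then combine the polynomial bound \eqref{eigen} on $\phi(x_i\pm 1)$ with an exponential decay estimate on $|G_I(k, x_i)|$. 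If $|k|$ substantially exceeds $q_n$, the one-block estimate is iterated along a chain of good intervals tiling $[0,k]$, compounding the per-step decay and absorbing the intermediate polynomial factors into the $\varepsilon$-budget (permissible because $|k| < \tfrac{1}{6}q_{n+1}$ and \eqref{5.5} holds on this scale).

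For the Green's function bound, by Cramer's rule $|G_I(k,y)| = |P_{[x_1,y-1]}(E)|\,|P_{[y+1,x_2]}(E)|/|P_I(E)|$, with $P_J(E)$ realized as an entry of the transfer matrix $B_{|J|}$. I would pass to the analytic cocycle $A(\theta) = D_E^{V_1}(\theta)\cos(\pi\theta)$, which satisfies $L(\alpha, A) = 2L(E) - \ln 2$ together with the identity $B_{2m}(\theta) = A_m(\theta)\big/\prod_{j=0}^{m-1}\cos\pi(\theta + j\alpha)$. Theorem \ref{uper} then yields $\|A_m(\theta)\| \leq e^{m(2L - \ln 2 + \varepsilon)}$, and Corollary \ref{cos} controls the numerator cosine products via their single minimal factor. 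The lower bound on the denominator comes from the analogous transfer-matrix bound on $A_{q_n}$ combined with the arithmetic lower bound of Lemma \ref{cos3}, producing $|P_I(E)| \geq e^{q_n(2L - \delta_n - O(\varepsilon))}$ along the subsequence $\{q_{n_i}\}$. The resulting ratio decays at rate $L - \delta_n/2 - O(\varepsilon)$ per original lattice site, where the factor $\tfrac{1}{2}$ reflects that $\delta_n$ is measured by a $q_n$-fold cosine product while the associated block covers $2q_n$ sites of the original operator.

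The central obstacle is the construction of the regular interval $I$, and this is where the anti-resonance analysis of \cite{ref81} is essential. By Lemma \ref{av31}, within any $q_n$-block of phase translates exactly one index $j_0(\theta)$ carries the small cosine value responsible for the $\delta_n$-resonance, while all other indices behave non-resonantly. I would shift the endpoints $x_1, x_2$ by at most $O(q_{n-1})$ so that: (i) $j_0$ lies strictly inside $I$, giving the denominator the full enhanced lower bound of Lemma \ref{cos3}; (ii) $j_0$ lies outside both sub-blocks $[x_1, y-1]$ and $[y+1, x_2]$ for each boundary choice $y \in \{x_1, x_2\}$, so the numerator cosine products remain in the non-resonant regime governed by Corollary \ref{cos}; and (iii) $k$ stays at distance at least $\tfrac{1}{6}q_n$ from both endpoints of $I$. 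Condition (iii) is feasible because $|k| \geq \tfrac{1}{6}q_n$ provides enough slack to absorb the $O(q_{n-1})$ shift; the three-distance property of the irrational rotation guarantees that one of a bounded collection of candidate shifts simultaneously realizes (i)--(iii). Combining the three estimates and iterating across the tiling of $[0,k]$ yields the claimed rate $L - \delta_n(\alpha)/2 - 330\varepsilon$, where the constant $330$ aggregates the $\varepsilon$-losses from Theorem \ref{uper}, Corollary \ref{cos}, Lemma \ref{cos3}, the $O(q_{n-1})$ interval shift, and the absorption of the polynomial $\phi$-growth at each intermediate boundary.
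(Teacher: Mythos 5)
Your scheme breaks down at its central step: the lower bound on the Green's-function denominator. You propose to get $|P_I(E)|\geq e^{q_n(2L-\delta_n-O(\varepsilon))}$ from ``the transfer-matrix bound on $A_{q_n}$ combined with Lemma \ref{cos3}'', but an upper bound on $\|A_m\|$ (Theorem \ref{uper}) gives no lower bound on a particular entry $\tilde P$ at the particular phase $\theta_{x_1}$ you need, and Lemma \ref{cos3} concerns the cosine product, not $\tilde P$, and holds only along a subsequence $q_{n_i}$ --- whereas Theorem \ref{local2} must hold for \emph{all} large $n$ (otherwise the generalized eigenfunction is not controlled at the skipped scales and pure point spectrum does not follow). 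The paper's mechanism is entirely different and is the heart of the method: Herman's trick gives the averaged lower bound $\frac1k\int\ln|\tilde P_k|\geq\tilde L$ (Lemma \ref{up0}), Lagrange interpolation over $\varepsilon$-uniform sampling sets (Lemma \ref{up}, Lemma \ref{a4.1}, Lemma \ref{resonunif}) upgrades this to a pointwise bound at \emph{some} sample point, and the normalization $|\phi(0)|\geq 1$ in \eqref{eigen1} is used in a contradiction argument (Lemmas \ref{4.2} and \ref{8.6}) to force that good point into the interval near $y$ rather than near $0$. None of this appears in your proposal, and without it there is no way to make any interval ``regular''.

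Second, the single-scale plan --- one good interval of length $\sim q_n$ well containing $k$, then tiling $[0,k]$ --- cannot work uniformly in the regime $\beta_n\geq\delta_n+200\varepsilon$ (Case 1 of the paper), which is exactly the new content of the theorem. Near the resonant sites $\operatorname{dist}(y,2q_n\Z)\leq 2b_n$ the sampling sets are only $\frac{\ln(q_{n+1}/|\ell|)}{2q_n}$-uniform, and the naive block-to-block decay $e^{-(2L-\beta_n)q_n}$ can even be growth; the paper must pass to the two-scale structure (the regions $R_\ell$, the quantities $r_\ell$, the recursions \eqref{r82}, \eqref{r92}, and the iteration of Lemma \ref{10.1}) and, crucially, harvest the \emph{anti-resonance} gain $c_{n,\ell}\leq C\max(|\ell|,e^{\delta_n q_n},1)e^{-\beta_n q_n}$ of Lemma \ref{cos2} together with the refined upper bound of Lemma \ref{7.4} to convert $\beta_n$-losses into $\delta_n$-losses. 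Your treatment of the minimal-cosine index $j_0$ gets its role backwards: you shift the endpoints so that $j_0$ avoids the numerators, but the small cosine is precisely the extra decay the argument needs inside the estimates (through $c_{n,\ell}$ and the factors $g_{k,\ell}$), while the actual obstruction --- the frequency resonance between translates separated by multiples of $q_n$ and the phase resonance measured by $\|q_n(\theta-\tfrac12)\|$ --- is not removable by an $O(q_{n-1})$ endpoint shift. Finally, you ignore the parity structure of the mosaic model: the expansions \eqref{green}--\eqref{green2} require matching parities of $y$ and the endpoints (so that the determinants stay of type $\tilde P$ or $\tilde Q$), odd sites being handled separately via \eqref{bbd}; this bookkeeping is needed for the stated rate with the factor $\delta_n/2$.
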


Before giving  the proof, we first introduce some useful notations and concepts.
Denote by $M_{k}(\theta)$ the $k-$ step transfer-matrix of $H_{V, \alpha, \theta} u=E u$, and denote
$$
Q_{k}(\theta)=\operatorname{det}\left[\left.\left(H_{V_1, \alpha, \theta}-E\right)\right|_{[0, k-1]}\right], \quad P_{k}(\theta)=\operatorname{det}\left[\left.\left(H_{V_1, \alpha, \theta}-E\right)\right|_{[1, k]}\right],
$$
for $k\geq1$, then the $k$-step transfer-matrix can be written as
$$
M_{k}(\theta)=(-1)^{k}\left(\begin{array}{cc}
Q_{k}(\theta) & P_{k-1}(\theta) \\
-Q_{k-1}(\theta) & -P_{k-2}(\theta)
\end{array}\right).
$$

\iffalse
the key observation is that elements of $M_{2 k}(\theta)$ can be written as linear combination of $P_{2 k-1}(\theta)$ (possibly with different $k$ and different $\theta$ ):
\begin{Lemma}
 We have
$$
\begin{aligned}
E P_{2 k-2}(\theta) &=-P_{2 k-1}(\theta)-P_{2 k-3}(\theta)\\
E Q_{2 k}(\theta) &=-P_{2 k+1}(\theta- \alpha)-P_{2 k-1}(\theta) \\
E^{2} Q_{2 k-1}(\theta) &=P_{2 k+1}(\theta- \alpha)+P_{2 k-1}(\theta)+P_{2 k-1}(\theta- \alpha)+P_{2 k-3}(\theta).
\end{aligned}
$$
\end{Lemma}
\begin{proof}
Note that $V_{1}(\theta, 2 n+1)=0$ and $V_{1}(\theta, n+2)=V_{2}(\theta+ \alpha, n)$. Then if we expand the determinant $\operatorname{det}\left[\left.\left(H_{V_1, \alpha, \theta}-E\right)\right|_{[0,2 k-1]}\right]$ and $\operatorname{det}\left[\left.\left(H_{V, \alpha, \theta}-E\right)\right|_{[1,2 k-1]}\right]$ by the last column, we have
\begin{equation}
\begin{aligned}
Q_{2 k}(\theta) &=-E Q_{2 k-1}(\theta)-Q_{2 k-2}(\theta), \\
E P_{2 k-2}(\theta) &=-P_{2 k-1}(\theta)-P_{2 k-3}(\theta) .
\end{aligned}
\end{equation}
Meanwhile, if we expand the determinant $\operatorname{det}\left[\left.\left(H_{V_1, \alpha, \theta}-E\right)\right|_{[1,2 k-1]}\right.$ by the first column, we have
$$
P_{2 k-1}(\theta)=-E Q_{2 k-2}(\theta+ \alpha)-P_{2 k-3}(\theta+ \alpha)
$$
which implies that
\begin{equation}
\begin{aligned}
E Q_{2 k}(\theta) &=-P_{2 k+1}(\theta- \alpha)-P_{2 k-1}(\theta) \\
E^{2} Q_{2 k-1}(\theta) &=P_{2 k+1}(\theta- \alpha)+P_{2 k-1}(\theta)+P_{2 k-1}(\theta- \alpha)+P_{2 k-3}(\theta)
\end{aligned}
\end{equation}
We thus finish the proof.
\end{proof}
\fi

Let $\tilde{Q}_{k}(\theta): \mathbb{R} /  \mathbb{Z} \rightarrow \mathbb{R}$ be defined as $\tilde{Q}_{2k}(\theta)=\prod_{j=0}^{k-1} \cos \pi(\theta+j \alpha) \cdot Q_{2k}(\theta)$ and $\tilde{Q}_{2k+1}(\theta)=\prod_{j=0}^{k} \cos \pi(\theta+j \alpha) \cdot    Q_{2k+1}(\theta)$. Respectively,  $\tilde{P}_{k}(\theta)$ can be  also defined as $\tilde{P}_{2k}(\theta)=\prod_{j=1}^{k} \cos \pi(\theta+j \alpha) \cdot P_{2k}(\theta)$ and $\tilde{P}_{2k+1}(\theta)=\prod_{j=1}^{k} \cos \pi(\theta+j \alpha) \cdot P_{2k+1}(\theta)$.  Then clearly,  it turns out $A_{k}(\theta)$ defined in (\ref{an}) has the following expression
\begin{equation}\label{tr1}
A_{k}(\theta)=M_{2k}(\theta)\prod_{j=0}^{k-1} \cos \pi(\theta+j \alpha)=\left(\begin{array}{cc}
\tilde{Q}_{2k}(\theta) & -\tilde{P}_{2k-1}(\theta+\alpha) \cos \pi \theta \\
-\tilde{Q}_{2k-1}(\theta)  & -\tilde{P}_{2k-2}(\theta+\alpha) \cos \pi \theta 
\end{array}\right) 
\end{equation}

Then, we have the following upper bound of  $\tilde{P}_{k}$ and $\tilde{Q}_{k}$. 
\begin{Lemma} For any $ \epsilon> 0$, for $|k| $ large
enough,
\begin{equation}\label{q1}
\left|\tilde{P}_{k}(\theta)\right| \leq e^{(\tilde{L}(E)+\epsilon)|k|} \text { for any } \theta \in \mathbb{T}, 
\end{equation}
and
\begin{equation}\label{q2}
\left|\tilde{Q}_{k}(\theta)\right| \leq e^{(\tilde{L}(E)+\epsilon)|k|} \text { for any } \theta \in \mathbb{T}, 
\end{equation}
where $\tilde{L}(E)=L(E)-\frac{\ln 2}{2}$.
\end{Lemma}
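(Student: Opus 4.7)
The strategy is to read both bounds off the uniform upper estimate on $\|A_k(\theta)\|$ given by Theorem \ref{uper}, using the identification (\ref{tr1}) of $\tilde{P}_m$ and $\tilde{Q}_m$ with entries of $A_k$.

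First, I would recall that $A(\theta)=D_E^{V_1}(\theta)\cos\pi\theta$ is an analytic $M(2,\mathbb{R})$-valued cocycle on $\mathbb{T}$ (the factor $\cos\pi\theta$ cancels the $\tan\pi\theta$ pole of $D_E^{V_1}$). Using $\int_{\mathbb{T}}\ln|\cos\pi\theta|\,d\theta=-\ln 2$, its Lyapunov exponent equals $L(\alpha,A)=2L(E)-\ln 2=2\tilde L(E)$, as already noted after (\ref{uper2}). Theorem \ref{uper} then yields, for every $\varepsilon>0$ and $k$ large,
$$
\|A_k(\theta)\|\leq e^{k(2L(E)-\ln 2+\varepsilon)}=e^{2k(\tilde L(E)+\varepsilon/2)},\qquad\theta\in\mathbb{T}.
$$
From (\ref{tr1}) the quantities $\tilde Q_{2k}(\theta)$ and $\pm\tilde Q_{2k-1}(\theta)$ appear literally as the $(1,1)$ and $(2,1)$ entries of $A_k(\theta)$. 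Taking $k=\lceil(m+1)/2\rceil$ and absorbing the additive $O(\tilde L)$ discrepancy between $m$ and $2k$ into a slightly enlarged $\varepsilon$ gives $|\tilde Q_m(\theta)|\leq e^{m(\tilde L+\varepsilon)}$ for $m$ large, which is (\ref{q2}).

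For (\ref{q1}), the second column of (\ref{tr1}) yields only
$$
|\cos\pi\theta\cdot\tilde P_{2k-1}(\theta)|\leq\|A_k(\theta)\|\leq e^{2k(\tilde L+\varepsilon/2)},
$$
so the main obstacle is the vanishing of $\cos\pi\theta$ at $\theta=1/2$, which blocks a direct division. The plan is to exploit that $\tilde P_{2k-1}$ is a trigonometric polynomial in $\theta$ of degree $O(k)$ (the cosine product in its definition exactly cancels the $\tan$-poles of $P_{2k-1}$). Writing $h(\theta)=\cos\pi\theta\cdot\tilde P_{2k-1}(\theta)$, the convolution identity
$$
\hat h(n)=\tfrac12\bigl(\widehat{\tilde P_{2k-1}}(n-1)+\widehat{\tilde P_{2k-1}}(n+1)\bigr)
$$
provides a two-step recursion. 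Since $\widehat{\tilde P_{2k-1}}(n)$ vanishes for $|n|>O(k)$, propagating this recursion inward from the extremal frequencies gives $|\widehat{\tilde P_{2k-1}}(n)|\leq O(k)\|h\|_\infty$, and summing over $|n|\leq O(k)$ yields $\|\tilde P_{2k-1}\|_\infty\leq O(k^2)\|h\|_\infty$. The polynomial-in-$k$ prefactor is absorbed into $e^{k\varepsilon}$ for $k$ large, giving (\ref{q1}).

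An alternative I briefly considered — using a shifted 2-step cocycle that starts at an odd site — produces a second bound involving the factor $\cos\pi(\theta+k\alpha)$, but this cosine is $k$-dependent and so does not by itself overcome the zero of $\cos\pi\theta$. The Fourier/Bernstein argument above therefore seems the cleanest way to dispose of the cosine factor uniformly in $\theta$.
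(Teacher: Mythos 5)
Your argument is correct, and for the $\tilde{Q}$ bound (\ref{q2}) it coincides with the paper's proof, which is the one-line observation that $\tilde{Q}_{2k}$ and $\tilde{Q}_{2k-1}$ are literally entries of $A_k(\theta)$ so (\ref{uper2}) applies. Where you diverge is on (\ref{q1}): the paper's proof is simply ``it follows from (\ref{uper2}) and (\ref{tr1})'', silently ignoring the point you isolate, namely that in (\ref{tr1}) the $\tilde{P}$-entries come multiplied by $\cos\pi\theta$, so the entry bound only controls $\cos\pi\theta\cdot\tilde{P}_m$ and degenerates near $\theta=\tfrac12$. Your Fourier/back-substitution argument — using that $\tilde{P}_m$ is a trigonometric polynomial of degree $O(m)$ (this is exactly what (\ref{04.1})--(\ref{4.1}) provide, though one should work with half-integer frequencies, i.e.\ on the period-$2$ cover, since the cosine factors have period $2$) and that multiplication by $\cos\pi\theta$ acts on coefficients as a two-term average, so the coefficients of $\tilde{P}_m$ can be recovered from those of $h=\cos\pi\theta\cdot\tilde{P}_m$ starting from the top frequency with an $O(m^2)$ total loss — is a legitimate and clean way to close this, and the polynomial prefactor is indeed absorbed into $e^{\epsilon m}$. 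Two remarks for comparison: (i) the shifted two-step block over $[1,2k]$ that you dismiss actually yields $\tilde{P}_{2k}(\theta)$ as a clean entry (its cosine product is exactly the one in the definition of $\tilde P_{2k}$), so it settles the even case directly; only the odd case carries the $k$-dependent factor $\cos\pi(\theta+k\alpha)$, which, as you say, cannot be combined with $\cos\pi\theta$ uniformly (both can be small when $k\alpha$ is near $0$ mod $1$); (ii) an alternative patch with the same effect as your Fourier step is a Remez-type extension: the bound $|\tilde P_m|\le e^{m(\tilde L+\epsilon)}$ holds off the set $\{|\cos\pi\theta|<e^{-\epsilon m}\}$ of exponentially small measure, and a degree-$O(m)$ trigonometric polynomial bounded off such a set is bounded everywhere up to another factor $e^{C\epsilon m}$. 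In short, your proof is correct and, on the $\tilde P$ half, supplies a justification the paper leaves implicit.
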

\begin{proof}
It follows  from (\ref{uper2}) and (\ref{tr1}).
\end{proof}
We can also have the average lower bound of $\tilde{P}_{k}$.
\begin{Lemma}\label{up0}
By Herman's subharmonic trick, one has
 \begin{equation}\label{av4}
\frac{1}{k} \int_{0}^{1} \ln \left|\tilde{P}_{k}(\theta)\right| \mathrm{d} \theta=\frac{1}{k} \int_{0}^{1} \ln \left|\tilde{P}_{k}(2 \theta)\right| \mathrm{d} \theta \geq \tilde{L}
\end{equation}
\end{Lemma}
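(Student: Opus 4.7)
The first equality is the easy part: although $\tilde{P}_k$ may itself have period $2$ in $\theta$ (when the number of cosines in its definition is odd), the modulus $|\tilde{P}_k|$ is always $1$-periodic, so the change of variable $\phi = 2\theta$ immediately gives $\int_0^1 \ln|\tilde{P}_k(2\theta)|\,d\theta = \int_0^1 \ln|\tilde{P}_k(\phi)|\,d\phi$. The point of the doubled formulation is that the half-integer $\theta$-modes $e^{\pm i\pi\theta}$ coming from the cosines turn into integer modes, so $\tilde{P}_k(2\theta)$ is a bona fide $1$-periodic trigonometric polynomial, and Herman's trick (Jensen's formula for Laurent polynomials on the unit circle) applies to it in clean form.

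For the inequality, the plan is to set $u(y) := \int_0^1 \ln|\tilde{P}_k(2\theta+2iy)|\,d\theta$ and analyse it. Since $\ln|\tilde{P}_k(2\cdot)|$ is subharmonic on $\mathbb{C}$, $u$ is convex in $y$, and since $\tilde{P}_k$ is real-valued on $\mathbb{R}$ one has $u(-y)=u(y)$. The standard consequence of these two properties together with a linear asymptote $u(y)\sim sy+c$ at $+\infty$ is the lower bound $u(0)\geq c$: indeed $u(y)-sy$ is convex with vanishing derivative at $+\infty$, hence non-increasing, so it stays above its limit $c$.

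It remains to identify the asymptote. As $y\to+\infty$, $\cos\pi(2\theta+2j\alpha+2iy)\sim \tfrac{1}{2} e^{2\pi y}e^{-2\pi i(\theta+j\alpha)}$ and $\tan\pi(\cdot+2iy)\to i$, so $P_k(2\theta+2iy)\to P_k^\infty$, the determinant of the constant tridiagonal matrix obtained from $P_k$ by formally replacing every $\tan$ with $i$. Since $\tilde{P}_k$ contains $\lfloor k/2\rfloor$ cosines, this yields $|\tilde{P}_k(2\theta+2iy)|\sim 2^{-\lfloor k/2\rfloor}\,|P_k^\infty|\,e^{2\pi\lfloor k/2\rfloor y}$, so $u(0)\geq \ln|P_k^\infty|-\lfloor k/2\rfloor\ln 2$. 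Finally, $P_k^\infty$ is (up to sign) an off-diagonal entry of a power of the constant matrix $D_\infty$ from the proof of Lemma~\ref{le2}; because $L(\alpha,D_\infty)=2L(E)$, one has $\ln|P_k^\infty|=kL(E)+O(1)$, and dividing by $k$ yields $\tfrac{1}{k}u(0)\geq L(E)-\tfrac{1}{2}\ln 2 + o(1) = \tilde{L}+o(1)$. The main technical point — essentially the only nontrivial step — is to control this $O(1)$ error so the bound holds for all (large) $k$, which is done by pinning down the leading Fourier coefficient of $\tilde{P}_k(2\theta)$ exactly via Jensen's formula on the unit circle.
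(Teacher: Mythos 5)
Your proposal is correct and is essentially the paper's own proof: complexifying the phase and reading off the constant term as $y\to+\infty$ is the same computation as the paper's Herman/Jensen step of clearing the half-integer modes (multiplying by powers of $z=e^{2\pi i\theta}$) and evaluating the resulting determinant at $z=0$, which produces exactly your constant tridiagonal matrix with $\tan\to i$, whose growth rate $\ln|x_2|=L(\alpha,D_\infty)=2L(E)$ is identified via Lemma~\ref{le2}. The $O(1)$ caveat you flag at the end is shared by the paper's proof, which likewise only establishes $|a_{2k}|\sim C|x_2|^k$ asymptotically and concludes with a limiting inequality, sufficient for the way \eqref{av4} is used (Lemma~\ref{up} at scales $k\sim q_n$).
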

The proof of this lemma  is modification of that of   Lemma 3.1 in \cite{ref13}. We will leave
it in the appendix. 
An important observation that makes our analysis possible is
\begin{Lemma}
$\frac{\tilde{P}_{2k-1}(\theta)} { \cos ^{K-1} (\pi \theta)}$ and $\frac{\tilde{P}_{2k}(\theta)}{  \cos ^{k} (\pi \theta)}$ can be expressed as a polynomial of degree $ k-1 $ and $k$  respectively in $tan\pi \theta$, namely,
\begin{equation}
\frac{\tilde{P}_{2k-1}(\theta)}{(\cos \pi \theta)^{k}} \triangleq g_{k-1}(\tan \pi \theta),
\label{04.1}
\end{equation}
\begin{equation}
\frac{\tilde{P}_{2k}(\theta)}{(\cos \pi \theta)^{k+1}} \triangleq f_{k}(\tan \pi \theta),
\label{4.1}
\end{equation}
where $g_{k-1}$ is a polynomial of degree $k-1$, respectively, $f_{k}$ is a polynomial of degree $k$.
\end{Lemma}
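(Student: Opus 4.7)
The plan is to prove the lemma by induction on $k$, based on the standard three-term recursion for tridiagonal determinants. Expanding $P_k(\theta) = \det[(H_{V_1,\alpha,\theta}-E)|_{[1,k]}]$ along its last row gives
\[
P_k(\theta) = d_k\, P_{k-1}(\theta) - P_{k-2}(\theta),
\]
with $d_k = -E$ at odd sites $k$ and $d_k = \lambda\tan\pi(\theta + (k/2)\alpha) - E$ at even sites. Multiplying these identities by the appropriate cosine products that define $\tilde{P}$, and using $\cos\pi(\theta+k\alpha)\tan\pi(\theta+k\alpha) = \sin\pi(\theta+k\alpha)$, one obtains the coupled recursion
\[
\tilde{P}_{2k-1}(\theta) = -E\,\tilde{P}_{2k-2}(\theta) - \cos\pi(\theta+(k-1)\alpha)\,\tilde{P}_{2k-3}(\theta),
\]
\[
\tilde{P}_{2k}(\theta) = \bigl[\lambda\sin\pi(\theta+k\alpha) - E\cos\pi(\theta+k\alpha)\bigr]\tilde{P}_{2k-1}(\theta) - \cos\pi(\theta+k\alpha)\,\tilde{P}_{2k-2}(\theta).
\]

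The key algebraic fact driving the induction is that each factor $\cos\pi(\theta+j\alpha)$ or $\sin\pi(\theta+j\alpha)$ splits off a $\cos\pi\theta$ via the addition formulas
\[
\cos\pi(\theta+\phi) = \cos\pi\theta\bigl[\cos\pi\phi - \tan\pi\theta\sin\pi\phi\bigr], \quad \sin\pi(\theta+\phi) = \cos\pi\theta\bigl[\sin\pi\phi + \tan\pi\theta\cos\pi\phi\bigr],
\]
so both become $\cos\pi\theta$ times an explicit first-degree polynomial in $\tan\pi\theta$ (with coefficients depending only on $\alpha$ and $j$). Substituting these into the recursions above, each coefficient on the right-hand side acquires a clean factor of $\cos\pi\theta$ times a linear polynomial in $\tan\pi\theta$.

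With these preliminaries the induction runs routinely. Base cases are immediate: $\tilde{P}_1 = -E$ gives $g_0(y) = -E$, and a short direct calculation verifies the form of $\tilde{P}_2$. For the inductive step, assume the hypothesized forms $\tilde{P}_{2k-2} = (\cos\pi\theta)^{k-1} f_{k-1}(\tan\pi\theta)$ and $\tilde{P}_{2k-3} = (\cos\pi\theta)^{k-2} g_{k-2}(\tan\pi\theta)$. Substituting into the odd recursion and using the factoring above yields $\tilde{P}_{2k-1} = (\cos\pi\theta)^{k-1}\bigl[-E f_{k-1}(y) - (\cos\pi(k-1)\alpha - y\sin\pi(k-1)\alpha)\,g_{k-2}(y)\bigr]$ with $y = \tan\pi\theta$, whose bracket is a polynomial of degree $\max(k-1,\, 1+(k-2)) = k-1$. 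The even recursion is handled identically, producing $\tilde{P}_{2k} = (\cos\pi\theta)^k f_k(\tan\pi\theta)$ with $f_k$ of degree $1+(k-1) = k$.

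The main obstacle is just the careful bookkeeping of the $\cos\pi\theta$-exponents and the polynomial degrees across the two intertwined (odd/even) recursions; there are no analytic difficulties. If one insists on reading \emph{degree} strictly rather than as an upper bound, the induction should additionally track the leading $\tan^{k-1}$- and $\tan^k$-coefficients of $g_{k-1}$ and $f_k$ through the recursion and verify they are generically nonzero in $\alpha$ and $E$, which is an elementary linear-recursion computation that follows from the explicit formulas produced in Step~2.
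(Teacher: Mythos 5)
Your proof is correct and takes essentially the same route as the paper: expand the determinants along the last column/row to obtain the coupled recursions for $\tilde{P}_{2k-1}$ and $\tilde{P}_{2k}$, then induct, using the angle-addition formulas to extract a factor of $\cos\pi\theta$ at each step. Note that the cosine exponents $k-1$ and $k$ you use are the correct ones (consistent with the lemma's text, the base case $\tilde{P}_1=-E$, and (\ref{318})); the exponents $k$ and $k+1$ appearing in the displayed equations (\ref{04.1})--(\ref{4.1}) are typos in the paper.
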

\begin{proof}
Note that $V_{1}(\theta, 2 n+1)=0$ and $V_{1}(\theta, n+2)=V_{1}(\theta+ \alpha, n)$. Then if we expand the determinant $\operatorname{det}\left[\left.\left(H_{V, \alpha, \theta}-E\right)\right|_{[1,2 k-1]}\right]$ and $\operatorname{det}\left[\left.\left(H_{V, \alpha, \theta}-E\right)\right|_{[1,2 k]}\right]$ by the last column, we have
$$
\begin{aligned}
P_{2 k-1}(\theta) =-&E P_{2 k-2}(\theta)-P_{2 k-3}(\theta), \\
P_{2 k}(\theta) =(\tan \pi(\theta+&k \alpha)-E)P_{2 k-1}(\theta)-P_{2 k-2}(\theta).
\end{aligned}
$$
Recall the definition of $\tilde{P_{k}}$, we have
\begin{equation}\label{dedai}
  \begin{aligned}
\tilde{P}_{2 k-1}(\theta) =-&E \tilde{P}_{2 k-2}(\theta)-\tilde{P}_{2 k-3}(\theta)\cos\pi(\theta+(k-1 )\alpha), \\
\tilde{P}_{2 k}(\theta) =(\tan(\pi(\theta+k \alpha))-&E)\cos(\pi(\theta+k \alpha))\tilde{P}_{2 k-1}(\theta)-\tilde{P}_{2 k-2}(\theta)\cos(\pi(\theta+k \alpha)),
\end{aligned}  
\end{equation}
then (\ref{04.1}) and (\ref{4.1}) follow from an induction, by using (\ref{dedai}). 
\end{proof}

By the Lagrange interpolation formula, for any set of $k + 1 $ distinct $\theta_i$'s in $(-1/2, 1/2)$,  we have the following convenient representation
\begin{equation}\label{318}
\begin{aligned}
\tilde{P}_{2k}(\theta)=(\cos \pi \theta)^{k} g_{k}(\tan \pi \theta) &=\sum_{i=0}^{k} \tilde{P}_{2k}\left(\theta_{i}\right) \frac{\prod_{l \neq i} \tan \pi \theta-\tan \pi \theta_{l}}{\prod_{l \neq i} \tan \pi \theta_{i}-\tan \pi \theta_{l}} \cdot \frac{\cos ^{k} \pi \theta}{\cos ^{k} \pi \theta_{i}} \\
&=\sum_{i=0}^{k} \tilde{P}_{2k}\left(\theta_{i}\right) \prod_{l \neq i} \frac{\sin \pi\left(\theta-\theta_{l}\right)}{\sin \pi\left(\theta_{i}-\theta_{l}\right)},
\end{aligned}
\end{equation}
also
\begin{equation}
\begin{aligned}
\tilde{P}_{2k+1}(\theta)=(\cos \pi \theta)^{k} g_{k}(\tan \pi \theta) &=\sum_{i=0}^{k} \tilde{P}_{2k-1}\left(\theta_{i}\right) \frac{\prod_{l \neq i} \tan \pi \theta-\tan \pi \theta_{l}}{\prod_{l \neq i} \tan \pi \theta_{i}-\tan \pi \theta_{l}} \cdot \frac{\cos ^{k} \pi \theta}{\cos ^{k} \pi \theta_{i}} \\
&=\sum_{i=0}^{k} \tilde{P}_{2k+1}\left(\theta_{i}\right) \prod_{l \neq i} \frac{\sin \pi\left(\theta-\theta_{l}\right)}{\sin \pi\left(\theta_{i}-\theta_{l}\right)}.
\label{319}
\end{aligned}
\end{equation}

In this regard, we also recall the following useful concept:
\begin{definition}\cite{ref2}
We say that the set $\left\{\theta_{1}, \ldots, \theta_{k+1}\right\}$ is $\epsilon$ -uniform if
\begin{equation}
\max _{\theta \in[0,1]} \max _{i=0, \ldots, k} \prod_{l \neq i} \frac{\left|\sin \pi\left(\theta-\theta_{l}\right)\right|}{\left|\sin \pi\left(\theta_{i}-\theta_{l}\right)\right|}<e^{k \epsilon}.
\label{41}
\end{equation}
\end{definition}

We use $G_{\left[x_{1}, x_{2}\right]}(E)(x, y)$ for the Green function of the operator $H$ restricted to the interval $[x_1, x_2]$ with zero boundary conditions at $x_1 -1$ and $x_2 + 1$. We will omit $ E$ when it is fixed throughout the argument. A useful definition about Green’s function is the following:
\begin{definition}\cite{ref11}
 A point $y \in \mathbb{Z}$ will be called $(m, h)$-regular if there exists an interval $\left[x_{1}, x_{2}\right]$, $x_{2}=x_{1}+h-1$, containing $y$, such that $$
\left|G_{\left[x_{1}, x_{2}\right]}\left(x_{i}, y\right)\right|<e^{-m\left|y-x_{i}\right|}, \left|y-x_{i}\right| \geq \frac{1}{4} h, \text { for } i=1,2.
$$
Otherwise, $y \in \mathbb{Z}$ will be called $(m, h)$-singular.
\end{definition}
Let  $\phi(x)$  be a solution  of $H\phi(x)=E\phi(x)$ and let $[x_1, x_2] $ be an interval containing $y$. We have
\begin{equation}
\phi(y)=-G_{\left[x_{1}, x_{2}\right]}\left(x_{1}, y\right) \phi\left(x_{1}-1\right)-G_{\left[x_{1}, x_{2}\right]}\left(x_{2}, y\right) \phi\left(x_{2}+1\right).
\label{l1}
\end{equation}
In general, if $I=[a, b]$, let $\partial I:=\{a, b\}$ and $a^{\prime}:=a-1, b^{\prime}:=b+1$.  
If we denote
$$
\Delta_{m, n}(\theta)=\operatorname{det}\left[\left.\left(H_{V, \alpha, \theta}-E\right)\right|_{[m, n]}\right].
$$
By Cramer's rule, we have the following connection between the determinants ${P}_{k}$ and Green
function:
\begin{equation}
\begin{aligned}
&\left|G_{\left[x_{1}, x_{2}\right]}\left(x_{1}, y\right)\right|=\left|\frac{\Delta_{y+1, x_{2}}(\theta)}{\Delta_{x_{1}, x_{2}}(\theta)}\right|, \\
&\left|G_{\left[x_{1}, x_{2}\right]}\left(y, x_{2}\right)\right|=\left|\frac{\Delta_{x_{1}, y-1}(\theta)}{\Delta_{x_{1}, x_{2}}(\theta)}\right|.
\end{aligned}
\end{equation}
Furthermore, if $y=2n $ and $x_{i}=2n_{i}+1$  with  $n, n_{i}  \in  \mathbb{N}$ for $i=1, 2$, we have 
\begin{equation}
|\phi(y)| \leq \frac{\left|\tilde{P}_{x_{2}-y}\left(\theta_{n}\right)\right|}{\left|\tilde{P}_{x_{2}-x_{1}}\left(\theta_{n_{1}}\right)\right|} \prod_{k=n_{1} }^{n-1}\left|\cos \left(\pi \theta_{k}\right)\right| \cdot\left|\phi\left(x_{1}-1\right)\right|+\frac{\left|\tilde{P}_{y-x_{1}}\left(\theta_{n_{1}}\right)\right|}{\left|\tilde{P}_{x_{2}-x_{1}}\left(\theta_{n_{1}}\right)\right|} \prod_{k=n}^{ n_{2} }\left|\cos \left(\pi \theta_{k}\right)\right| \cdot\left|\phi\left(x_{2}+1\right)\right|,
\label{green}
\end{equation}
and if $y=2n $ and $x_{i}=2n_{i}$  with  $n, n_{i}  \in  \mathbb{N}$ for $i=1, 2$, we have 
\begin{equation}
|\phi(y)| \leq \frac{\left|\tilde{Q}_{x_{2}-y}\left(\theta_{n}\right)\right|}{\left|\tilde{Q}_{x_{2}-x_{1}}\left(\theta_{n_{1}}\right)\right|} \prod_{k=n_{1} }^{n-1}\left|\cos \left(\pi \theta_{k}\right)\right| \cdot\left|\phi\left(x_{1}-1\right)\right|+\frac{\left|\tilde{Q}_{y-x_{1}}\left(\theta_{n_{1}}\right)\right|}{\left|\tilde{Q}_{x_{2}-x_{1}}\left(\theta_{n_{1}}\right)\right|} \prod_{k=n}^{ n_{2} }\left|\cos \left(\pi \theta_{k}\right)\right| \cdot\left|\phi\left(x_{2}+1\right)\right|.
\label{green2}
\end{equation}
where $\theta_{k}=\theta +k \alpha$. One should be mentioned that if we use (\ref{green}) to expand even point $y$ with odd endpoints $x_1$ and $x_2$, then we can also expand even point $x^{\prime}_{i}$  in  a interval with odd endpoints. The parity of these points will contribute to keep the the numerators and denominators of Green’s functions from be replaced by $\tilde{Q}_{k}$.
\subsection{Key technical Lemmas }\label{ktl}

In the remaining of this paper, We want to prove the generalized eigenfunction $\phi$ decays exponentially (Theorem \ref{local2}). To do so, first we need to obtain good bound of $P_{k}$ and the product (indeed the minimum) of cosines in (\ref{green}). 
 Before giving these bounds of $P_{k}$ and the product  of cosines, we  need some concepts, which were first introduced in \cite{ref81}. 
\begin{definition}\cite{ref81}
 We call $(m, \ell) \in \mathbb{Z}^{2}$ is $\theta$-minimal on scale $q_{n}$ if the following holds

(1) $m \in\left[-q_{n} / 2, q_{n} / 2\right)$,

(2) $|\ell| \leq \frac{1}{q_{n}}\left(e^{\delta_{n} q_{n}}+q_{n}+\frac{1}{2}\right)$,

(3) $\left\|\theta-\frac{1}{2}+\left(m+\ell q_{n}\right)\right\|<\left(\frac{1}{2}+\frac{1}{2 q_{n}}\right)\left\|q_{n} \alpha\right\|$,

(4) (i). For $a_{n+1} \geq 4$, we have
$$
\left\|\theta-\frac{1}{2}+\left(m+j q_{n}\right) \alpha\right\| \leq 20 \min _{|k|<q_{n}}\left\|\theta-\frac{1}{2}+\left(m+j q_{n}+k\right) \alpha\right\|,
$$
holds for any $|j| \leq a_{n+1} / 6$.

  (ii). For $a_{n+1} \leq 3$, we have
$$ 
\left\|\theta-\frac{1}{2}+m \alpha\right\| \leq 20 \min _{-q_{n} / 2 \leq k<q_{n} / 2}\left\|\theta-\frac{1}{2}+k \alpha\right\|.
$$
\end{definition}

The following Lemma show  the existence of $\theta$-minimal $(m, \ell)$.
\begin{Lemma}\cite{ref81}\label{min}
 For any $q_{n}$ sufficiently large, there exists $\theta$- minimal $\left(m_{n}, \ell_{n}\right)$ at scale $q_{n}$.
\end{Lemma}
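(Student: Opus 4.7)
The plan is to proceed in two phases: first construct a pair $(m, \ell)$ satisfying (1)--(3) from an Ostrowski-type representation of $\tilde\theta := \theta - 1/2$, then refine the choice of $m$ to meet (4). Write $q_n\tilde\theta = K + \eta$ and $q_n\alpha = p_n + \epsilon_n$ with $K, p_n \in \Z$, $\gcd(p_n, q_n) = 1$, $|\epsilon_n| = \|q_n\alpha\|$, and crucially $|\eta| = \|q_n\tilde\theta\| = e^{\delta_n q_n}\|q_n\alpha\|$ by (\ref{a54}).

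For conditions (1)--(3), let $m_0 \in [-q_n/2, q_n/2)$ be the unique residue with $K + m_0 p_n \equiv 0 \pmod{q_n}$. A direct substitution yields $\tilde\theta + m_0\alpha = (\text{integer}) + (\eta + m_0\epsilon_n)/q_n$, so $\|\tilde\theta + m_0\alpha\| \leq (|\eta| + q_n|\epsilon_n|/2)/q_n$. Since replacing $m_0$ by $m_0 + \ell q_n$ only adds $\ell \epsilon_n$ modulo $\Z$, the choice $\ell := -\mathrm{round}((\tilde\theta + m_0\alpha)/\epsilon_n)$ produces $\|\tilde\theta + (m_0 + \ell q_n)\alpha\| \leq |\epsilon_n|/2 = \|q_n\alpha\|/2$, which verifies (3) with room to spare. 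Moreover $|\ell| \leq \|\tilde\theta + m_0\alpha\|/|\epsilon_n| + 1/2 \leq e^{\delta_n q_n}/q_n + 1$, well within the bound in (2).

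For condition (4), $m_0$ need not minimize $\|\tilde\theta + m\alpha\|$, so a refinement is required. When $a_{n+1} \leq 3$, I would replace $m_0$ by $m^* \in [-q_n/2, q_n/2)$ minimizing $\|\tilde\theta + m\alpha\|$ and rerun the $\ell$-selection; then (4ii) holds with constant $1$, and (2)--(3) persist since $\|\tilde\theta + m^*\alpha\| \leq \|\tilde\theta + m_0\alpha\|$. When $a_{n+1} \geq 4$, the more delicate (4i) requires uniform near-minimality across $|j| \leq a_{n+1}/6$. The key observation is that $|j \epsilon_n| \leq (a_{n+1}/6)\|q_n\alpha\| \leq \|q_{n-1}\alpha\|/6$ is small relative to the gap structure of $\{k\alpha \bmod 1\}_{|k| < q_n}$, which by the three-distance theorem and the recurrence $\|q_{n-1}\alpha\| = a_{n+1}\|q_n\alpha\| + \|q_{n+1}\alpha\|$ has all gaps comparable to $\|q_{n-1}\alpha\|$; consequently the local minimizer over $|k| < q_n$ of $\|\tilde\theta + (m + jq_n + k)\alpha\|$ drifts only slightly with $j$, and choosing $m$ close to a global minimizer propagates the factor-$20$ inequality through every $|j| \leq a_{n+1}/6$.

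The main obstacle is condition (4i). The geometric picture---that near-minimality at $j = 0$ propagates to all small $j$---is clear, but quantifying it requires careful bookkeeping of how the Ostrowski expansion of $\tilde\theta$ against $\alpha$ interacts with the potentially large $\ell q_n$-shift used to meet (3). In particular, when $e^{\delta_n q_n} \gg q_n$ this shift is sizable, and one must verify that it does not spoil the near-minimality structure already arranged at scale $q_n$; the slack factor $20$ together with the restriction $|j| \leq a_{n+1}/6$ (rather than $|j| \leq a_{n+1}/2$) is precisely the cushion that allows the argument to close.
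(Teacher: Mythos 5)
The paper offers no proof of this lemma at all --- it is imported verbatim from \cite{ref81} --- so the only question is whether your argument stands on its own. Your Phase 1 is fine: the congruence choice of $m_0$ and the rounding choice of $\ell$ do give (1)--(3), with $|\ell|\le e^{\delta_n q_n}/q_n+1$, and the $a_{n+1}\le 3$ case of (4) is also fine, since passing to the minimizer $m^*$ over $[-q_n/2,q_n/2)$ only decreases $\|\theta-\tfrac12+m\alpha\|$ and so the $\ell$-selection and the bound (2) survive. The genuine gap is (4i), which you yourself leave as a heuristic; but (4i) is the anti-resonance statement that gives the lemma its content, and the ``drift of the local minimizer'' picture does not close, because (3) and (4i) are coupled through the same $m$. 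Once (3) holds, write $\epsilon_0$ for the representative of $\theta-\tfrac12+(m+\ell q_n)\alpha$ with $|\epsilon_0|\le(\tfrac12+\tfrac1{2q_n})\|q_n\alpha\|$; then for $x_j=\theta-\tfrac12+(m+jq_n)\alpha$ one has $\|x_j\|=|j-\ell|\,\|q_n\alpha\|+O(\|q_n\alpha\|)$, while for $0<|k|<q_n$, $\|x_j+k\alpha\|=\|\epsilon_0+(k+(j-\ell)q_n)\alpha\|$. Since $e^{\delta_n q_n}=\|q_n(\theta-\tfrac12)\|/\|q_n\alpha\|$ can be of order $q_{n+1}$, your $\ell$ can be of order $a_{n+1}$, and then for some $|j|\le a_{n+1}/6$ and $|k|<q_n$ the integer $k+(j-\ell)q_n$ can equal $\pm q_{n+1}$; in that case the right-hand side of (4i) is at most about $\|q_{n+1}\alpha\|+\tfrac12\|q_n\alpha\|$ while the left-hand side is of order $a_{n+1}\|q_n\alpha\|$, and the factor $20$ is hopeless for large $a_{n+1}$. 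So near-minimality at $j=0$ does not propagate by a soft three-distance argument; excluding (or circumventing, by a different choice of $(m,\ell)$) exactly these coincidences is the content of the proof in \cite{ref81}, and it is missing from your sketch.

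A secondary issue in the same case: you propose to move $m$ ``close to a global minimizer'' to arrange (4i), but any change of $m$ forces a re-run of the $\ell$-selection, and the bound (2) only persists if the new $m$ still satisfies $\|\theta-\tfrac12+m\alpha\|\lesssim e^{\delta_n q_n}\|q_n\alpha\|/q_n+\|q_n\alpha\|$. You verify this compatibility only for $a_{n+1}\le 3$, i.e.\ precisely where it is easy; in the $a_{n+1}\ge 4$ case, where both (2)--(3) and (4i) must be balanced simultaneously, the argument is not carried out.
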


Define 
\begin{equation}\label{5.27}
 c_{n, \ell}:=\left|\cos \left(\pi \theta_{m_{n}+\ell q_{n}}\right)\right|.
\end{equation}
As a corollary of  Lemma \ref{min}, we have the following Lemma. 

\begin{Lemma}\label{cos2}
 Let $I=\left[\ell_{1}, \ell_{2}\right] \subset \mathbb{Z}$ be such that there exists $j \in \mathbb{Z},|j|<q_{n+1} /\left(6 q_{n}\right)$, that satisfies
$$
I \subset\left[m_{n}+j q_{n}+1, m_{n}+(j+1) q_{n}-1\right] .
$$
Then for $n>N(\varepsilon)$ large enough, we have
$$
\prod_{\ell \in I}\left|\cos \left(\pi \theta_{\ell}\right)\right| \geq e^{-\varepsilon\left(2 q_{n}-|I|\right)} e^{-(\ln 2)|I|}.
$$
Furthermore, for $\beta_{n} \geq \delta_{n}+200 \varepsilon$, for $|\ell| \leq q_{n+1} /\left(6 q_{n}\right)$ and some absolute constant $0<C<8$
 \begin{equation}\label{5.38}
c_{n, \ell} \leq C \max \left(|\ell|, e^{\delta_{n} q_{n}}, 1\right) e^{-\beta_{n} q_{n}}.
\end{equation}
\end{Lemma}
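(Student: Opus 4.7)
The proof of Lemma \ref{cos2} splits into two claims; the plan is to handle each using the $\theta$-minimal condition from Lemma \ref{min} together with previously established tools.

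For the lower bound on $\prod_{\ell \in I}|\cos\pi\theta_\ell|$, I would first verify that the minimum of $|\cos\pi(\theta + (m_n + jq_n + k)\alpha)|$ over $k \in [0, q_n - 1]$ is attained at $k = 0$. Condition (3) of $\theta$-minimality gives $\|\theta - \tfrac{1}{2} + m_n\alpha\| < (\tfrac{1}{2} + \tfrac{1}{2q_n})\|q_n\alpha\|$, and $|j| < q_{n+1}/(6q_n)$ forces $|jq_n\alpha| \leq 1/(6q_n)$; both are comfortably smaller than $\|q_{n-1}\alpha\|$, which is the minimum of $\|k\alpha\|$ over $1 \leq k \leq q_n - 1$. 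Lemma \ref{av31} then yields $\sum_{\ell \in B}\ln|\cos\pi\theta_\ell| = -(q_n - 1)\ln 2 + O(\log q_n)$ where $B = [m_n + jq_n + 1, m_n + (j+1)q_n - 1]$. I would then decompose $B \setminus I = J_1 \cup J_2$ (the two sub-intervals flanking $I$) and apply Corollary \ref{cos} to each $J_i$, bounding the infimum trivially by $1$, to get $\sum_{J_i}\ln|\cos\pi\theta_\ell| + |J_i|\ln 2 \leq \varepsilon|J_i| + O(1)$. Writing $\sum_I = \sum_B - \sum_{J_1} - \sum_{J_2}$ and collecting terms gives $\sum_I\ln|\cos\pi\theta_\ell| \geq -|I|\ln 2 - \varepsilon(q_n - 1 - |I|) - O(\log q_n) \geq -|I|\ln 2 - \varepsilon(2q_n - |I|)$ for $n$ large enough to absorb the logarithmic error.

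For the upper bound on $c_{n,\ell}$, the plan is direct. Decomposing $\theta - \tfrac{1}{2} + (m_n + \ell q_n)\alpha = [\theta - \tfrac{1}{2} + (m_n + \ell_n q_n)\alpha] + (\ell - \ell_n)q_n\alpha$ and using $|\sin\pi x| \leq \pi|x|$, condition (3) bounds the first bracket by $(\tfrac{1}{2} + \tfrac{1}{2q_n})\|q_n\alpha\|$ and the second by $|\ell - \ell_n|\|q_n\alpha\|$. Together with $\|q_n\alpha\| \leq 1/q_{n+1} = e^{-\beta_n q_n}$ and the bound $|\ell_n| \leq e^{\delta_n q_n}/q_n + O(1)$ from condition (2) of $\theta$-minimality, this yields $c_{n,\ell} \leq \pi(|\ell| + |\ell_n| + 1)e^{-\beta_n q_n} \leq 2\pi \max(|\ell|, e^{\delta_n q_n}, 1)e^{-\beta_n q_n}$, giving the claim with $C = 2\pi < 8$. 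The hypothesis $\beta_n \geq \delta_n + 200\varepsilon$ is invoked only to ensure the right-hand side is exponentially small, i.e., $e^{(\delta_n - \beta_n)q_n} \leq e^{-200\varepsilon q_n}$.

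The main obstacle is the preliminary calculation in part (a): verifying that Lemma \ref{av31} applies with $j_0 = 0$ after the shift, which amounts to showing that the minimum of cosines inside the length-$q_n$ window really sits at $m_n + jq_n$. This rests on the triangle inequality $\|\theta - \tfrac{1}{2} + (m_n + jq_n + k)\alpha\| \geq \|k\alpha\| - \|\theta - \tfrac{1}{2} + m_n\alpha\| - |jq_n\alpha|$, in which the last two terms (of sizes $\sim \|q_n\alpha\|$ and $\leq 1/(6q_n)$ respectively) must be dominated by $\|q_{n-1}\alpha\| \geq 1/(2q_n)$. Once this is in place, the rest is a clean combination of Lemma \ref{av31} (to control the block sum) and Corollary \ref{cos} (to upper-bound the complementary sub-interval products).
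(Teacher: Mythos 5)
The paper does not actually reprove this lemma (it defers to Corollaries 5.5 and 5.6 of \cite{ref81}), so your proposal must stand on its own, and it has a genuine gap precisely at the step you flag as the main obstacle in part (a). You quote condition (3) of $\theta$-minimality as a bound on $\|\theta-\tfrac12+m_n\alpha\|$; it is in fact a bound on $\|\theta-\tfrac12+(m_n+\ell_n q_n)\alpha\|$, and condition (2) allows $|\ell_n|$ to be as large as roughly $e^{\delta_n q_n}/q_n$. Consequently $\|\theta-\tfrac12+(m_n+jq_n)\alpha\|$ is only controlled by about $\|q_n(\theta-\tfrac12)\|/q_n+O(\|q_n\alpha\|)$, which can be of order $1/q_n$ — the same order as $\|q_{n-1}\alpha\|\in[1/(2q_n),1/q_n]$ — and the additional drift $|j|\,\|q_n\alpha\|\le 1/(6q_n)$ is itself a sizable fraction of that. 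So your triangle inequality does not show that the minimal cosine in the length-$q_n$ window sits at $k=0$, and in general it does not sit there: this is exactly why the definition of $\theta$-minimality only demands near-minimality up to the factor $20$ (condition (4)), a condition your argument never uses. The repair is to apply Lemma \ref{av31} to the window $W=[m_n+jq_n,\,m_n+(j+1)q_n-1]$ with its true minimizer $j_0$, and, when $j_0$ lies strictly inside $W$, exchange it against the endpoint $m_n+jq_n$ using condition (4) (for $a_{n+1}\le 3$ the hypothesis $|j|<q_{n+1}/(6q_n)$ forces $j=0$ and (4)(ii) applies); since $2\|x\|\le|\sin\pi x|\le\pi\|x\|$, this swap costs only an absolute additive constant (at most $\ln(10\pi)$) in the log-sum over $B=[m_n+jq_n+1,\,m_n+(j+1)q_n-1]$, which is absorbed into $\varepsilon q_n$. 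With that correction, your bookkeeping — writing $\sum_I=\sum_B-\sum_{J_1}-\sum_{J_2}$, applying Corollary \ref{cos} to the flanking intervals with the infimum bounded by $1$, and absorbing the $O(\ln q_n)$ errors — is sound and does yield the stated lower bound; note that simply dropping terms from $B$ would not suffice when $|I|\ll q_n$, so this decomposition is genuinely needed.

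Part (b) is essentially correct: splitting $\theta-\tfrac12+(m_n+\ell q_n)\alpha$ through the point $m_n+\ell_n q_n$, using condition (3) for the bracket, condition (2) for $|\ell_n|$, $|\cos\pi x|\le\pi\|x-\tfrac12\|$, and $\|q_n\alpha\|\le 1/q_{n+1}=e^{-\beta_n q_n}$ gives the claimed form of the bound, and the hypothesis $\beta_n\ge\delta_n+200\varepsilon$ plays no role in the derivation itself. Your constant bookkeeping is optimistic, though: the crude count gives roughly $\pi\bigl(|\ell|+e^{\delta_n q_n}/q_n+3/2+o(1)\bigr)e^{-\beta_n q_n}$, i.e.\ a constant nearer $3\pi$ than $2\pi$, so as written you do not quite recover the stated $C<8$; this is a minor numerical point, but if you want the lemma verbatim you need either a slightly sharper comparison of $|\ell|+|\ell_n|+1$ with $\max(|\ell|,e^{\delta_n q_n},1)$ or to fall back on the constant obtained in \cite{ref81}.
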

\begin{proof}
 One can consult  Corollary 5.5 and Corollary 5.6 of \cite{ref81} for details.
\end{proof}

Now, it is time to estimating $\tilde{P}_{k}$. 
\begin{Lemma}\label{up}
 Let $I_{1}, I_{2}$ be two disjoint intervals in $\mathbb{Z}$ such that $\left|I_{1} \cup I_{2}\right|=k$  and $\{\theta+\ell \alpha\}_{\ell \in I_{1} \cup I_{2}}$ is $\varepsilon_{k}$-uniform, then exists $x_{1}  \in I_{1} \cup I_{2} $ such that
$$
\left|\tilde{P}_{2 k-1}\left(\theta+x_{1} \alpha\right)\right| \geq e^{2 k\left(\tilde{L}-2 \varepsilon_{k}\right)}.
$$
\end{Lemma}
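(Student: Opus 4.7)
The plan is to argue by contradiction, using a Lagrange interpolation representation together with the average lower bound from Lemma \ref{up0}. Suppose, toward a contradiction, that $|\tilde P_{2k-1}(\theta + x\alpha)| < e^{2k(\tilde L - 2\varepsilon_k)}$ for every $x \in I_1 \cup I_2$. By (\ref{04.1}), the function $\tilde P_{2k-1}(\cdot)/(\cos\pi\cdot)^k$ is a polynomial of degree $k-1$ in $\tan\pi\theta$, hence is uniquely determined by its values at any $k$ distinct nodes. Taking the $k$ nodes $\{\theta + \ell\alpha\}_{\ell \in I_1 \cup I_2}$ supplied by the hypothesis, the Lagrange formula (in the style of (\ref{319})) gives, for every $\theta' \in \mathbb{T}$,
\[
\tilde P_{2k-1}(\theta') \;=\; \sum_{i \in I_1 \cup I_2} \tilde P_{2k-1}(\theta + i\alpha)\prod_{j \in I_1 \cup I_2,\; j \neq i} \frac{\sin \pi(\theta' - (\theta + j\alpha))}{\sin \pi((\theta + i\alpha) - (\theta + j\alpha))}.
\]

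Next I would apply the $\varepsilon_k$-uniformity hypothesis to bound each of the $k$ Lagrange basis coefficients in absolute value by $e^{(k-1)\varepsilon_k}$. Combined with the hypothetical smallness of the $|\tilde P_{2k-1}(\theta+i\alpha)|$, this produces the uniform bound
\[
\sup_{\theta' \in \mathbb{T}} |\tilde P_{2k-1}(\theta')| \;\leq\; k \cdot e^{(k-1)\varepsilon_k} \cdot e^{2k(\tilde L - 2\varepsilon_k)} \;\leq\; k\cdot e^{2k\tilde L - 3k\varepsilon_k}.
\]
Taking logarithms, integrating in $\theta'$ over $[0,1]$, and dividing by $2k-1$ then yields
\[
\frac{1}{2k-1} \int_0^1 \ln|\tilde P_{2k-1}(\theta')|\, d\theta' \;\leq\; \frac{\ln k + 2k\tilde L - 3k\varepsilon_k}{2k-1},
\]
which, for $k$ sufficiently large, is strictly smaller than $\tilde L$. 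This directly contradicts the lower bound $\frac{1}{2k-1}\int_0^1 \ln|\tilde P_{2k-1}(\theta')|\,d\theta' \geq \tilde L$ furnished by Lemma \ref{up0}. Hence at least one node $x_1 \in I_1 \cup I_2$ must satisfy the desired lower bound, establishing the lemma.

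The only real obstacle is arithmetic bookkeeping: one must check that the surplus $3k\varepsilon_k$ gained by combining the Lagrange coefficient bound with the hypothetical contradiction bound strictly dominates the small loss $\frac{\tilde L + \ln k}{2k-1}$ coming from the mismatch between the polynomial degree $k-1$ and the normalization $2k-1$ in the averaged lower bound. This reduces to $3k\varepsilon_k > \tilde L + \ln k$, which is satisfied as soon as $k\varepsilon_k$ exceeds an absolute constant, precisely the regime in which this lemma will be applied in the subsequent localization argument.
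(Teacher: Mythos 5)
Your argument is correct and is essentially the paper's intended proof: the authors dispose of this lemma in one line as a direct consequence of the Lagrange interpolation formula and (\ref{av4}), i.e.\ interpolate $\tilde P_{2k-1}$ at the $k$ uniform nodes as in (\ref{319}), bound each basis product by $e^{(k-1)\varepsilon_k}$ via the uniformity hypothesis, and contradict the averaged lower bound of Lemma \ref{up0}. Your closing caveat $3k\varepsilon_k>\tilde L+\ln k$ is exactly the harmless largeness condition implicit in the paper's applications (Lemma \ref{a4.1} and Lemma \ref{resonunif}), where $\varepsilon_k$ is bounded below by a fixed $\varepsilon>0$ and $k\sim q_n\to\infty$.
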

\begin{proof}
The result is direct consequence of the Lagrange interpolation formula and (\ref{av4}).  We omit the
details.
\end{proof}

Usually, the numerators of Green’s functions can be bounded uniformly by (\ref{q1}). Using the strategy in \cite{ref84} and  the inequality above one can prove $\phi(y)$ exponential decay. However this does not work for $\delta<2L(E)<\beta$, so one has
to look for an additional decay, which was first introduced in \cite{ref81}.  The following lemmas on $\tilde{P}_{k}$ are  essential for proving Anderson localization in the sharp regime  $\left\{E: L(E)>\delta(\alpha, \theta)/2\right\}$.

\begin{Lemma}\label{7.4}( Corollary 7.4, \cite{ref81})
 For $|\ell|<2 q_{n+1} /\left(3 q_{n}\right)$, assume $k<2 q_{n}$ and
$$
y \leq \ell q_{n}+m_{n}, \text { and } y+k-1 \geq(\ell+1) q_{n}+m_{n}-1,
$$
we then have
$$
\left|\tilde{P}_{2k-1}\left(\theta_{y}\right)\right| \leq g_{k, \ell} e^{(2k-1) \tilde{L}}.
$$
where
\begin{equation}\label{7.9}
g_{k, \ell}:= \begin{cases}\max \left(e^{\delta_{n} q_{n}},|\ell|, 1\right) e^{-\left(\beta_{n}-6 \varepsilon\right) q_{n}} & \text { if } \beta_{n} \geq \delta_{n}+200 \varepsilon \\ e^{2 \varepsilon k} & \text { if } \beta_{n}<\delta_{n}+200 \varepsilon\end{cases}.
\end{equation}
\end{Lemma}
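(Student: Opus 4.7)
The proof strategy adapts Corollary 7.4 of \cite{ref81} to the mosaic setting. The central tool is the Lagrange interpolation representation (\ref{319}) of $\tilde P_{2k-1}(\theta)$, which is legitimate because (\ref{04.1}) exhibits $\tilde P_{2k-1}(\theta)/(\cos\pi\theta)^k$ as a polynomial of degree $k-1$ in $\tan\pi\theta$. The hypothesis on $y$ fixes the geometric picture: the interval $[y, y+k-1]$ contains the full translate $[m_n+\ell q_n, m_n+(\ell+1)q_n-1]$ of length $q_n$, and in particular it contains the resonant site $y_\ast := m_n + \ell q_n$ at which $c_{n,\ell} = |\cos\pi\theta_{y_\ast}|$ is quantitatively small by (\ref{5.38}) whenever $\beta_n \geq \delta_n + 200\varepsilon$.

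My plan is to pick $k$ interpolation nodes $\theta_{y+t_0},\ldots,\theta_{y+t_{k-1}}$ by taking one representative from each block $[m_n+jq_n,m_n+(j+1)q_n-1]$ that meets $[y,y+k-1]$, shifted away from the corresponding resonant endpoint so that the resulting rotations form an $\varepsilon$-uniform set in the sense of (\ref{41}); this can be done via equidistribution of the $\alpha$-rotation at scale $q_n$ combined with Lemma \ref{av31}. At each such node the bound (\ref{q1}) gives $|\tilde P_{2k-1}(\theta_{y+t_i})| \leq e^{(2k-1)(\tilde L+\varepsilon)}$, and the $\varepsilon$-uniformity bounds $\prod_{l \neq i}|\sin\pi(\theta_y-\theta_{y+t_l})|/|\sin\pi(\theta_{y+t_i}-\theta_{y+t_l})|$ by $e^{k\varepsilon}$. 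Summing the $k$ terms of (\ref{319}) immediately yields $|\tilde P_{2k-1}(\theta_y)| \leq e^{(2k-1)\tilde L + 2\varepsilon k}$, which is precisely the second branch of (\ref{7.9}) and covers the regime $\beta_n < \delta_n + 200\varepsilon$.

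The main obstacle, and the real content of the first branch of (\ref{7.9}), is extracting the additional decay factor $\max(e^{\delta_n q_n},|\ell|,1)\,e^{-(\beta_n-6\varepsilon)q_n}$ when $\beta_n \geq \delta_n + 200\varepsilon$. The refinement of \cite{ref81} is to include $y_\ast$ itself among the interpolation nodes and to exploit that, by (\ref{04.1}) combined with the behavior $|\tan\pi\theta_{y_\ast}| \sim c_{n,\ell}^{-1}$, the value $|\tilde P_{2k-1}(\theta_{y_\ast})|$ already carries an explicit saved factor of $c_{n,\ell}$ relative to the trivial bound $e^{(2k-1)\tilde L}$; this is quantified by (\ref{5.38}). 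The delicate point is that placing $y_\ast$ among the nodes worsens the $\varepsilon$-uniformity (\ref{41}) since $\theta_{y_\ast}$ is unusually close to $1/2$, and the compensation relies on the lower bound for cosine products supplied by Lemma \ref{cos2}. In our mosaic setup the only adjustment from \cite{ref81} is the factor-of-two rescaling reflecting the 2-step cocycle, so the combinatorics of the node selection and the bookkeeping of cosine factors transfer directly.
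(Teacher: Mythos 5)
There is a genuine gap, and it sits exactly where the lemma has its real content. Note first that the paper does not prove this statement at all: it is imported verbatim (up to the factor-of-two bookkeeping for the $2$-step blocks) as Corollary 7.4 of \cite{ref81}, whose proof runs through an anti-resonance expansion of the determinant, not through interpolation. Your second branch is fine but vacuous as an argument: once you bound $|\tilde P_{2k-1}(\theta_{y+t_i})|\le e^{(2k-1)(\tilde L+\varepsilon)}$ at every node, the Lagrange sum only returns $e^{(2k-1)\tilde L+2\varepsilon k}$, which is nothing more than the uniform bound (\ref{q1}) applied directly at $\theta_y$; no interpolation is needed there. The problem is the first branch. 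Lagrange interpolation (\ref{319}) expresses $\tilde P_{2k-1}(\theta_y)$ as a \emph{sum} over nodes, so an upper bound on it is dominated by the largest summand; including the resonant site $y_\ast$ among the nodes, where $|\tilde P_{2k-1}(\theta_{y_\ast})|$ may carry a factor $c_{n,\ell}$, only makes \emph{that one} summand small, while the remaining $k-1$ summands are still of size $e^{(2k-1)\tilde L}$ times interpolation ratios bounded below essentially by $1$ at $\theta=\theta_y$. Interpolation with $\varepsilon$-uniform nodes is the mechanism for \emph{lower} bounds (it is exactly what Lemma \ref{up} does); it cannot transport a gain at a single node into a gain at the target point, so your scheme cannot produce the crucial factor $\max(e^{\delta_n q_n},|\ell|,1)\,e^{-(\beta_n-6\varepsilon)q_n}$.

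The actual source of that factor is structural, not interpolative: under the hypothesis $y\le \ell q_n+m_n$ and $y+k-1\ge(\ell+1)q_n+m_n-1$, the anti-resonant index lies strictly inside the interval, so the cosine $c_{n,\ell}=|\cos(\pi\theta_{m_n+\ell q_n})|$ appears explicitly in the regularization defining $\tilde P_{2k-1}(\theta_y)$ (equivalently, one expands the tridiagonal determinant at the resonant site so that the tangent there is paired with its own cosine, leaving the bare small cosine multiplying the complementary sub-determinants). The smallness is then harvested from (\ref{5.38}) of Lemma \ref{cos2}, namely $c_{n,\ell}\le C\max(|\ell|,e^{\delta_n q_n},1)e^{-\beta_n q_n}$, while the complementary regularized sub-determinants are controlled by the uniform bounds (\ref{q1})--(\ref{q2}) and the trigonometric product estimates of Lemma \ref{av31} and Corollary \ref{cos}; the $e^{6\varepsilon q_n}$ slack absorbs the boundary and constant losses. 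If you want a self-contained proof rather than the citation, that determinant-splitting argument (Section 7 of \cite{ref81}), adapted to the even/odd structure via (\ref{dedai}) and (\ref{tr1}), is the route to take; the interpolation step should be removed.
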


\subsection{Some useful Propositions.}

Choose a value (from multiple possible values) of $\tau_{n}$ such that
$$
\tau_{n} \in\left(\frac{\varepsilon}{2 \max (L, 1)}, \frac{\varepsilon}{\max (L, 1)}\right]
$$
and $\tau_{n} q_{n} \in \mathbb{Z}$. Define $b_{n}=\tau_{n} q_{n}$. For any $m_{n} \in \mathbb{Z}$ we call $m_{n}$ resonant (at the scale of $q_{n}$ ) if $\operatorname{dist}\left(m_{n}, q_{n} \mathbb{Z}\right) \leq b_{n}$, otherwise we call $ y$ non-resonant.
we call $y$ even-resonant (at the scale of $q_{n}$ ) if $\operatorname{dist}\left(y, 2q_{n} \mathbb{Z}\right) \leq 2b_{n}$, otherwise we call $ y$ is  not even-resonant.
We introduce some notations:
$$
\left\{\begin{array}{l}
I^{-}:=\left[2\ell q_{n}+2b_{n}, 2(\ell q_{n}+m_{n})-1\right] \\
I^{+}:=\left[2(\ell q_{n}+m_{n})+1,2(\ell+1) q_{n}-2b_{n}]\right. \\
\left|\phi\left(x_{0}^{-}\right)\right|:=\max _{y \in I^{-}}|\phi(y)|, \\
\left|\phi\left(x_{0}^{+}\right)\right|:=\max _{y \in I^{+}}|\phi(y)|, \\
R_{\ell}:=\left[2\ell q_{n}-2b_{n}, 2\ell q_{n}+2b_{n}\right] \\
r_{\ell}:=\max _{k \in R_{\ell}}|\phi(k)| .
\end{array}\right.
$$

In the following, we distinguish the proof according to $m_{n}$ is resonant or not. And  each part  can be divided into two cases depending on $y$ is even-resonant or not.

\begin{proposition}\label{pr1}
Assume $\operatorname{dist}\left(m_{n}, q_{n}\mathbb{Z}\right)>b_{n}$,

(1)If $y$ is not even resonance, we have:
 for $y=2(\ell q_{n}+m_{n})$,
\begin{equation*}
|\phi(y)| \leq e^{29 \varepsilon q_{n}} c_{n, \ell} \max \left(e^{-\left(y-2\ell q_{n}\right) L} r_{\ell}, e^{-\left(2(\ell+1) q_{n}-y\right) L} r_{\ell+1}\right).
\end{equation*}
For any $y \in  I^{-}$,
\begin{equation*}
|\phi(y)| \leq e^{29 \varepsilon q_{n}} \max \left(e^{-\left(y-2\ell q_{n}\right) L} r_{\ell}, c_{n, \ell} e^{-\left(2(\ell+1) q_{n}-y\right) L} r_{\ell+1}\right).
\end{equation*}
For any $y \in I^{+}$,
\begin{equation*}
|\phi(y)| \leq e^{29 \varepsilon q_{n}} \max \left(c_{n, \ell} e^{-\left(y-2\ell q_{n}\right) L} r_{\ell}, e^{-\left(2(\ell+1) q_{n}-y\right) L} r_{\ell+1}\right).
\end{equation*}

(2)If  $y$ is even-resonance,  we have:
for any $\ell \neq 0,|\ell| \leq q_{n+1} /\left(6 q_{n}\right)$,
\begin{equation}\label{r82}
r_{\ell} \leq  \frac{e^{-( 2L-55 \varepsilon) q_{n}}}{\max (|\ell|, 1)} \max \left(r_{\ell-1}, r_{\ell+1}\right) \times \begin{cases}\max \left(|\ell|, e^{\delta_{n} q_{n}}\right), & \text { if } \beta_{n} \geq \delta_{n}+200 \varepsilon \\ e^{\beta_{n} q_{n}}, & \text { if } \beta_{n}<\delta_{n}+200 \varepsilon\end{cases}.
\end{equation}
\end{proposition}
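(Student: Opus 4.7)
\textbf{Proof Plan for Proposition \ref{pr1}.} The overall strategy is the block‐resolvent / Lagrange‐interpolation method used in \cite{ref81,ref84}: for each $y$ in the relevant interval, represent $\phi(y)$ through the Poisson-type formula \eqref{green}/\eqref{green2} applied to a carefully chosen block $[x_1,x_2]$, and then bound the numerator, denominator, and cosine products separately. The parity of $y,x_1,x_2$ determines whether the factors in \eqref{green} involve $\tilde{P}_k$ or $\tilde{Q}_k$; in both cases the size estimates are structurally identical, so I will describe the argument for $\tilde{P}_k$.

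\medskip
\emph{Setup for Part (1).} Fix $\ell$ with $|\ell|\le q_{n+1}/(6q_n)$ and take $y$ in $I^-\cup\{2(\ell q_n+m_n)\}\cup I^+$ with $y$ not even-resonant. Choose integers $n_1,n_2$ with $x_1=2n_1+1$ near $2\ell q_n+2b_n$ and $x_2=2n_2+1$ near $2(\ell+1)q_n-2b_n$, so that $x_1-1\in R_\ell$ and $x_2+1\in R_{\ell+1}$. Set $k=(x_2-x_1)/2$, a number of size essentially $q_n$. The first task is to verify that the collection of phases $\{\theta+j\alpha\}_{j\in [n_1,n_2]\setminus\{n_y\}}$, where $n_y$ is the index producing the minimal cosine $c_{n,\ell}$, is $\varepsilon$-uniform in the sense of \eqref{41}. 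This follows from $\theta$-minimality of $(m_n,\ell_n)$ (Lemma \ref{min}) combined with three-distance-type reasoning exactly as in the corresponding lemma of \cite{ref81}; it is the place where the non-resonance hypothesis $\mathrm{dist}(m_n,q_n\mathbb{Z})>b_n$ enters, since otherwise two phases could collide and destroy uniformity.

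\medskip
\emph{Assembling the bound in Part (1).} Granted $\varepsilon$-uniformity, Lemma \ref{up} supplies a lower bound $|\tilde{P}_{2k-1}(\theta_{n_1})|\ge e^{2k(\tilde L-2\varepsilon)}$ (shifting the base point if necessary so that it is $x_1$). The numerators $|\tilde{P}_{x_2-y}(\theta_{n_y})|$ and $|\tilde{P}_{y-x_1}(\theta_{n_1})|$ are controlled above by \eqref{q1}. The cosine products split as follows: the left product $\prod_{k=n_1}^{n_y-1}|\cos\pi\theta_k|$ is bounded using Corollary \ref{cos} and either contains the minimal cosine $c_{n,\ell}$ or it does not, depending on whether $n_y$ lies to the right of $y$ or not; the same dichotomy applies to the right product. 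Since $y$ is not even-resonant, the relevant factors $e^{-\ln 2\cdot(\cdot)}$ combine with $e^{\tilde L\cdot(\cdot)}$ to produce exactly $e^{-L\cdot(\cdot)}$ decay (recalling $\tilde L=L-\tfrac12\ln 2$, and that an even number of half-steps is being taken). Finally $|\phi(x_1-1)|\le r_\ell$ and $|\phi(x_2+1)|\le r_{\ell+1}$ by definition of $r_\bullet$. Inserting these into \eqref{green} yields the three asserted inequalities, with the extra factor $c_{n,\ell}$ appearing on whichever side the index $n_y$ sits; the exponent $29\varepsilon q_n$ collects the polynomial losses from Lemma \ref{av31}, Corollary \ref{cos}, the $\varepsilon$-uniformity constant, and the upper bounds \eqref{q1}.

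\medskip
\emph{Setup for Part (2).} Now $y$ lies in the even-resonant zone around $2\ell q_n$. Pick $y_\ell\in R_\ell$ achieving $r_\ell=|\phi(y_\ell)|$. Expand $\phi(y_\ell)$ via \eqref{green}/\eqref{green2} using a block $[x_1,x_2]$ whose endpoints are in the non-resonant strip around $2(\ell-1)q_n+2m_n$ (say $x_1-1\in R_{\ell-1}$) and around $2\ell q_n +2m_n$, so that $x_2+1$ lies in the non-resonant zone inside $[R_\ell,R_{\ell+1}]$ at distance of order $q_n$ from $y_\ell$. Shift by one $q_n$-period in the obvious way to get a second expansion reaching $R_{\ell+1}$; take the better of the two. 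The lower bound on the denominator again comes from Lemma \ref{up}, the numerator from Lemma \ref{7.4} (here is where the extra factor $g_{k,\ell}$ in \eqref{7.9} enters, producing the case split in \eqref{r82}), and the cosine product is estimated using Corollary \ref{cos} and Lemma \ref{cos2}, which forces the singular factor $c_{n,\ell}$ controlled by \eqref{5.38}. Inserting $|\phi(x_1-1)|\le r_{\ell-1}$, $|\phi(x_2+1)|\le r_{\ell+1}$ and combining gives \eqref{r82}, with the $-55\varepsilon q_n$ absorbing all $\varepsilon$-losses plus the $29\varepsilon q_n$ and the slack used to transfer between $r_\ell$ and $|\phi(y_\ell)|$.

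\medskip
\emph{Main obstacle.} The delicate step is establishing $\varepsilon$-uniformity of the chosen phase set in each of the sub-cases. The choice of endpoints must simultaneously avoid the resonance near $2(\ell q_n+m_n)$ (so that only one small cosine $c_{n,\ell}$ appears, on the predictable side), keep $x_1-1,x_2+1$ in the proper $R_\bullet$ blocks, and respect the parity needed so that \eqref{green} rather than \eqref{green2} applies (or vice versa) — all while the length $x_2-x_1$ is forced to be of order $q_n$. Getting the signs and parities right, together with the three-distance geometry of the orbit $\{\theta+k\alpha\}$, is the real work; once uniformity is established, the multiplicative bookkeeping producing the stated exponents is routine from Lemmas \ref{av31}, \ref{up0}, \ref{up}, \ref{cos2}, \ref{7.4} and Corollary \ref{cos}.
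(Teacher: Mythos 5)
Your toolkit (Green's function expansion, Lagrange interpolation and uniformity, cosine products, the $c_{n,\ell}$ bookkeeping) is the right one, but the step you wave through is exactly where the scheme breaks, in both parts: the lower bound on the denominator of the Green's function. In part (1) you take a single block $[x_1,x_2]$ of length comparable to $2q_n$ with \emph{prescribed} endpoints near $2\ell q_n+2b_n$ and $2(\ell+1)q_n-2b_n$, and claim Lemma \ref{up} gives $|\tilde P_{2k-1}(\theta_{n_1})|\ge e^{2k(\tilde L-2\varepsilon)}$, ``shifting the base point if necessary so that it is $x_1$''. Lemma \ref{up} only asserts that \emph{some} base point in a union of two intervals works; it gives nothing at a prescribed base point, and in the regime of interest ($2L<\beta_n$) such a bound for a block of length $\sim q_n$ at a prescribed location is in general false -- this failure is the whole difficulty of the Liouville case. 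The paper's proof instead (i) chooses the sub-scale $q_{n-n_0}$ and $s$ according to $\operatorname{dist}(y,2q_n\mathbb{Z})$, (ii) pairs an interval $\tilde I_0$ near the origin with an interval $\tilde I_y$ near $y$ and proves $\varepsilon$-uniformity at that smaller scale (Lemma \ref{a4.1}), (iii) uses the normalization $|\phi(0)|\ge 1$ to \emph{exclude} good base points in $\tilde I_0$ (Lemma \ref{4.2}), so the good block contains $y$, and then (iv) iterates the expansion through such $y$-dependent small blocks up to $t_0=[23/\tau_n]+1$ times, with three exit cases (reach the left edge, reach the right edge past $\ell q_n+m_n$, or exhaust $t_0$). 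It is this iteration and case analysis that produces the position-dependent decay $e^{-(y-2\ell q_n)L}$, $e^{-(2(\ell+1)q_n-y)L}$, puts the factor $c_{n,\ell}$ on the correct side, and closes the bootstrap over $|\phi(x_0^{\pm})|$ and $|\phi(2(\ell q_n+m_n))|$. Your single full-gap block replaces none of this, and $\varepsilon$-uniformity of the phases inside one interval of length $\sim q_n$ (with the minimal-cosine index removed) is not the statement that is needed.

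Part (2) has the analogous gap. There the paper pairs $I_0$ with $I_\ell$, where the uniformity constant is only $\frac{\ln(q_{n+1}/|\ell|)}{2q_n-1}+\epsilon$ (Lemma \ref{resonunif}), so interpolation yields only the degraded bound $|\tilde P_{4q_n-1}(\theta_{x_1})|\gtrsim \frac{|\ell|}{q_{n+1}}e^{(\tilde L-2\varepsilon)(4q_n-1)}$ for some $x_1\in I_0\cup I_\ell$; the exclusion argument via $|\phi(0)|\ge 1$ (Lemma \ref{8.6}) then relocates $x_1$ to $I_\ell$, and the recursion of the Lemma \ref{8.3} variant gives $r_\ell$ in terms of $c_{n,\ell-1}r_{\ell-1}$, $c_{n,\ell}r_{\ell+1}$. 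Your substitute (``take the better of two shifted expansions'') supplies neither the degraded-but-explicit lower bound with the factor $|\ell|e^{-\beta_n q_n}$ nor the localization of the good base point; also, in this paper the case split in \eqref{r82} enters by bounding the $c_{n,j}$'s via \eqref{5.38} when $\beta_n\ge\delta_n+200\varepsilon$ and by $1$ otherwise, rather than directly through $g_{k,\ell}$ as you suggest (that factor lives inside the proof of the quoted Lemma from \cite{ref81}). So, as written, the proposal has a genuine gap at the heart of the argument in both parts.
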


\begin{proposition}\label{pr2}
Assume $\operatorname{dist}\left(m_{n}, q_{n}\mathbb{Z}\right)\leq b_{n}$,

(1)If $y$ is not even resonance, we have: 
\begin{equation*}
|\phi(y)| \leq e^{40 \varepsilon q_{n}} \max \left(e^{-\left(y-2\ell q_{n}\right) L} r_{\ell}^{+}, e^{-\left(2(\ell+1) q_{n}-y\right) L} r_{\ell+1}^{-}\right) .
\end{equation*}
where 
$$
R_{\ell}^{+}:=\left[2\ell q_{n}+2m_{n}+1, 2\ell q_{n}+2b_{n}\right] \text { and } R_{\ell}^{-}:=\left[2\ell q_{n}-2b_{n}, 2\ell q_{n}+2m_{n}-1\right],
$$
and
$$
r_{\ell}^{+}:=\max _{y \in R_{\ell}^{+}}|\phi(y)| \text { and } r_{\ell}^{-}:=\max _{y \in R_{\ell}^{-}}|\phi(y)|.
$$

(2)If $y$ is  even resonance, for any $\ell \neq 0$ such that $|\ell|<q_{n+1} /\left(6 q_{n}\right)$, we have
\begin{equation}\label{r92}
r_{\ell} \leq  \frac{e^{-(2 L-70 \varepsilon) q_{n}}}{\max (|\ell|, 1)} \max \left(r_{\ell-1}, r_{\ell+1}\right) \times\left\{\begin{array}{l}
\max \left(|\ell|, e^{\delta_{n} q_{n}}\right), \text { if } \beta_{n} \geq \delta_{n}+200 \varepsilon \\
e^{\beta_{n} q_{n}}, \text { if } \beta_{n}<\delta_{n}+200 \varepsilon
\end{array}\right.
\end{equation}
\end{proposition}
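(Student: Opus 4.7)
\textbf{Proof plan for Proposition \ref{pr2}.} The situation here parallels Proposition \ref{pr1} but with the added complication that $m_{n}$ itself lies within $b_{n}$ of $q_{n}\mathbb{Z}$, so the resonant sites $2(\ell q_{n}+m_{n})$ cluster close to the multiples $2\ell q_{n}$. The guiding strategy is the same one used throughout \cite{ref81,ref13}: given the target $y$, I will expand $\phi(y)$ via the block Green function identities (\ref{green})/(\ref{green2}) over a carefully chosen interval $[x_{1},x_{2}]$, bound the numerator $\tilde{P}$-factors by the upper estimate of Lemma \ref{7.4}, bound the denominator $\tilde{P}_{x_{2}-x_{1}}$ from below using an $\varepsilon$-uniform set (so that Lemma \ref{up} applies through the Lagrange representation (\ref{318})--(\ref{319})), and handle the cosine products by Lemma \ref{cos2}. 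The parities of $y$, $x_{1}$, $x_{2}$ must be chosen so that the formula involving $\tilde{P}_{k}$ (rather than $\tilde{Q}_{k}$) applies; this is what justifies reducing everything to the polynomial identities (\ref{04.1})--(\ref{4.1}).

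For part (1), $y$ is not even-resonant, so $\operatorname{dist}(y,2q_{n}\mathbb{Z})>2b_{n}$. I pick $x_{1}\in R_{\ell}^{+}$ and $x_{2}\in R_{\ell+1}^{-}$ of the required parity, making $|\phi(x_{1}-1)|\le r_{\ell}^{+}$, $|\phi(x_{2}+1)|\le r_{\ell+1}^{-}$, and such that the shifted phases $\{\theta+k\alpha\}$ indexed by $[x_{1}/2,x_{2}/2]$ form an $\varepsilon$-uniform set after removing the single bad index near $\ell q_{n}+m_{n}$ (this is where being in the resonant regime forces the constant $40\varepsilon$ rather than the $29\varepsilon$ of Proposition \ref{pr1}: one extra near-zero cosine at the endpoint has to be absorbed via (\ref{5.38})). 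Combining the three estimates yields the bound.

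For part (2), $y\in R_{\ell}$ is even-resonant, so no suitable non-resonant endpoints are available in block $\ell$ alone; I must stretch the interval to reach $x_{1}\in R_{\ell-1}^{+}$ and $x_{2}\in R_{\ell+1}^{-}$, giving $x_{2}-x_{1}\sim 4q_{n}$ (hence the $2L$ in the exponent). Now $[x_{1},x_{2}]$ contains the resonance point at $2\ell q_{n}+2m_{n}$, which contributes the cosine $c_{n,\ell}$ to the product in (\ref{green}). Uniformity of $\{\theta+k\alpha\}$ over the complement of this single index is again verified via the $\theta$-minimality structure, and Lemma \ref{up} furnishes the denominator lower bound while Lemma \ref{7.4} controls the numerators. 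Bounding $c_{n,\ell}$ through (\ref{5.38}) in the $\beta_{n}\ge\delta_{n}+200\varepsilon$ case and directly via $\|q_{n}(\theta-\tfrac12)\|\le e^{-\beta_{n}q_{n}+\delta_{n}q_{n}}$ in the opposite case produces the two prefactors displayed in (\ref{r92}); the accumulated $\varepsilon$-losses (numerator $6\varepsilon$, denominator uniformity $2\varepsilon$, cosine bookkeeping from Lemma \ref{cos2}, plus the doubled length) total the stated $70\varepsilon q_{n}$.

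\textbf{Main obstacle.} The delicate point is the parity bookkeeping combined with $\varepsilon$-uniformity. The Green function formula has distinct forms (\ref{green}) and (\ref{green2}) depending on the parities of the endpoints relative to $y$, and only one of them gives denominators of the form $\tilde{P}_{x_{2}-x_{1}}$ for which the polynomial structure (\ref{04.1})--(\ref{4.1}) permits Lagrange interpolation and hence the use of Lemma \ref{up}. Simultaneously the chosen endpoints must sit in $R_{\ell\pm 1}^{\pm}$, and the indexing set with one resonant phase removed must remain $\varepsilon$-uniform, which ultimately rests on the existence and uniqueness properties of the $\theta$-minimal pair $(m_{n},\ell_{n})$ of Lemma \ref{min}. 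Managing these three requirements simultaneously for $y$ in the doubly-resonant regime—and ensuring that the single extracted cosine is precisely $c_{n,\ell}$ rather than something harder to estimate—is what drives the constants in (\ref{r92}) and forms the technical heart of the proof.
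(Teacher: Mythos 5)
Your plan for part (1) has a genuine gap. You propose a single Green's-function expansion over one interval whose endpoints are prescribed to lie in $R_{\ell}^{+}$ and $R_{\ell+1}^{-}$, with the denominator bounded below through $\varepsilon$-uniformity ``after removing the single bad index''. In the regime that matters here ($\beta_n$ large), this is not available: a block of $\sim 2q_n$ consecutive phases is only $\frac{\ln(q_{n+1}/|\ell|)}{2q_n-1}+\epsilon$-uniform (Lemma \ref{resonunif}), so the best lower bound on a determinant of that length is degraded by a factor $\sim e^{-\beta_n q_n}$ (Lemma \ref{8.6}), which destroys the claimed $e^{40\varepsilon q_n}$ estimate; moreover, the interpolation argument only produces a good anchor point \emph{somewhere} in a long test interval (after ruling out $\tilde I_0$ via $|\phi(0)|\ge 1$), so you cannot pin both endpoints inside windows of size $2b_n\ll q_n$ adjacent to the resonances. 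The paper instead proves part (1) (Lemma \ref{9.1}) by the iterated short-block expansion of Lemma \ref{8.1}: blocks of length $4sq_{n-n_0}$ chosen according to $\operatorname{dist}(y,2q_n\mathbb{Z})$, iterated at most $t_0=[24/\tau_n]+1$ times, producing the intermediate bound (\ref{11.46}) which unavoidably contains the extra term $c_{n,\ell}\,e^{-(y-2\ell q_n)L}r_{\ell}^{-}$ (the chain crosses the resonant site and picks up the tiny cosine $c_{n,\ell}$). The decisive step you do not mention at all is Lemma \ref{5.8}: the transfer-matrix bound $\|A_{|I/2|}\|\le e^{7\varepsilon q_n}c_{n,\ell}^{-1}e^{L|I|}$ gives $r_{\ell}^{-}\le e^{18\varepsilon q_n}c_{n,\ell}^{-1}r_{\ell}^{+}$, which cancels the $c_{n,\ell}$ and converts that term into $r_{\ell}^{+}$. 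This is precisely what distinguishes Proposition \ref{pr2}(1) from Proposition \ref{pr1}(1); without it your stated inequality (with only $r_{\ell}^{+}$ and $r_{\ell+1}^{-}$ on the right) does not follow.

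Part (2) of your sketch is closer to the paper's route (an interval of length $\sim 4q_n$ containing the resonance, denominator from Lemma \ref{8.6} via Lemma \ref{up}, numerators from Lemma \ref{7.4}, cosines from Lemma \ref{cos2}), but two points are off. First, in the case $\beta_n<\delta_n+200\varepsilon$ the factor $e^{\beta_n q_n}$ in (\ref{r92}) comes from the $e^{-\beta_n q_n}$ degradation of the denominator lower bound in Lemma \ref{8.6} (the cosines are then bounded trivially by $1$), not from estimating $c_{n,\ell}$ by $\|q_n(\theta-\frac12)\|$; in the case $\beta_n\ge\delta_n+200\varepsilon$ the $\max(|\ell|,e^{\delta_n q_n})$ arises from multiplying that same $e^{+\beta_n q_n}$ loss by the bound (\ref{5.38}) on the $c_{n,j}$'s. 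Second, you again fix both endpoints in $R_{\ell-1}^{+}$ and $R_{\ell+1}^{-}$, whereas the actual argument (Lemma \ref{9.3}, following Lemma 9.3 of \cite{ref81}) lets the anchor float in $I_\ell$, produces coupled inequalities for $r_{\ell}^{-}$ and $r_{\ell}^{+}$ involving neighbors with various $c_{n,j}$ and $\gamma$ factors, and then drops the self-referencing terms because their coefficient $e^{-(2L-\delta_n-70\varepsilon)q_n}$ is less than $1$ by (\ref{5.5}); this last closing step is also missing from your plan.
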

  The above two propositions will be proved in Section \ref{0p0}. They will be used to prove  Theorem \ref{local2} in the case  $\beta_{n}$ is not too small. As for relevant Diophantine case , in other words, $0 \leq \beta_{n} \leq 300 \varepsilon$, we have the following: 
\begin{proposition}\label{pr3}
For $n$ large enough,

(1)If $\frac{q_{n}}{6}<k<q_{n}$, $k \in 2\mathbb{N}$, we have $|\phi(k)| \leq e^{-k(L-24 \varepsilon)}$.

(2)If $q_{n}<k<\frac{q_{n+1}}{6}$, $k \in 2\mathbb{N}$,  we have $|\phi(k)|  \leq e^{-(L-330 \varepsilon) k}$ .
\end{proposition}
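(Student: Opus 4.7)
The plan is to invoke the Lagrange interpolation / block resolvent strategy in the Diophantine regime, where the smallness of $\beta_n \le 300\varepsilon$ guarantees that suitable sets of translates $\{\theta + j\alpha\}$ are $\varepsilon$-uniform, and then chain the single-block bound into exponential decay on the full range $|k|<q_{n+1}/6$.

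For part (1), I would fix $y = k = 2n_0$ with $q_n/6 < k < q_n$, and choose two disjoint index sets $I_1$ and $I_2$ with $0 \in I_1$, $n_0 \in I_2$, $|I_1 \cup I_2| = n_0+1$, and $I_1 \cup I_2 \subset [-q_n, q_n]$. The hypothesis $\beta_n \le 300\varepsilon$ means every pair of distinct translates on a single period satisfies $\|(j_1-j_2)\alpha\|_{\mathbb{T}} \ge e^{-300\varepsilon q_n}$, which I would use (as in Lemma 9.3 of \cite{ref2}) to show $\{\theta+j\alpha\}_{j\in I_1\cup I_2}$ is $\varepsilon$-uniform. By Lemma \ref{up} there is $x_1 \in I_1 \cup I_2$ with $|\tilde P_{2n_0-1}(\theta+x_1\alpha)| \ge e^{2n_0(\tilde L - 2\varepsilon)}$. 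The possibility $x_1 \in I_2$ is ruled out by (\ref{l1}) together with the polynomial bound (\ref{eigen}): such a lower bound near the index $y$ would propagate via the Green's function identity to give exponential growth of $\phi$ on a block of length $O(q_n)$, contradicting $|\phi(k)| \le C_0 k$. Hence $x_1 \in I_1$ lies near the origin, and I would apply (\ref{green}) on the interval $[2x_1+1, 2x_1+2n_0-1]$, which contains $y$. Plugging the upper bound (\ref{q1}) on the numerator, the lower bound from Lemma \ref{up} on the denominator, the cosine product bound from Corollary \ref{cos} (contributing the cancelling $e^{(-\ln 2 + \varepsilon)n_0}$), and the polynomial bound (\ref{eigen}) on $\phi$ at the endpoint, and using $\tilde L = L - \tfrac12 \ln 2$, the $\ln 2$ terms cancel and the surviving exponent is $-k(L - 24\varepsilon)$, as claimed.

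For part (2), with $q_n < k < q_{n+1}/6$ a single interpolation set of size $k$ can no longer be $\varepsilon$-uniform because of wraparound by $q_n$, so I would instead iterate the one-step block recursions in Propositions \ref{pr1}(2) and \ref{pr2}(2). Writing $k \in R_\ell \cup I^- \cup I^+$ for some $\ell$ with $1 \le |\ell| \le q_{n+1}/(6q_n)$, the recursions (\ref{r82}) and (\ref{r92}), specialised to $\beta_n \le 300\varepsilon$ (so $\delta_n \le \beta_n \le 300\varepsilon$), read
\[
r_\ell \le e^{-(2L - 370\varepsilon)q_n}\,\max(r_{\ell-1}, r_{\ell+1}).
\]
Because $r_0$ itself is controlled by Part (1) (and by the normalisation $|\phi(0)|\ge 1$), a standard two-sided maximum-principle argument, as in the treatment of the iteration in \cite{ref81, ref84}, yields $r_\ell \le e^{-|\ell|(2L - 370\varepsilon)q_n}$. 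Applying Part (1) inside the block containing $k$ then gives $|\phi(k)| \le e^{-k(L - 330\varepsilon)}$, with the extra slack in the constant absorbing the boundary corrections and the conversion between $|\ell| q_n$ and $k$.

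The main obstacle is the parity bookkeeping in the Green's function expansion: formulas (\ref{green}) and (\ref{green2}) require specific parities of $y$ and the endpoints $x_1, x_2$ in order that only $\tilde P$ (not $\tilde Q$) appears in the ratios, and this must be compatible with the index sets chosen for the Lagrange interpolation. A secondary technical point is verifying $\varepsilon$-uniformity of the selected interpolation sets when $I_1 \cup I_2$ must avoid both the index of minimal cosine (to invoke Lemma \ref{av31}) and the index $n_0 \sim k/2$; this is where the Diophantine-like hypothesis $\beta_n \le 300\varepsilon$ is used most critically, and it is exactly the argument that breaks down in the complementary Liouvillean regime handled by Propositions \ref{pr1}, \ref{pr2}.
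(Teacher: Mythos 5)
Your part (1) inverts the key singular/regular dichotomy, and this is a genuine gap. In the standard scheme (and in the paper's own Lemma \ref{4.2}, which is precisely this step at another scale; the paper's proof of Proposition \ref{pr3} is simply a citation of Case 1 of Lemma 6.1 of \cite{ref81} with $k$ replaced by $k/2$), one excludes the possibility that the point $x_1$ furnished by Lemma \ref{up} lies in the interval near the origin: if it did, the expansion (\ref{green}) on the block anchored at $x_1$ would make $0$ a regular point and force $|\phi(0)|$ to be exponentially small, contradicting the normalization (\ref{eigen1}). One then concludes $x_1\in I_2$, i.e.\ near $y$, so that $y$ sits well inside the block built from $x_1$ (at distance at least a quarter of its length from both endpoints), and only then does (\ref{green}) yield $|\phi(y)|\le e^{-(L-C\varepsilon)|y|}$. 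You argue the opposite way: you rule out $x_1\in I_2$ on the grounds that a large value of $|\tilde P|$ near $y$ would ``propagate to exponential growth of $\phi$'' and contradict (\ref{eigen}). That is a non sequitur: a large denominator near $y$ makes the relevant Green's function entries small, which is exactly what is wanted, and it says nothing about growth of the particular generalized eigenfunction. Having (incorrectly) placed $x_1$ near the origin, your expansion on $[2x_1+1,\,2x_1+2n_0-1]$ either does not contain $y=2n_0$ at all or contains it only at the extreme edge, where the regularity requirement $|y-x_i|\ge h/4$ fails, so no exponential decay at $y$ can be extracted. With the roles of $I_1$ and $I_2$ restored, the rest of your part (1) (uniformity from $\beta_n\le 300\varepsilon$, cancellation of the $\ln 2$ via $\tilde L=L-\tfrac12\ln 2$) is the standard argument.

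For part (2) you also depart from the paper: the paper does not run Case 3 through Propositions \ref{pr1} and \ref{pr2} at all, but through the direct chaining argument of \cite{ref81} (Case 1 of their Lemma 6.1), which works at scale comparable to $q_n$ without the $\theta$-minimal/anti-resonance machinery. Your route of iterating (\ref{r82})--(\ref{r92}) with $\beta_n\le300\varepsilon$ is not obviously hopeless numerically (the factor $e^{\beta_nq_n}\le e^{300\varepsilon q_n}$ fits inside the $330\varepsilon$ budget), but as written it has two problems: first, $r_0$ is not ``controlled by Part (1)'' --- the window $R_0$ consists of sites within distance $2b_n$ of the origin, where no decay is claimed, and the correct input is the polynomial bound (\ref{r0}); second, converting the block bounds $r_\ell$ into pointwise decay at $k$ uses the non-resonant expansions of Propositions \ref{pr1}(1)/\ref{pr2}(1), not Part (1) of the proposition, and in any case your Part (1) must first be repaired as above, since the whole of Part (2) leans on it. So the proposal as it stands does not constitute a proof.
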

It is a variant of case 1 of Lemma 6.1 in \cite{ref81}.We only need to replace $k$ by $k/2$ in the argument.

\subsection{Proofs of  Theorem \ref{local2}}
The remaining of this paper will be devoted to the proof of Theorem \ref{local2},
dividing into the following three cases.

Case 1. $ \beta_{n} \geq \max(\delta_{n}+200 \varepsilon, 300 \varepsilon)$;

Case 2. $ 300 \varepsilon  \leq \beta_{n}\leq \delta_{n}+200 \varepsilon $;

Case 3. $0 \leq \delta_{n} \leq \beta_{n} \leq 300 \varepsilon$. 

 Case 1  require some  key estimates   presented in Subsection \ref{ktl}. It is the most technical part in this paper as it showed in \cite{ref81}. In case 2, we have $2L>\beta_{n}+200 \varepsilon$, we will use  the strategy in \cite{ref84} to handle this case.  Compared to the Case 1,
Case 2 has a lot of simplifications. Case 3 is similarly to the Diophantine case that is handled in \cite{ref13}. 

\noindent\textbf {Case 1 }
Assume $\beta_{n} \geq \delta_{n}+200 \varepsilon$.  Let $y \in\left(2\ell q_{n}+2b_{n},2(\ell+1) q_{n}-2b_{n}\right)$ for some $|\ell| \leq \frac{q_{n}+1}{6 q_{n}}$. Without loss of generality, we assume $\ell \geq 0$.

If $\ell \neq 0,-1$,  we need the following Lemma:
\begin{Lemma}\label{10.1}
For any $\ell_{0}$ such that $1 \leq\left|\ell_{0}\right| \leq q_{n+1} /\left(6 q_{n}\right)$, we have
$$
r_{\ell_{0}} \leq e^{2\left(\delta_{n}/2-L+54 \varepsilon\right)\left|\ell_{0}\right| q_{n}}
$$
\end{Lemma}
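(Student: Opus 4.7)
The plan is to iterate the resonant recursion \eqref{r82}. In Case 1, $\beta_n \geq \delta_n + 200\varepsilon$, so the upper branch applies; since
$$\frac{\max(|\ell|, e^{\delta_n q_n})}{\max(|\ell|, 1)} \leq e^{\delta_n q_n} \qquad (|\ell| \geq 1),$$
the recursion simplifies to
$$r_\ell \leq A\, \max(r_{\ell-1}, r_{\ell+1}), \qquad A := e^{-(2L - \delta_n - 55\varepsilon) q_n},$$
valid on $1 \leq |\ell| \leq q_{n+1}/(6q_n)$. (If $m_n$ is resonant at scale $q_n$, the analogue \eqref{r92} applies with a slightly worse constant, absorbed into the $\varepsilon$-budget.)

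Next I would show that the greedy trajectory $\ell_0 \to \ell_1 \to \cdots$ --- at each step passing to the neighbor realizing the max in the recursion --- must be monotone. Indeed, any reversal at step $i+1$ yields the two-step estimate $r_{\ell_i} \leq A\, r_{\ell_{i+1}} \leq A^2 r_{\ell_i}$, forcing $r_{\ell_i} = 0$ (since $A<1$) and hence $r_{\ell_0} = 0$, which trivially implies the claim. Assuming monotonicity, the trajectory terminates either at $\ell=0$ (after $K\geq|\ell_0|$ steps) or at $|\ell|= q_{n+1}/(6q_n)+1$ (after $K \geq q_{n+1}/(6q_n)+1-|\ell_0|$ steps).

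The descending case is immediate: using $r_0 \leq C_0(2 b_n + 1) = O(q_n)$ from \eqref{eigen},
$$r_{\ell_0} \leq A^{|\ell_0|}\, O(q_n) \leq e^{-(2L - \delta_n - 56\varepsilon)|\ell_0| q_n}$$
for $n$ large, which is already stronger than the claimed $e^{(\delta_n - 2L + 108\varepsilon)|\ell_0| q_n}$.

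The main obstacle is the ascending case: after $K \geq q_{n+1}/(6q_n) - |\ell_0|$ steps the trajectory hits the boundary, where only the polynomial a priori bound $r_{\ell_K} = O(q_{n+1}) \leq e^{\beta_n q_n}$ is available. Closing the gap requires verifying
$$\bigl(q_{n+1}/(6q_n) - |\ell_0|\bigr)(2L - \delta_n - 55\varepsilon) - \beta_n \;\geq\; (2L - \delta_n - 108\varepsilon)\,|\ell_0|,$$
which uses the constraint $|\ell_0|\leq q_{n+1}/(6q_n)$, the slack $2L > \delta_n + 680\varepsilon$ from \eqref{5.5}, and the observation $\beta_n\,/\,\bigl(q_{n+1}/(6q_n)\bigr) \to 0$ (since $q_{n+1} \geq e^{\beta_n q_n}$). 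The $55\varepsilon$-vs-$108\varepsilon$ gap provides exactly the margin needed for this delicate balancing, and is the technical heart of the proof.
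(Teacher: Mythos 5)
Your reduction of \eqref{r82}--\eqref{r92} to $r_\ell \le A\,\max(r_{\ell-1},r_{\ell+1})$ with $A=e^{-(2L-\delta_n-55\varepsilon)q_n}$ (or $70\varepsilon$ in the resonant case) matches the paper's \eqref{iterate}, and the greedy/monotone organization together with the descending case are fine. The gap is the ascending case, which you yourself flag as the heart of the matter but do not close: the inequality you propose to verify,
$$
\bigl(q_{n+1}/(6q_n)-|\ell_0|\bigr)\,(2L-\delta_n-55\varepsilon)-\beta_n \;\ge\; (2L-\delta_n-108\varepsilon)\,|\ell_0|,
$$
is false on a large part of the admissible range. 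For $|\ell_0|=q_{n+1}/(6q_n)$ the left-hand side equals $-\beta_n<0$ while the right-hand side is positive of size $(2L-\delta_n-108\varepsilon)\,q_{n+1}/(6q_n)$; more generally it already fails once $|\ell_0|$ exceeds roughly half of $q_{n+1}/(6q_n)$, and the slack $2L>\delta_n+680\varepsilon$ from \eqref{5.5} does not help (it pushes the failure threshold toward one half). The obstruction is structural, not an $\varepsilon$-budget issue: iterating upward until the path exits the window accumulates decay proportional only to the distance $q_{n+1}/(6q_n)-|\ell_0|$ to the boundary, which can be negligible compared with $|\ell_0|$, whereas the conclusion requires decay proportional to $|\ell_0|$. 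Even keeping the full factor $\max(|\ell|,e^{\delta_n q_n})/\max(|\ell|,1)$ from \eqref{r82} (which equals $1$ once $|\ell|\ge e^{\delta_n q_n}$) cannot rescue this route, for the same reason.

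The missing idea is the paper's stopping rule: one never iterates to the boundary of the window. Instead one stops as soon as either $\ell_t=0$, or the number of steps reaches $2|\ell_0|$ (with a further cap on the iteration count so that the recursion remains applicable along the path), and at the terminal site one bounds $r_{\ell_t}$ not by $e^{\beta_n q_n}$ but by the a priori polynomial bound \eqref{eigen} on the generalized eigenfunction: after at most $2|\ell_0|$ steps one has $|\ell_t|\le 3|\ell_0|$, hence $r_{\ell_t}\le C\,|\ell_0|\,q_n$. The accumulated factor $e^{(\delta_n-2L+100\varepsilon)\,2|\ell_0|q_n}$ is then, in the exponent, twice what Lemma \ref{10.1} claims: half of it is the desired bound $e^{(\delta_n-2L+108\varepsilon)|\ell_0|q_n}$, and the other half, being at most $e^{-580\varepsilon |\ell_0| q_n}$ by \eqref{5.5}, absorbs the polynomial prefactor. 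Your descending case is exactly the $\ell_t=0$ alternative; without the $2|\ell_0|$-step cap combined with \eqref{eigen}, your argument only proves the lemma for $|\ell_0|$ up to about half of $q_{n+1}/(6q_n)$.
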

\begin{proof}
In view of  (\ref{r82}) and (\ref{r92}), for any $0<\left|\ell_{0}\right| \leq q_{n+1} /\left(6 q_{n}\right)$, we have
\begin{equation}\label{iterate}
r_{\ell_{0}} \leq e^{\left(\delta_{n}/2-L+50 \varepsilon\right) 2q_{n}} \max _{\ell_{1}=\ell_{0} \pm 1} r_{\ell_{1}}.
\end{equation}
 One can iterate (\ref{iterate}) until one reaches $\ell_{t}$ (and stops the iteration once reaches such a $\ell_{t}$ ):

(1) $t=0$,

(2)$t=2 \ell_{0}$,

(3) the iterating number reaches $[q_{n+1} /\left(12 q_{n}\right)]$.

Hence one obtains
$$
r_{\ell_{0}} \leq \max _{\left(\ell_{0}, \ell_{1}, \ldots, \ell_{t}\right) \in \mathcal{G}} e^{\left(\delta_{n}/2-L+50 \varepsilon\right) 2t q_{n}} r_{\ell_{t}}
$$
where $\mathcal{G}=\left\{\left(\ell_{0}, \ldots, \ell_{t}\right):\left|\ell_{i}-\ell_{i-1}\right|=1\right.$

Then Lemma \ref{10.1} follows from bounding $\ell_{t}$ by (\ref{eigen}).
\end{proof}
Combing Proposition \ref{pr1}  (if $m_{n}$ is non-resonant) and \ref{pr2} (if $m_{n}$ is resonant) with Lemma \ref{10.1}, we have
\begin{equation}\label{e3}
|\phi(y)|  \leq e^{40 \varepsilon q_{n}} \max \left(e^{-\left(y-2\ell q_{n}\right) L} r_{\ell}, e^{-\left(2(\ell+1) q_{n}-y\right) L} r_{\ell+1}\right) 
\end{equation}

By (\ref{eigen}), we have 
\begin{equation}\label{r0}
r_{0} \leq 2C_{0} \tau_{n} q_{n}.
\end{equation}

Using  (\ref{r0}) and Lemma \ref{10.1} to bound
$ r_{\ell}$, by (\ref{e3}), we have 
$$|\phi(y)| \leq e^{\left(\delta_{n}/2-L+181 \varepsilon\right) y}.
$$

\textbf{Case 2  of Theorem \ref{local2}.}   
  The proofs of Case 1 and 2 of Theorem \ref{local2} are
completely analogous. We only give a brief proof.
Compared to the Case 1, Case 2 has a lot of simplifications. We don't need to care about the minimum values of (the absolute values of) cosines.

Assume $ \beta_{n}\leq \delta_{n}+200 \varepsilon $, by Proposition \ref{pr1} and \ref{pr2},  bound $c_{n, \ell}$ by 1, we have 
\begin{equation}\label{e1}
|\phi(y)| \leq e^{40 \varepsilon q_{n}} \max \left( e^{-\left(y-2 \ell q_{n}\right) L} r_{\ell}, e^{-\left(2(\ell+1) q_{n}-y\right) L} r_{\ell+1}\right),
\end{equation}
if $2\ell q_{n}+2b_{n}<y<2(\ell+1) q_{n}-2b_{n}$, for some $|\ell| \leq q_{n+1} /\left(6 q_{n}\right)$.
And
for any $\ell \neq 0,|\ell| \leq q_{n+1} /\left(6 q_{n}\right)$, we have
\begin{equation}\label{e2}
r_{\ell} \leq  \frac{e^{-(2 L-70 \varepsilon-\beta_{n}) q_{n}}}{\max (|\ell|, 1)} \max \left(r_{\ell-1}, r_{\ell+1}\right).
\end{equation}
Then similarly to Lemma \ref{10.1}, we have
\begin{Lemma}\label{10.11}
For any $\ell_{0}$ such that $1 \leq\left|\ell_{0}\right| \leq q_{n+1} /\left(6 q_{n}\right)$, we have
$$
r_{\ell_{0}} \leq e^{2\left(\beta_{n}/2-L+54 \varepsilon\right)\left|\ell_{0}\right| q_{n}}
$$
\end{Lemma}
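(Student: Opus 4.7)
The plan is to follow the proof of Lemma~\ref{10.1} verbatim, simply substituting the Case~2 recursion (\ref{e2}) for (\ref{iterate}). Since the statement and structure are exactly parallel, with $\beta_n$ in the role previously played by $\delta_n$, no new technical ingredients are needed; the only point to check is that the iteration behaves well across the bookkeeping.

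Concretely, starting from $\ell_0$ with $1 \leq |\ell_0| \leq q_{n+1}/(6q_n)$, iterate (\ref{e2}): at each step replace the current $r_{\ell_i}$ by $\max(r_{\ell_i - 1}, r_{\ell_i + 1})$, generating a sequence $(\ell_0, \ell_1, \ldots, \ell_t)$ with $|\ell_i - \ell_{i-1}| = 1$. Stop the iteration as soon as one of the following holds:
\begin{enumerate}
\item $\ell_t = 0$;
\item $t = 2|\ell_0|$;
\item $t = \lfloor q_{n+1}/(12 q_n) \rfloor$.
\end{enumerate}
After $t$ steps one obtains
$$
r_{\ell_0} \;\leq\; \max_{(\ell_0,\ldots,\ell_t)\in\mathcal{G}}\; e^{-(2L - 70\varepsilon - \beta_n)\,t\, q_n}\, r_{\ell_t},
$$
where $\mathcal{G}$ is the set of nearest-neighbor walks of length $t$ starting at $\ell_0$ (the factor $1/\max(|\ell_i|,1)$ only helps and may be discarded).

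It remains to bound $r_{\ell_t}$ in each stopping case by the a~priori estimate (\ref{eigen}). In cases (i) and (ii) we have $|\ell_t| \leq 2|\ell_0|$ and $t \geq |\ell_0|$, so $r_{\ell_t} \leq 2C_0(2|\ell_t| q_n + 2 b_n) = O(|\ell_0| q_n)$; in case (iii) the number of iterations is so large that the accumulated exponential factor $e^{-(2L - 70\varepsilon - \beta_n)\,t\, q_n}$ dominates the polynomial bound $r_{\ell_t} \leq 2 C_0 \cdot q_{n+1}$. Collecting the estimates and using $t \geq |\ell_0|$ in the first two cases,
$$
r_{\ell_0} \;\leq\; e^{-(2L - 70\varepsilon - \beta_n)\,|\ell_0|\, q_n}\cdot O(q_n^2) \;\leq\; e^{2(\beta_n/2 - L + 54\varepsilon)\,|\ell_0|\, q_n},
$$
for $n > N(\varepsilon)$ large enough, since the polynomial prefactor is absorbed into $e^{\varepsilon q_n}$. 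This yields the claimed bound.

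The only potentially delicate point is verifying that the three stopping conditions exhaust all possibilities while keeping the iteration within the regime where (\ref{e2}) applies (i.e.\ all intermediate $|\ell_i|$ remain bounded by $q_{n+1}/(6q_n)$). This is immediate from condition~(iii): before it triggers, $|\ell_i| \leq |\ell_0| + t \leq q_{n+1}/(6q_n) + q_{n+1}/(12 q_n) < q_{n+1}/(4q_n)$ for large $n$, so (\ref{e2}) remains valid throughout. This is the sole combinatorial check; all the analytic content has already been packaged into Propositions~\ref{pr1} and~\ref{pr2}, so no further obstacle arises.
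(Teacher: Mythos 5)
Your proof follows the paper's own route exactly: the paper disposes of Lemma \ref{10.11} by iterating (\ref{e2}) with the same three stopping rules as in Lemma \ref{10.1} and bounding the terminal $r_{\ell_t}$ through (\ref{r0}) and (\ref{eigen}), which is precisely your argument. The two bookkeeping points you pass over quickly --- that intermediate indices $\ell_i$ may leave the stated range $|\ell|\leq q_{n+1}/(6q_n)$ of (\ref{e2}), and that in your third stopping case the inequality $t\geq|\ell_0|$ only holds when $|\ell_0|\leq q_{n+1}/(12q_n)$ --- are treated with exactly the same brevity in the paper's own proof of Lemma \ref{10.1}, so your write-up matches the paper in both approach and level of detail.
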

\begin{proof}
 It follows from (\ref{e1}), (\ref{e2}) and by arguments similar to those in Lemma \ref{10.1}. In order to avoid repetition, we omit the details.
\end{proof}

Combing (\ref{r0}), (\ref{e1}) with Lemma \ref{10.11},  thus we have proved  Cases 2 of Theorem \ref{local2}.

\textbf{Case 3 of Theorem \ref{local2}}
 For $k=2n+1,n\in \mathbb{Z}$,
\begin{equation}
    \phi(2n+2)+\phi(2n)=E\phi(2k+1),
\end{equation}
then we have 
\begin{equation}
\label{bbd}
|\phi(2n+1)| \leq C \max(|\phi(2n)|, |\phi(2n+2)|),
\end{equation}
where $C=C(E)$.

Combing  (\ref{bbd}) with Proposition \ref{pr3},  thus we have proved  Cases 3 of Theorem \ref{local2}.

\section{Proofs of some useful Propositions.}\label{0p0}

\subsection{Proofs of Proposition \ref{pr1}.}\label{sig1}
We will first prove not even-resonant $y$ 's can be dominated by even resonances, and then study the relation between adjacent even resonant regions. In the remaining of this paper, by (\ref{bbd}),  we  will only consider the case $k \in 2\mathbb{N}$ without additional statement.
\subsubsection{ $y$ is not even resonance.} 

\begin{Lemma}\label{8.1}
 Assume $2\ell q_{n}+2b_{n} \leq y \leq2(\ell+1) q_{n}-2b_{n}$,  
we have, for $y=2(\ell q_{n}+m_{n})$,
\begin{equation}
|\phi(y)| \leq e^{29 \varepsilon q_{n}} c_{n, \ell} \max \left(e^{-\left(y-2\ell q_{n}\right) L} r_{\ell}, e^{-\left(2(\ell+1) q_{n}-y\right) L} r_{\ell+1}\right).
\end{equation}
For any $y \in  I^{-}$,
\begin{equation}
|\phi(y)| \leq e^{29 \varepsilon q_{n}} \max \left(e^{-\left(y-2\ell q_{n}\right) L} r_{\ell}, c_{n, \ell} e^{-\left(2(\ell+1) q_{n}-y\right) L} r_{\ell+1}\right).
\end{equation}
For any $y \in I^{+}$,
\begin{equation}\label{18.3}
|\phi(y)| \leq e^{29 \varepsilon q_{n}} \max \left(c_{n, \ell} e^{-\left(y-2\ell q_{n}\right) L} r_{\ell}, e^{-\left(2(\ell+1) q_{n}-y\right) L} r_{\ell+1}\right).
\end{equation}
\end{Lemma}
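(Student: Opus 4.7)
The plan is to apply the Poisson-type identity (\ref{green}) with a carefully chosen interval $[x_1,x_2]$ of odd endpoints $x_1\in R_\ell$, $x_2\in R_{\ell+1}$, so that $y\in(x_1,x_2)$ and the boundary values are controlled by $|\phi(x_1-1)|\le r_\ell$, $|\phi(x_2+1)|\le r_{\ell+1}$. I will write $n_i=(x_i-1)/2$, $n=y/2$, $k=n_2-n_1\asymp q_n$. The right-hand side of (\ref{green}) contains two summands, each a product of a ratio of $\tilde P$'s with a cosine product, so the task reduces to (i) a lower bound on the common denominator $\tilde P_{x_2-x_1}(\theta_{n_1})$, (ii) an upper bound on each numerator, and (iii) control of the two cosine products, including the location of the minimum $c_{n,\ell}$.

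For (i), I would first verify that, once the index $m_n+\ell q_n$ of the smallest cosine is excised, the punctured set $\{\theta+j\alpha\}_{j\in[n_1,n_2]\setminus\{m_n+\ell q_n\}}$ is $\varepsilon$-uniform in the sense of (\ref{41}): the iterates $\{j\alpha\}$ are $q_n$-quasi-equispaced on $\mathbb T$ with minimal gap $\|q_n\alpha\|$, and condition (4) of the minimality of $(m_n,\ell)$ guarantees that the only anomalously small factor $|\sin\pi(\theta_i-\theta_l)|$ comes from $l=m_n+\ell q_n$. Combining this with the Lagrange interpolation (\ref{319}) and the subharmonic lower bound (\ref{av4}) feeds into Lemma \ref{up} and produces an index $j^*\neq m_n+\ell q_n$ with $|\tilde P_{2k-1}(\theta+j^*\alpha)|\ge e^{(2k-1)(\tilde L-2\varepsilon)}$. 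Translating $[x_1,x_2]$ by $2(j^*-n_1)$ preserves parity, still keeps $y$ strictly inside (the buffer width $2b_n=O(\varepsilon q_n)$ of $R_\ell$ and $R_{\ell+1}$ is calibrated for exactly this), and converts the inequality into the desired denominator estimate.

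For (ii) I will use the uniform upper bound (\ref{q1}), which contributes $e^{(x_2-y)\tilde L+O(\varepsilon q_n)}$ and $e^{(y-x_1)\tilde L+O(\varepsilon q_n)}$. For (iii), Corollary \ref{cos} applied to each of the cosine products $\prod_{k=n_1}^{n-1}|\cos\pi\theta_k|$ and $\prod_{k=n}^{n_2}|\cos\pi\theta_k|$, with the minimum index excised, gives a factor $e^{-(\text{length})\ln 2+O(\varepsilon q_n)}$; whichever of the two products actually contains the index $m_n+\ell q_n$ picks up the additional factor $c_{n,\ell}$ out front. Using $\tilde L=L-\tfrac{\ln 2}{2}$, the $-\ln 2$ coming from the cosines combines with the $\tilde L$ from the numerator to yield the target exponents $-(y-2\ell q_n)L$ and $-(2(\ell+1)q_n-y)L$, and all $O(\varepsilon q_n)$ remainders are absorbed into the prefactor $e^{29\varepsilon q_n}$. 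For $y\in I^-$ we have $n<m_n+\ell q_n<n_2$, so $c_{n,\ell}$ decorates only the right-hand summand, and for $y\in I^+$ it appears symmetrically on the left, matching the asymmetric statements in the lemma.

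The boundary case $y=2(\ell q_n+m_n)$ is the subtlest point: here $n=m_n+\ell q_n$ sits exactly on the cut between the two cosine products in (\ref{green}), so a single application gives $c_{n,\ell}$ on only one branch of the max. I will run the argument in parallel using the even-endpoint formula (\ref{green2}), whose splitting of the cosine products exchanges which side collects the minimum; taking the better of the two bounds yields the uniform $c_{n,\ell}$ prefactor claimed in the first estimate. The main technical obstacle throughout will be the combined shift/parity bookkeeping: the translation unlocking the Lagrange lower bound has to simultaneously preserve the endpoint parity required by (\ref{green}) or (\ref{green2}), keep $y$ in the strict interior of the translated block, and place $m_n+\ell q_n$ on the side of $y$ prescribed by the target estimate. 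The definition of the buffer width $b_n=\tau_n q_n$ with $\tau_n\sim\varepsilon/\max(L,1)$, together with the four conditions in the definition of $\theta$-minimality, are what make this bookkeeping feasible.
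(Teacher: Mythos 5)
Your plan rests on a single application of \eqref{green} over one interval of length $\asymp 2q_n$ with prescribed endpoints $x_1\in R_\ell$, $x_2\in R_{\ell+1}$, and this is where the argument breaks. The Lagrange interpolation mechanism (Lemma \ref{up} via \eqref{319} and \eqref{av4}) only produces \emph{some} point $j^*$ in the sampling set where $|\tilde P_{2k-1}(\theta_{j^*})|$ is large; it does not give the lower bound at the specific endpoint $n_1$ near $\ell q_n$ that your denominator requires. Your proposed repair, translating $[x_1,x_2]$ by $2(j^*-n_1)$, does not work: $j^*$ is uncontrolled within a set of diameter $\sim 2q_n$, so the shift can be of order $q_n$, which throws the new endpoints far outside $R_\ell\cup R_{\ell+1}$ (where you have no bound on $\phi$) and can even expel $y$ from the interval; the buffer $2b_n=O(\varepsilon q_n)$ is far too small to absorb such a shift, contrary to your claim that it is ``calibrated for exactly this.'' There is a second, independent flaw in step (i): you cannot excise the minimal-cosine node $m_n+\ell q_n$ from the interpolation set, because $\tilde P_{2k-1}(\theta)/(\cos\pi\theta)^k=g_{k-1}(\tan\pi\theta)$ has degree $k-1$ and the Lagrange formula needs all $k$ nodes; dropping one node leaves you unable to reconstruct the polynomial, so the ``punctured $\varepsilon$-uniformity'' has nothing to feed into. (Also note the denominators in \eqref{41} are $\sin\pi(i-l)\alpha$, which encode frequency resonances, not the $\theta$-minimality condition you invoke.)

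The paper avoids both problems by working at the smaller scale $q_{n-n_0}$, with $4sq_{n-n_0}\leq\operatorname{dist}(y,2q_n\mathbb Z)$: the $\varepsilon$-uniformity of $\tilde I_0\cup\tilde I_y$ (Lemma \ref{a4.1}) plus the normalization $|\phi(0)|\geq 1$ forces the good interpolation point into $\tilde I_y$, i.e.\ into a window near $y$ whose size is small enough that the resulting Green's-function interval stays inside the non-resonant zone (Lemma \ref{4.2}). One then \emph{iterates} the expansion---re-expanding $\phi(2x_1-2)$ or $\phi(2x_2)$ until the boundary reaches $R_\ell$, $R_{\ell+1}$, or the iteration count hits $t_0=[23/\tau_n]+1$---and closes the argument with the bootstrap over $|\phi(x_0^\pm)|$ and $|\phi(2\ell q_n+2m_n)|$; the factor $c_{n,\ell}$ is picked up exactly when one link of the chain crosses $\ell q_n+m_n$, which also settles your ``boundary case'' $y=2(\ell q_n+m_n)$ without needing \eqref{green2}. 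Your outline contains none of this multiscale/iterative structure, and without it the key denominator estimate at a prescribed location is unavailable, so the proof as proposed does not go through.
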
 
 We will give the proof of this lemma in the end of the subsection.

 For a not even resonant $y$ and $y \in 2\mathbb{N}$, let $n_{0}$ be the least positive integer so that
$$
4 q_{n-n_{0}} \leq \operatorname{dist}\left(y, 2q_{n} \mathbb{Z}\right) .
$$
Once $n_{0}$ is chosen, we can fix $s$ be the greatest positive integer such that
$$
4 s q_{n-n_{0}} \leq \operatorname{dist}\left(y, 2q_{n} \mathbb{Z}\right).
$$
Clearly, Let
$$
\begin{aligned}
\tilde{I}_{0} &=\left[-\left[s q_{n-n_{0}} / 2\right]-s q_{n-n_{0}}, -\left[s q_{n-n_{0}} / 2\right]\right] \cap \mathbb{Z} ,\\
\tilde{I}_{y} &=\left[y/2-\left[s q_{n-n_{0}} / 2\right]-s q_{n-n_{0}},  y/2-\left[s q_{n-n_{0}} / 2\right]-1\right] \cap \mathbb{Z} .
\end{aligned}
$$
Clearly $\tilde{I}_{0} \cup \tilde{I}_{y}$ contains $2 s q_{n-n_{0}}+1$ distinct numbers. The choice of  $n_{0}$ was first introduced in \cite{liu2015}. It is actually a very useful  technical improvement (simplify   one case appearing in   \cite{ref2}) and now everyone in this field  is  using it as a standard technique.  It should be noted that by the choice of $n_{0}$, we have
$$
2b_{n}<\operatorname{dist}\left(y, 2q_{n} \mathbb{Z}\right)<4 q_{n-n_{0}+1}.
$$
and also
$$
s q_{n-n_{0}}<q_{n-n_{0}+1}.
$$
then we have 
\begin{Lemma}\label{a4.1}
 For a not even-resonant $y$, for $n>N(\varepsilon)$ large enough, we have $\left\{\theta_{\ell}\right\}_{\ell \in \tilde{I}_{0} \cup \tilde{I}_{ y }}$ are $\varepsilon$-uniform.
\end{Lemma}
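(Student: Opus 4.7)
\medskip
\noindent\textbf{Proof proposal for Lemma \ref{a4.1}.}

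The plan is to bound the numerator and the denominator of the Lagrange coefficient appearing in the definition of $\varepsilon$-uniformity separately and then combine. Set $k:=2sq_{n-n_{0}}$, so the set $\tilde{I}_{0}\cup\tilde{I}_{y}$ contains $k+1$ points, and it consists of two disjoint intervals each of length $sq_{n-n_{0}}$, the distance between whose centers is essentially $y/2$. By the choice of $n_{0}$ and $s$, we have $2b_{n}<\operatorname{dist}(y,2q_{n}\mathbb{Z})<4q_{n-n_{0}+1}$ and $sq_{n-n_{0}}<q_{n-n_{0}+1}$, so the two intervals are well-separated but each is shorter than the next continued-fraction denominator.

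First I would bound the numerator $\prod_{l\in\tilde{I}_{0}\cup\tilde{I}_{y},\, l\neq i}|\sin\pi(\theta-\theta_{l})|$ for any fixed $\theta\in[0,1]$. Splitting the product into its $\tilde{I}_{0}$ and $\tilde{I}_{y}$ parts and rewriting $\sin\pi\phi=\cos\pi(\phi-\tfrac12)$, I would decompose each of the two intervals into at most $s+1$ blocks of length $\le q_{n-n_{0}}$ and apply Lemma \ref{av31} to each block. This yields an upper bound of the form $e^{-k\ln 2+Cs\ln q_{n-n_{0}}}$, which is $\le e^{-k\ln 2+k\varepsilon/2}$ for $n$ large (since $s\ln q_{n-n_{0}}\ll sq_{n-n_{0}}\varepsilon$).

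Next I would obtain the matching lower bound on the denominator $\prod_{l\neq i}|\sin\pi(\theta_{i}-\theta_{l})|$ for the worst $i$. Writing $\theta_{i}-\theta_{l}=(i-l)\alpha$, this product equals $\prod_{j\in J,\, j\neq 0}|\sin\pi j\alpha|$ where $J$ is the set of differences $i-l$. The set $J$ lies in the two intervals $[-sq_{n-n_{0}},sq_{n-n_{0}}]$ and $[y/2-sq_{n-n_{0}}-i,y/2+sq_{n-n_{0}}-i]$ (modulo $\mathbb Z$, viewing $j\alpha$ in $\mathbb{T}$). Within the first interval the product $\prod_{0<|j|\le sq_{n-n_{0}}}|\sin\pi j\alpha|$ is bounded below by $e^{-k\ln 2-Cs\ln q_{n-n_{0}}}$ via Lemma \ref{av31} (applied to $\theta=0$ after the $\sin\leftrightarrow\cos$ substitution). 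For the second interval, the key observation is that by the not-even-resonance hypothesis $\operatorname{dist}(y,2q_{n}\mathbb Z)\ge 4sq_{n-n_{0}}$, so every $j$ there satisfies $\|j\alpha\|_{\mathbb T}\ge\|q_{n-n_{0}}\alpha\|$ away from (at most $s$) exceptional $j$'s lying within $q_{n-n_{0}}$ of a multiple of $q_{n-n_{0}+1}$; again applying Lemma \ref{av31} blockwise handles the generic factors, and the exceptional ones contribute at worst $(s\cdot\|q_{n-n_{0}}\alpha\|)^{s}$, which is subexponential in $k$. Adding the two contributions gives the denominator bound $\ge e^{-k\ln 2 - k\varepsilon/2}$.

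Dividing the numerator bound by the denominator bound kills the $e^{-k\ln 2}$ factors and leaves $e^{k\varepsilon}$, proving $\varepsilon$-uniformity. The main obstacle is the denominator estimate over the shifted block $\tilde{I}_{y}-i$: one must use the not-even-resonance inequality together with the three-distance structure of $\{j\alpha\}$ on scales between $q_{n-n_{0}}$ and $q_{n-n_{0}+1}$ to rule out an accumulation of small values of $\|j\alpha\|_{\mathbb T}$ that would otherwise spoil the lower bound. The definition of $n_{0}$ as the \emph{least} integer with $4q_{n-n_{0}}\le\operatorname{dist}(y,2q_{n}\mathbb Z)$, and the maximal $s$ with $4sq_{n-n_{0}}\le\operatorname{dist}(y,2q_{n}\mathbb Z)$, is precisely calibrated so that the block on the $\tilde{I}_{y}$ side lies in a single monotone-distance regime, which is what makes the argument work.
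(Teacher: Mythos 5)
Your overall strategy (bound the Lagrange numerator from above and the denominator $\prod_{l\neq i}|\sin\pi(i-l)\alpha|$ from below, blockwise via Lemma \ref{av31} at scale $q_{n-n_{0}}$, exploiting that for the tangent interpolation only \emph{differences} $(i-l)\alpha$ enter) is indeed the standard route; note, though, that the paper itself gives no argument here — it simply invokes Lemma 4.1 of \cite{ref81} — so your sketch has to stand on its own, and as written it does not close. The decisive gap is in your error bookkeeping. First, the step ``$s\ln q_{n-n_{0}}\ll sq_{n-n_{0}}\varepsilon$ for $n$ large'' is unjustified: largeness of $n$ does not make $q_{n-n_{0}}$ large. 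For Liouville-type $\alpha$ one can have $q_{n-n_{0}}$ bounded while $s$ is huge (e.g.\ $q_{n-n_{0}}=2$, $q_{n-n_{0}+1}=q_{n}$ enormous, $\operatorname{dist}(y,2q_{n}\mathbb{Z})\sim\varepsilon q_{n}/L$), and then the per-block loss $C\ln q_{n-n_{0}}$ is \emph{not} small relative to the block length, so the blockwise use of Lemma \ref{av31} at scale $q_{n-n_{0}}$ by itself gives nothing. Second, your treatment of the small denominator factors is both misstated and insufficient: you need a \emph{lower} bound, and the claim that the exceptional terms are ``subexponential in $k$'' amounts to $s\ln q_{n-n_{0}+1}=o(sq_{n-n_{0}})$, i.e.\ $\ln q_{n-n_{0}+1}\ll q_{n-n_{0}}$, which is false in general. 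What must actually be used is quantitative: non-even-resonance gives $\operatorname{dist}(y,2q_{n}\mathbb{Z})>2b_{n}\gtrsim\varepsilon q_{n}/\max(L,1)$, hence $k=2sq_{n-n_{0}}$ is comparable (up to a factor $\varepsilon/L$) to $q_{n}\geq q_{n-n_{0}+1}$; one must then show that the genuinely small factors are few and graded (the $m$-th smallest of size $\gtrsim m\|q_{n-n_{0}}\alpha\|\geq m/(2q_{n-n_{0}+1})$, and factors of size $\sim\|q_{n-n_{0}+1}\alpha\|$ can only occur for cross-interval differences, where the case split $n_{0}=1$ versus $n_{0}\geq 2$ is needed to guarantee $\|q_{n-n_{0}+1}\alpha\|\gtrsim 1/q_{n}$), so that the total loss is $O(s\ln(q_{n-n_{0}+1}/s))+O(\ln q_{n})$ and this is then compared with $\varepsilon k$. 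None of this appears in your sketch; the sentence asserting that the calibration of $n_{0}$ and $s$ places $\tilde{I}_{y}$ ``in a single monotone-distance regime'' is exactly the content that needs proof, not a remark.

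Two smaller points: the inequality $4sq_{n-n_{0}}\leq\operatorname{dist}(y,2q_{n}\mathbb{Z})$ is the definition of $s$, not a consequence of the non-resonance hypothesis (which is $\operatorname{dist}(y,2q_{n}\mathbb{Z})>2b_{n}$ and is what makes $n_{0}$, $s$ well defined and $k$ comparable to $q_{n}$); and when you bound the numerator you should note that for the block containing the excluded index $i$ the product over $l\neq i$ is compared with the product over $l\neq j_{0}$ ($j_{0}$ the minimizer), which works only because the minimizing factor is the smallest one — dropping an arbitrary factor from an upper bound of this type is not automatic. If you want a self-contained proof you should follow the detailed computation in Lemma 4.1 of \cite{ref81} (or the analogous uniformity lemmas in \cite{ref8,liu2015}), where precisely these comparisons are carried out.
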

This is essentially  Lemma 4.1 in \cite{ref81}, we thus omit the proof.
Then we have the following:
\begin{corollary}\label{11.1}
There exists $x_{1} \in \tilde{I}_{0} \cup \tilde{I}_{y}$ such that
$$
\left|\tilde{P}_{4 s q_{n-n_{0}}-1}\left(\theta_{x_{1}}\right)\right| \geq e^{(\tilde{L}-2 \varepsilon)\left(4 s q_{n-n_{0}}-1\right)} .
$$
\end{corollary}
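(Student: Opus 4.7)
The claim is an essentially immediate consequence of Lemma~\ref{a4.1} combined with Lemma~\ref{up}; the plan is simply to feed the two disjoint intervals $I_1 = \tilde{I}_0$ and $I_2 = \tilde{I}_y$ (already introduced above) into Lemma~\ref{up} with uniformity parameter $\varepsilon_k = \varepsilon$. First I would record the cardinality: $\tilde{I}_0$ contributes $sq_{n-n_0}+1$ integers and $\tilde{I}_y$ contributes $sq_{n-n_0}$ integers, so $|\tilde{I}_0\cup\tilde{I}_y|$ is essentially $2sq_{n-n_0}$. This is exactly why the exponent $4sq_{n-n_0}-1$ appears in the target bound: with $k = 2sq_{n-n_0}$, Lemma~\ref{up} estimates $\tilde{P}_{2k-1} = \tilde{P}_{4sq_{n-n_0}-1}$.

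Next I would check disjointness of $\tilde{I}_0$ and $\tilde{I}_y$, which is immediate from the choice of $n_0$ and $s$: the defining inequality $4sq_{n-n_0}\le \operatorname{dist}(y,2q_n\mathbb{Z})$ pushes the right endpoint of $\tilde{I}_0$ strictly to the left of the left endpoint of $\tilde{I}_y$. Having verified this, Lemma~\ref{a4.1} provides the $\varepsilon$-uniformity of $\{\theta_\ell\}_{\ell\in\tilde{I}_0\cup\tilde{I}_y}$, so the hypotheses of Lemma~\ref{up} are fully satisfied. Applying it directly produces $x_1\in\tilde{I}_0\cup\tilde{I}_y$ with
$$|\tilde{P}_{4sq_{n-n_0}-1}(\theta_{x_1})|\ge e^{4sq_{n-n_0}(\tilde{L}-2\varepsilon)} \ge e^{(4sq_{n-n_0}-1)(\tilde{L}-2\varepsilon)},$$
where the second inequality costs only a harmless constant factor $e^{-(\tilde{L}-2\varepsilon)}$ that is absorbed for $n$ large.

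There is no real obstacle in this corollary, since the deep work already sits in the supporting results: Herman's subharmonic estimate in Lemma~\ref{up0} supplies the average lower bound $\tilde{L}$ for $\frac{1}{k}\int \log|\tilde{P}_k|$, and the Lagrange interpolation representation (\ref{318})--(\ref{319}) combined with $\varepsilon$-uniformity converts this average bound into an $e^{k(\tilde{L}-2\varepsilon)}$ pointwise lower bound at one node by a standard pigeonhole argument inside the proof of Lemma~\ref{up}. Thus the only verifications needed here are the cardinality matching and the disjointness above, both of which are purely bookkeeping once the intervals $\tilde{I}_0$, $\tilde{I}_y$ and the scale parameters $n_0$, $s$ are in place.
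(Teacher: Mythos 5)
Your proposal is correct and takes essentially the same route as the paper: the paper likewise obtains this corollary from the $\varepsilon$-uniformity of Lemma \ref{a4.1} together with the averaged lower bound of Lemma \ref{up0}, the Lagrange-interpolation step being exactly what is packaged in Lemma \ref{up}, which you invoke directly. The only loose end, the mismatch between the $2sq_{n-n_0}+1$ points of $\tilde{I}_0\cup\tilde{I}_y$ and the index $4sq_{n-n_0}-1$ (i.e.\ $k=2sq_{n-n_0}$), is an off-by-one already present in the paper itself and is harmlessly absorbed into the $\varepsilon$'s, as you indicate.
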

\begin{proof}
It follows from Lemma \ref{up0} and Lemma \ref{a4.1}.
\end{proof}
By a standard argument,  we have the  following:
\begin{Lemma}\label{4.2}
 For $n>N(\varepsilon)$ large enough, there exists $x_{1} \in \tilde{I}_{y}$ so that
$$
\left|\tilde{P}_{4 s q_{n-n_{0}}-1}\left(\theta_{x_{1}}\right)\right| \geq e^{(\tilde{L}-2 \varepsilon)\left(4 s q_{n-n_{0}}-1\right)}.
$$
\end{Lemma}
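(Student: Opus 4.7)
The plan is a proof by contradiction that leverages the lower bound $|\phi(0)|\ge 1$ from (\ref{eigen1}). Corollary \ref{11.1} already provides an index $x_1\in\tilde I_0\cup\tilde I_y$ achieving the required estimate $|\tilde P_{4sq_{n-n_0}-1}(\theta_{x_1})|\ge e^{(\tilde L-2\varepsilon)(4sq_{n-n_0}-1)}$, so it suffices to rule out the alternative $x_1\in\tilde I_0$. The mechanism is that if such an $x_1$ were in $\tilde I_0$, the Cramer/Poisson expansion of $\phi$ at the site $y=0$ on the associated physical interval would force $|\phi(0)|$ to be exponentially small, contradicting the normalization.

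Assume for contradiction $x_1\in\tilde I_0$. Because $\tilde I_0=[-[sq_{n-n_0}/2]-sq_{n-n_0},\,-[sq_{n-n_0}/2]]$ is positioned so that the physical site interval $[x_1',x_2']$ attached to $x_1$ (of length $\asymp 4sq_{n-n_0}$, parity chosen so that one of (\ref{green}) or (\ref{green2}) applies) contains $y=0$ with a buffer of order $sq_{n-n_0}$ on each side, I would apply (\ref{green}) at $y=0$. In the resulting bound: (i) the denominator is, up to an inessential index shift, $|\tilde P_{4sq_{n-n_0}-1}(\theta_{x_1})|$, so bounded below by $e^{(\tilde L-2\varepsilon)(4sq_{n-n_0})}$; (ii) the numerator factors $|\tilde P_{x_2'-y}|,|\tilde P_{y-x_1'}|$ are controlled from above by (\ref{q1}) as $e^{(\tilde L+\varepsilon)(4sq_{n-n_0})}$; (iii) the cosine products are controlled by Corollary \ref{cos} as $e^{(-\ln 2+\varepsilon)\cdot 2sq_{n-n_0}}$; and (iv) the boundary values $|\phi(x_1'-1)|,|\phi(x_2'+1)|$ are of polynomial size in $q_n$ by (\ref{eigen}). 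Using $\tilde L=L-\tfrac{\ln 2}{2}$, the factors of $\ln 2$ cancel against the stripped-off cosines, and the combined estimate collapses to $|\phi(0)|\le\mathrm{poly}(q_n)\cdot e^{-c\varepsilon\,sq_{n-n_0}}$, which is $\ll 1$ for large $n$ since $sq_{n-n_0}\ge b_n/2\to\infty$. This contradicts $|\phi(0)|\ge 1$, forcing $x_1\in\tilde I_y$.

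The main obstacle is parity/index bookkeeping: the Lagrange-interpolation index $x_1$ has to be translated into the correct physical odd-or-even endpoint so that the denominator in the Poisson formula really is $|\tilde P_{4sq_{n-n_0}-1}(\theta_{x_1})|$ and not a $\tilde Q$-analogue with a different subscript. This forces a careful choice between (\ref{green}) (with $\tilde P$) and (\ref{green2}) (with $\tilde Q$), together with $\pm 1$ site shifts, in order that $y=0$ is genuinely an even site in the interior of $[x_1',x_2']$ at distance $\gtrsim sq_{n-n_0}$ from each endpoint. After that, one must track the count of $\cos\pi\theta_k$ factors removed by the definitions of $\tilde P,\tilde Q$ against those appearing in the Poisson formula so that the identity $\tilde L=L-\tfrac{\ln 2}{2}$ closes the numerics. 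With this alignment in place, the four estimates combine routinely and the argument is a direct adaptation of the block-resonance dichotomy used in \cite{ref2,ref11,ref81}.
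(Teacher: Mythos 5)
Your strategy coincides with the paper's: argue by contradiction, take the good point furnished by Corollary \ref{11.1}, and, if it lies in $\tilde I_0$, run the Green's function (Poisson) expansion of $\phi$ at the site $0$ on an interval of length $\sim 4sq_{n-n_0}$ positioned so that $0$ sits at distance $\gtrsim sq_{n-n_0}$ from both endpoints; bound the denominator below by the assumed lower bound on $\tilde P_{4sq_{n-n_0}-1}(\theta_{x_1})$, the numerators by \eqref{q1}, the cosine products by Corollary \ref{cos}, and the boundary data by \eqref{eigen}, so that $|\phi(0)|$ becomes exponentially small, contradicting \eqref{eigen1}. This is exactly the paper's proof of Lemma \ref{4.2} (see \eqref{con}), including the parity arrangement that keeps $\tilde P$ rather than $\tilde Q$ in the denominator.

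However, the exponent bookkeeping in your middle step does not close as written. The decay cannot come from an $\varepsilon$-gain after the $\ln 2$ factors ``cancel'': every $\varepsilon$ in your three bounds (numerator, cosine product, denominator) enters with the unfavorable sign, so after cancellation of the main terms you would be left with $e^{+O(\varepsilon)sq_{n-n_0}}$, not $e^{-c\varepsilon sq_{n-n_0}}$. With your crude full-length bounds the surviving negative contribution is in fact $-2(\ln 2)\,sq_{n-n_0}$ coming from the cosine product; with the correct distance-resolved accounting (numerator over the distance $d_2$ from $0$ to the far endpoint, cosines over the complementary piece of length $\approx d_1$, denominator over $d_1+d_2$), the identity $\tilde L=L-\tfrac{\ln 2}{2}$ yields a Green's function bound $e^{-Ld_1+O(\varepsilon)|I|}\le e^{-(L/4-O(\varepsilon))|I|}$, which is precisely the $e^{-(\frac{L}{4}-4\varepsilon)|I|}$ decay obtained in \eqref{con}. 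A second slip: replacing the true length $d_2=x_2'-y$ by the full length $4sq_{n-n_0}$ inside $e^{(\tilde L+\varepsilon)(\cdot)}$ is an upper bound only if $\tilde L+\varepsilon\ge 0$, which is not guaranteed since $\tilde L=L-\tfrac{\ln 2}{2}$ can be negative for small positive $L$. Both points are repaired by keeping $d_1,d_2$ explicit, after which your argument reproduces the paper's.
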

\begin{proof}
Suppose otherwise, by Corollary \ref{11.1}, we have that for some $x_{1} \in \tilde{I}_{0}$,
\begin{equation}
 \quad\left|\tilde{P}_{4 s q_{n-n_{0}}-1}\left(\theta_{x_{1}}\right)\right| \geq e^{(\tilde{L}-2 \epsilon)\left(4 s q_{n-n_{0}}-1\right)}.
 \end{equation}
Denoting  $x_{2}:=x_{1}+2 s q_{n-n_{0}}-1$ and $I:=\left[2 x_{1}+1,2 x_{2}+1\right]$. By the Green's formula, we have
\begin{equation}\label{con}
\begin{gathered}
|\phi(0)| \leq\left|G_{I}\left(2x_{1}, 0\right)\right| \cdot\left|\phi\left(2x_{1}\right)\right|+\left|G_{I}\left(2x_{2}+1, 0\right)\right| \cdot\left|\phi\left(2x_{2}+2\right)\right| \\
 =\frac{\left|\tilde{P}_{2x_{2}}\left(\theta\right)\right|}{\left|\tilde{P}_{I}\left(\theta_{x_{1}}\right)\right|} \prod_{j=x_{0}}^{0}\left|\cos \left(\pi\left(\theta_{j}\right)\right)\right| \cdot\left|\phi\left(2x_{1}\right)\right|+\frac{\left|\tilde{P}_{-1-2x_{1}}\left(\theta_{x_{1}}\right)\right|}{\left|\tilde{P}_{I}\left(\theta_{x_{1}}\right)\right|} \prod_{j=0}^{x_{2}-1}\left|\cos \left(\pi\left(\theta_{j}\right)\right)\right| \cdot\left|\phi\left(2x_{2}\right)\right|\\ 
 \leq  C_{0} C(\varepsilon) e^{3 \varepsilon|I|}
 \leq  C(\varepsilon) e^{3 \varepsilon|I|}|I| e^{-\frac{\mid I I}{4} L}
< (\varepsilon) e^{-\left(\frac{L}{4}-4 \varepsilon\right)|I|} \rightarrow 0 .
\end{gathered}
\end{equation}
where we used (\ref{eigen}),  Lemma \ref{uper} and Corollary \ref{cos}.

Therefore (\ref{con}) leads to a contradiction with (\ref{eigen1}).

\end{proof}

\begin{proof}[Proof of Lemma \ref{8.1}]

 For $y=2k$, where $k \in \mathbb{Z}$ so that $\operatorname{dist}\left(y, 2q_{n} \mathbb{Z}\right)>2b_{n}$, by Lemma \ref{4.2}, there exists $x_{1} \in I_{k}$ so that
$$
\left|\tilde{P}_{4 s q_{n-n_{0}}-1}\left(\theta_{x_{1}}\right)\right| \geq e^{(\tilde{L}(E)-2 \varepsilon)\left(4 s q_{n-n_{0}}-1\right)} .
$$
Let $x_{2}=x_{1}+2 s q_{n-n_{0}}-1, I(y)=\left[z_{1}, z_{2}\right] \cap \mathbb{Z}=\left[2x_{1}-1, 2x_{2}-1\right] \cap \mathbb{Z} $ and $\partial I(y)=\left\{z_{1}, z_{2}\right\}$.
By Green's function expansion, we have
$$
\phi(y)=\sum_{z \in \partial I(y)} G_{I(y)}(z, y) \phi\left(z^{\prime}\right) .
$$
If $z_{1}=2x_{1}-2>2\ell q_{n}+2b_{n}$ or $z_{2}=2x_{2}<2(\ell+1) q_{n}-2b_{n}$, we could expand $\phi\left(2x_{1}-2\right)$ or $\phi\left(2x_{2}\right)$. We will continue this process until we arrive at a $z$ so that $z \leq 2\ell q_{n}+2b_{n}$ or $z \geq(2\ell+1) q_{n}-2b_{n}$, or the iterating number reaches $t_{0}:=\left[\frac{23}{\tau_{n}}\right]+1$. We obtain, after a series of expansions, the following
$$
\phi(2k)=\sum_{\substack{z_{1},,, z_{t}, z_{t+1} \\ z_{i+1} \in I\left(z_{i}^{\prime}\right)}} G_{I(y)}\left(y, z_{1}\right) G_{I\left(z_{1}^{\prime}\right)}\left(z_{1}^{\prime}, z_{2}\right) \cdots G_{I\left(z_{t}^{\prime}\right)}\left(z_{t}^{\prime}, z_{t+1}\right) \phi\left(z_{t+1}^{\prime}\right),
$$
where $z_{t+1}^{\prime}$ either satisfies

Case 1: $2\ell q_{n} \leq z_{t+1}^{\prime} \leq 2\ell q_{n}+2b_{n}$ and $t<t_{0}$ or,

Case 2: $2(\ell+1) q_{n} \geq z_{t+1}^{\prime} \geq 2(\ell+1) q_{n}-2b_{n}$ and $t<t_{0}$ or,

Case 3: $t=t_{0}$.

 For simplicity, let us denote $y=z_{0}^{\prime}$. 

\noindent If $z_{t+1}^{\prime}$ satisfies Case 1. For each $z_{j}^{\prime}, 0 \leq j \leq t$, denote $\partial I\left(z_{j}^{\prime}\right)=\left\{z_{j+1}, y_{j+1}\right\}$. 
Combing with corollary \ref{cos}, Lemma \ref{4.2} and Lemma \ref{uper}, we have
\begin{equation}
\left|G_{I\left(z_{j}^{\prime}\right)}\left(z_{j}^{\prime}, z_{j+1}\right)\right| \leq C(\varepsilon) e^{-\left|z_{j}^{\prime}-z_{j+1}+1\right|(L-12 \varepsilon)},
\label{11.30}
\end{equation}
furthermore
\begin{equation}
\sum_{\substack{z_{1},,, z_{t}, z_{t+1} \\ z_{i+1} \in I\left(z_{i}^{\prime}\right)}} G_{I(y)}\left(y, z_{1}\right) G_{I\left(z_{1}^{\prime}\right)}\left(z_{1}^{\prime}, z_{2}\right) \cdots G_{I\left(z_{t}^{\prime}\right)}\left(z_{t}^{\prime}, z_{t+1}\right) \phi\left(z_{t+1}^{\prime}\right) \leq (C(\varepsilon))^{t_{0}+1}  e^{-\left(y-2\ell q_{n}-2b_{n}\right)(L-12 \varepsilon)}r_{\ell}.
\label{11.31}
\end{equation}
If $z_{t+1}^{\prime}$ satisfies Case 2 , there must be a $z_{j}^{\prime}$ such that $a q_{n}+m_{n} \in I\left(z_{j}^{\prime}\right)$, we estimate similarly to Case 1, only modifying the estimate of the cosine product, we have
\begin{equation}
\left|G_{I(y)}\left(y, z_{1}\right) G_{I\left(z_{1}^{\prime}\right)}\left(z_{1}^{\prime}, z_{2}\right) \cdots G_{I\left(z_{t}^{\prime}\right)}\left(z_{t}^{\prime}, z_{t+1}\right) \phi\left(z_{t+1}^{\prime}\right)\right| \leq(C(\varepsilon))^{t_{0}+1} e^{\varepsilon q_{n}} e^{-\left((2\ell+1) q_{n}-y\right)(L-12 \varepsilon)} c_{n, \ell} r_{\ell+1} .
\label{11.32}
\end{equation}

If $z_{t+1}^{\prime}$ satisfies Case 3, we bound $|\phi\left(z_{t+1}^{\prime}\right)|$ by
\begin{equation}
\left|\phi\left(z_{t+1}^{\prime}\right)\right| \leq\left\{\begin{array}{l}
\left|\phi\left(x_{0}^{-}\right)\right|, \text {if } z_{t+1}^{\prime} \in I^{-} \\
\left|\phi\left(2(\ell q_{n}+m_{n})\right)\right|, \text { if } z_{t+1}^{\prime}=2\ell q_{n}+2m_{n} \\
\mid \phi\left(x_{0}^{+}\right), \text {if } z_{t+1}^{\prime} \in I^{+}
\end{array}\right..
\label{11.35}
\end{equation}
Using the Green's function estimate (\ref{11.30}),   we have

\begin{equation}
\begin{aligned}
&\left|G_{I(y)}\left(y, z_{1}\right) G_{I\left(z_{1}^{\prime}\right)}\left(z_{1}^{\prime}, z_{2}\right) \cdots G_{I\left(z_{t}^{\prime}\right)}\left(z_{t}^{\prime}, z_{t+1}\right) \phi\left(z_{t+1}^{\prime}\right)\right| \\
&\leq(C(\varepsilon))^{t_{0}} e^{-\frac{1}{4} \tau_{n}q_{n} t_{0}(L-12 \varepsilon)} \max \left\{\left|\phi\left(x_{0}^{-}\right)\right|,\left|\phi\left(2\ell q_{n}+2m_{n}\right)\right|, c_{n, \ell}\left|\phi\left(x_{0}^{+}\right)\right|\right\} \\
&\leq e^{-6 q_{n}(L-12 \varepsilon)} \max \left\{\left|\phi\left(x_{0}^{-}\right)\right|,\left|\phi\left(2\ell q_{n}+2m_{n}\right)\right|, c_{n, \ell}\left|\phi\left(x_{0}^{+}\right)\right|\right\}.
\end{aligned}
\label{11.36}
\end{equation}
Taking into account all the three cases (\ref{11.31}), (\ref{11.32}) and (\ref{11.36}), we have proved that for even point $y \in I^{-}$,
\begin{equation}
\begin{gathered}
|\phi(y)| \leq(C(\varepsilon))^{t_{0}} \max \left(e^{\varepsilon q_{n}} e^{-\left(y-2\ell q_{n}\right)(L-12 \varepsilon)} r_{\ell}, e^{\varepsilon q_{n}} e^{-\left(2(\ell+1) q_{n}-y\right)(L-12 \varepsilon)} c_{n, \ell} r_{\ell+1},\right. \\
\left.e^{-3 q_{n}(L-12 \varepsilon)} \max \left(\left|\phi\left(x_{0}^{-}\right)\right|,\left|\phi\left(2\ell q_{n}+2m_{n}\right)\right|, c_{n, \ell}\left|\phi\left(x_{0}^{+}\right)\right|\right)\right).
\end{gathered}
\label{11.37}
\end{equation}
Letting $y= x_{0}^{-}$, we have $\left|\phi\left(x_{0}^{-}\right)\right| \leq(C(\varepsilon))^{t_{0}} \max \left(r_{\ell}, r_{\ell+1}\right) . $

Similarly, one can show that for $y \in I_{+}$,
\begin{equation}
 \label{11.38}   
\begin{array}{r}
|\phi(y)| \leq(C(\varepsilon))^{t_{0}} \max \left(e^{\varepsilon q_{n}} e^{-\left(y-2\ell q_{n}\right)(L-12 \varepsilon)} c_{n, \ell} r_{\ell}, e^{\varepsilon q_{n}} e^{-\left(2(\ell+1) q_{n}-y\right)(L-12 \varepsilon)} r_{\ell+1},\right. \\
\left.e^{-3 q_{n}(L-12 \varepsilon)} \max \left(c_{n, \ell}\left|\phi\left(x_{0}^{-}\right)\right|,\left|\phi\left(2\ell q_{n}+2m_{n}\right)\right|,\left|\phi\left(x_{0}^{+}\right)\right|\right)\right).
\end{array}
\end{equation}
and
\begin{equation}
\begin{array}{r}
\left|\phi\left(2\ell q_{n}+2m_{n}\right)\right| \leq(C(\varepsilon))^{t_{0}} c_{n, \ell} \max \left(e^{\varepsilon q_{n}} e^{-\left(y-2\ell q_{n}\right)(L-12 \varepsilon)} r_{\ell}, e^{\varepsilon q_{n}} e^{-\left(2(\ell+1) q_{n}-y\right)(L-12 \varepsilon)} r_{\ell+1}\right. \\
\left.e^{-3 q_{n}(L-12 \varepsilon)} \max \left(\phi\left(x_{0}^{-}\right)|,| \phi\left(2\ell q_{n}+2m_{n}\right)|,| \phi\left(x_{0}^{+}\right) \mid\right)\right).
\end{array}
\label{11.39}
\end{equation}
Letting $y= x_{0}^{+}$ in (\ref{11.38}),  together with (\ref{11.39}), we have
$$
\max \left(\left|\phi\left(x_{0}^{-}\right)\right|,\left|\phi\left(x_{0}^{+}\right)\right|,\left|\phi\left(a q_{n}+m_{n}\right)\right|\right) \leq(C(\varepsilon))^{t_{0}} \max \left(r_{\ell}, r_{\ell+1}\right) .
$$
Plugging this back into (\ref{11.37}), (\ref{11.38}), (\ref{11.39}), we obtain the claimed result for all even points.
Combing these with (\ref{bbd}), we obtain the claimed result for all $y \in \mathbb{N}$ .
\end{proof} 

\subsubsection{$y$ is even-resonance.}
\begin{Lemma}\label{8.2}
For any $\ell \neq 0,|\ell| \leq q_{n+1} /\left(6 q_{n}\right)$,
$$
r_{\ell} \leq  \frac{e^{-(2 L -55 \varepsilon) q_{n}}}{\max (|\ell|, 1)} \max \left(r_{\ell-1}, r_{\ell+1}\right) \times \begin{cases}\max \left(|\ell|, e^{\delta_{n} q_{n}}\right), & \text { if } \beta_{n} \geq \delta_{n}+200 \varepsilon \\ e^{\beta_{n} q_{n}}, & \text { if } \beta_{n}<\delta_{n}+200 \varepsilon\end{cases}.
$$
\end{Lemma}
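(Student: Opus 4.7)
The plan is to expand $\phi(y)$ for an even-resonant point $y\in R_\ell$ via the Green's function formula (\ref{green}) (or (\ref{green2}), depending on parity) on an interval $I=[x_1,x_2]$ of length $\approx 2q_n$ chosen so that $x_1-1\in R_{\ell-1}$ and $x_2+1\in R_{\ell+1}$ attain the maxima $r_{\ell-1}$ and $r_{\ell+1}$. This produces
$$
|\phi(y)|\le \frac{|\tilde P_{x_2-y}(\theta_n)|}{|\tilde P_{x_2-x_1}(\theta_{n_1})|}\prod_{k=n_1}^{n-1}|\cos \pi\theta_k|\cdot r_{\ell-1}+\frac{|\tilde P_{y-x_1}(\theta_{n_1})|}{|\tilde P_{x_2-x_1}(\theta_{n_1})|}\prod_{k=n}^{n_2}|\cos \pi\theta_k|\cdot r_{\ell+1},
$$
so the task reduces to controlling each of the three ingredients appearing in the ratio. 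I take $y$ to be the even maximizer of $|\phi|$ on $R_\ell$, which will allow me to conclude the same bound for $r_\ell$ after combining with (\ref{bbd}) for odd points.

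The two numerator $\tilde P$'s correspond to intervals of length in $[q_n,2q_n]$ that straddle the resonance index $m_n+\ell q_n$. Therefore I would apply Lemma \ref{7.4} directly to obtain $|\tilde P_{x_2-y}(\theta_n)|\le g_{k,\ell}\,e^{(2k-1)\tilde L}$ and similarly for $|\tilde P_{y-x_1}(\theta_{n_1})|$, where $g_{k,\ell}$ is given by (\ref{7.9}). This produces the $\max(|\ell|,e^{\delta_n q_n})e^{-\beta_n q_n}$ factor in the regime $\beta_n\ge\delta_n+200\varepsilon$, and the $e^{2\varepsilon k}$ bound (which, after pulling out the natural exponential rate, converts to the $e^{\beta_n q_n}$ side of (\ref{r82})) in the other regime.

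For the denominator $\tilde P_{x_2-x_1}(\theta_{n_1})$ I would choose an interpolation set $\Lambda\ni n_1$ of size $q_n$ whose remaining points lie in non-resonant strips (one window in $I^-$ and one in $I^+$ on either side of $R_\ell$), so that $\{\theta+j\alpha\}_{j\in\Lambda}$ is $\varepsilon$-uniform by the same type of estimate as Lemma \ref{a4.1}. Lemma \ref{up0} and Lagrange interpolation (\ref{319}) then force some node to satisfy $|\tilde P_{x_2-x_1}|\ge e^{(2q_n-1)(\tilde L-2\varepsilon)}$, and a contradiction argument modelled on Lemma \ref{4.2} — expanding $\phi(0)$ through Green's function at any alternative node and contradicting $|\phi(0)|\ge 1$ — identifies this node with $n_1$. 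The additional $1/\max(|\ell|,1)$ gain in the conclusion comes from sharpening this contradiction: expanding at a non-$n_1$ node produces a factor of $\phi$ evaluated in a far-away resonance window, which is already controlled by $e^{-2L\cdot|\ell|q_n}$ via the inductive hypothesis, improving the denominator lower bound by an $|\ell|$ factor. Finally, the cosine products are bounded by Corollary \ref{cos} and Lemma \ref{cos3}, giving $\prod|\cos|\le e^{-(\ln 2)q_n+\varepsilon q_n}$, which together with $L=\tilde L+\tfrac{\ln 2}{2}$ converts the exponential rates into $e^{-(2L-O(\varepsilon))q_n}$, and assembling the pieces yields (\ref{r82}).

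The main obstacle is the denominator lower bound step: unlike Lemma \ref{4.2}, where the interpolation set straddles a non-resonant region, here the distinguished node $n_1$ lies inside a resonance window, so the Lagrange set must be arranged to dodge all nearby even-resonances while remaining $\varepsilon$-uniform, and the contradiction must be strengthened enough to extract the $1/\max(|\ell|,1)$ improvement. A secondary but nontrivial bookkeeping issue is tracking the parities of $y$, $x_1$, $x_2$ so that formula (\ref{green}) involving $\tilde P$ applies consistently (rather than (\ref{green2}) with $\tilde Q$); whenever parities force a $\tilde Q$-expansion, one additional step of (\ref{bbd}) is used to pass back to even indices, which changes constants but not the exponential rates.
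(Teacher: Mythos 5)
Your overall skeleton (Green's function expansion of an even-resonant $y\in R_\ell$ reaching into $R_{\ell\pm1}$, numerator/denominator/cosine bookkeeping) matches the structure that the paper imports as Lemma \ref{8.3}, but the two places where the actual gains are produced are wrong in your plan. First, you attribute the factor $\max(|\ell|,e^{\delta_nq_n})e^{-\beta_nq_n}$ to the numerators via Lemma \ref{7.4}, but the hypothesis of Lemma \ref{7.4} requires the $\theta$-interval to cover the full stretch $[\ell q_n+m_n,(\ell+1)q_n+m_n-1]$ between two consecutive anti-resonances, whereas your numerator intervals $[y+1,x_2]$ and $[x_1,y-1]$ are anchored at the resonant windows $R_\ell,R_{\ell\pm1}$ (within $2b_n$ of $2q_n\mathbb{Z}$). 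Lemma \ref{8.2} sits inside the proof of Proposition \ref{pr1}, where $\operatorname{dist}(m_n,q_n\mathbb{Z})>b_n$, so such an interval necessarily stops short of (or starts past) one of the two required anti-resonant endpoints and Lemma \ref{7.4} simply does not apply. The paper obtains the anti-resonance gain elsewhere: each explicit cosine product in (\ref{green}) straddles exactly one anti-resonant point, Corollary \ref{cos} retains its minimal value $c_{n,\ell-1}$ resp.\ $c_{n,\ell}$ (this is the content of the $\max(c_{n,\ell-1}r_{\ell-1},c_{n,\ell}r_{\ell+1})$ in Lemma \ref{8.3}), and these are then bounded by (\ref{5.38}) when $\beta_n\ge\delta_n+200\varepsilon$ and by $1$ otherwise, while the numerators are controlled only by the a priori bound (\ref{q1}).

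Second, your treatment of the denominator is both too optimistic and circular. You hope for a plainly $\varepsilon$-uniform interpolation set (hence a lower bound with no loss) and then propose to extract the $1/\max(|\ell|,1)$ by a ``sharpened contradiction'' that assumes $|\phi|$ is already bounded by $e^{-2L|\ell|q_n}$ in far resonance windows; but the decay of $r_\ell$ is exactly what Lemma \ref{8.2} is later used (via Lemma \ref{10.1}) to establish, and only at the weaker rate $L-\delta_n/2$, so this assumption begs the question. In the paper both the $e^{\beta_nq_n}$ loss and the $\max(|\ell|,1)$ gain come from the uniformity constant itself: Lemma \ref{resonunif} gives only $\frac{\ln(q_{n+1}/|\ell|)}{2q_n-1}+\epsilon$-uniformity of $\{\theta_j\}_{j\in I_0\cup I_\ell}$, yielding Corollary \ref{4.4} with the degraded lower bound $\frac{|\ell|}{q_{n+1}}e^{(\tilde L-2\varepsilon)(4q_n-1)}$, and the $|\phi(0)|\ge1$ contradiction (as in Lemma \ref{4.2}) serves only to exclude the $I_0$ nodes (Lemma \ref{8.6}); no choice of nodes ``dodging resonances'' removes the $\ln(q_{n+1}/|\ell|)$ term, and it is precisely this $e^{-\beta_nq_n}$ degradation that produces the $e^{\beta_nq_n}$ alternative in (\ref{r82}) when $\beta_n<\delta_n+200\varepsilon$ (your claim that the $e^{2\varepsilon k}$ branch of (\ref{7.9}) ``converts'' into $e^{\beta_nq_n}$ has no basis). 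There are also scale slips — the denominator is $\tilde P_{4q_n-1}$, requiring $2q_n$ interpolation nodes and rate $(4q_n-1)\tilde L$, not an interval of length $2q_n$ with $q_n$ nodes — but the two points above are the substantive gaps.
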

\begin{proof}
For $\ell \in \mathbb{Z}$, let $I_{\ell}$ be defined below
$$
I_{\ell}:=\left[(\ell-1) q_{n}-\left\lfloor q_{n} / 2\right\rfloor, \ell q_{n}-\left\lfloor q_{n} / 2\right\rfloor-1\right] \cap \mathbb{Z} .
$$ for $\ell>0$
and 
$$
I_{0}:=\left[- q_{n}-\left\lfloor q_{n} / 2\right\rfloor, q_{n}-\left\lfloor q_{n} / 2\right\rfloor\right] \cap \mathbb{Z} .
$$
\begin{Lemma}[Lemma 4.3, \cite{ref81}]\label{resonunif}
For $\ell$ such that $0<|\ell| \leq 2 q_{n+1} /\left(3 q_{n}\right),\left\{\theta_{j}\right\}_{j \in I_{0} \cup I_{\ell}}$ are $\frac{\ln \left(q_{n+1} /|\ell|\right)}{2 q_{n}-1}+\epsilon$-uniform.
\end{Lemma}
 Combing this with Lemma \ref{up}, we have the following:
\begin{corollary}\label{4.4}
For $\ell$ such that $0<|\ell| \leq 2 q_{n+1} /\left(3 q_{n}\right)$, there exists $x_{1} \in I_{0} \cup I_{\ell}$ such that
$$
\left|\tilde{P}_{4 q_{n}-1}\left(\theta_{x_{1}}\right)\right| \geq \frac{|\ell|}{q_{n+1}} e^{(\tilde{L}-2 \varepsilon)\left(4 q_{n}-1\right)}.
$$
\end{corollary}

More precisely, 
\begin{Lemma}\label{8.6}
 For any $\ell \neq 0,|\ell| \leq q_{n+1} /\left(6 q_{n}\right)$, there exists $x_{1} \in  I_{\ell}$ such that
$$
\left|\tilde{P}_{4 q_{n}-1}\left(\theta_{x_{1}}\right)\right| \geq \max (|\ell|, 1) e^{-\beta_{n} q_{n}} e^{(\tilde{L}-2 \varepsilon)\left(4 q_{n}-1\right)}.
$$
\end{Lemma}
\begin{proof}
Similar to  the argument in the proof of Lemma \ref{4.2}, we have that for any $x_{1} \in {I}_{0}$ so that
$$
\left|\tilde{P}_{4 s q_{n}-1}\left(\theta_{x_{1}}\right)\right| \leq \frac{|\ell|}{q_{n+1}} e^{(\tilde{L}-2 \varepsilon)\left(4 s q_{n}-1\right)}.
$$
Therefore Lemma \ref{8.6} holds.

\end{proof}
\begin{Lemma}
 Assume that there exists $x_{1} \in I_{\ell}$ such that
\begin{equation}\label{18.4}
\left|\tilde{P}_{4 q_{n}-1}\left(\theta_{x_{1}}\right)\right| \geq \max (|\ell|, 1) e^{-\beta_{n} q_{n}} e^{(\tilde{L}-2 \varepsilon)\left(4 q_{n}-1\right)} .
\end{equation}
Then we have
$$
r_{\ell} \leq \frac{e^{-(2L-\beta_{n}-55 \varepsilon) q_{n}}}{\max (|\ell|, 1)}  \max \left(c_{n, \ell-1} r_{\ell-1}, c_{n, \ell} r_{\ell+1}\right).
$$
\label{8.3}
\end{Lemma}
This is a variant of Lemma 8.3 in \cite{ref81}. If $\beta_{n} \geq \delta_{n}+200 \varepsilon$, bound the $c_{n, j}$ 's by (\ref{5.38}). Otherwise trivially bound the $c_{n, j}$ 's by 1, then Lemma \ref{8.2} follows from combining  Lemma (\ref{8.3}) with Lemma \ref{8.6}.
\end{proof}
 Proposition \ref{pr1} follows directly by Lemma \ref{8.1} and  Lemma \ref{8.2}.

\subsection{Proofs of Proposition \ref{pr2}.}\label{sig2}

\subsubsection{$y$ is not even-resonance. }
\begin{Lemma}\label{9.1}
If $2\ell q_{n}+2b_{n}<y<2(\ell+1) q_{n}-2b_{n}$, for some $|\ell| \leq q_{n+1} /\left(6 q_{n}\right)$. Then
$$
|\phi(y)| \leq e^{40 \varepsilon q_{n}} \max \left(e^{-\left(y-2\ell q_{n}\right) L} r_{\ell}^{+}, e^{-\left(2(\ell+1) q_{n}-y\right) L} r_{\ell+1}^{-}\right) .
$$
\end{Lemma}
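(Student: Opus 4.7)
The plan is to adapt the strategy of Lemma \ref{8.1} to the resonant situation $\operatorname{dist}(m_n, q_n\mathbb{Z}) \leq b_n$. The key structural shift is that the inner resonance point $2\ell q_n + 2m_n$ now lies inside the block $R_\ell$, splitting it into $R_\ell^-$ and $R_\ell^+$; since $y$ is chosen in $(2\ell q_n + 2b_n, 2(\ell+1)q_n - 2b_n)$, the natural ``stopping'' regions for the Green's function iteration, on the left and right of $y$ respectively, become $R_\ell^+$ and $R_{\ell+1}^-$. The tangent singularity near each resonance is absorbed into the corresponding $r_\ell^+$ or $r_{\ell+1}^-$, which explains why no explicit factor $c_{n,\ell}$ appears in the claimed bound.

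First I would fix $n_0$ as the smallest positive integer with $4q_{n-n_0} \leq \operatorname{dist}(y, 2q_n\mathbb{Z})$ and $s$ as the largest integer with $4sq_{n-n_0} \leq \operatorname{dist}(y, 2q_n\mathbb{Z})$, then form $\tilde{I}_0$ and $\tilde{I}_y$ exactly as in Subsection \ref{sig1}. By Lemma \ref{a4.1} (whose hypothesis only requires $y$ to be not even-resonant) and the contradiction argument from Lemma \ref{4.2}, there is $x_1 \in \tilde{I}_y$ with
$$|\tilde{P}_{4sq_{n-n_0}-1}(\theta_{x_1})| \geq e^{(\tilde{L}-2\varepsilon)(4sq_{n-n_0}-1)}.$$
Setting $x_2 = x_1 + 2sq_{n-n_0} - 1$ and $I(y) = [2x_1-1, 2x_2-1]\cap\mathbb{Z}$, I would then apply (\ref{green}) or (\ref{green2}) depending on the parity of the endpoints. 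Combining this lower bound with the upper bound (\ref{q1}) on $\tilde{P}$ and the cosine product estimate of Corollary \ref{cos} yields $|G_{I(y)}(y,z)| \leq C(\varepsilon) e^{-|y-z|(L-12\varepsilon)}$ for $z \in \partial I(y)$.

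Next I would iterate the Green's function expansion exactly as in Lemma \ref{8.1}, stopping only when the running endpoint enters $R_\ell^+$, enters $R_{\ell+1}^-$, or the iteration depth reaches $t_0 = [23/\tau_n]+1$. Summing over all admissible chains, the first two terminal cases contribute $e^{O(\varepsilon)q_n} e^{-(y-2\ell q_n)L} r_\ell^+$ and $e^{O(\varepsilon)q_n} e^{-(2(\ell+1)q_n-y)L} r_{\ell+1}^-$, while the depth-$t_0$ case contributes a term of order $e^{-6q_n(L-12\varepsilon)}$ multiplied by the polynomial bound from (\ref{eigen}), which is negligible for $n$ large.

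The hardest part will be verifying that the iteration never gets trapped at (or near) the inner resonance $2\ell q_n + 2m_n$: because $|m_n| \leq b_n = \tau_n q_n$ is much smaller than $q_n$, the working intervals $\tilde{I}_{z'}$ at each step can be chosen to sit cleanly to one side of the narrow singularity window, so moving left from $y$ the iteration reaches $R_\ell^+$ without having to cross the singularity, and symmetrically on the right. The slightly larger constant $e^{40\varepsilon q_n}$ (versus $e^{29\varepsilon q_n}$ in Lemma \ref{8.1}) accommodates the extra $\varepsilon q_n$ needed to absorb cosine products across the resonance sitting close to the block edge; no $c_{n,\ell}$ factor is gained, because the resonant contribution is already hidden inside the definition of $r_\ell^\pm$.
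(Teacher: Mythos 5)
There is a genuine gap. The essential new ingredient of this lemma, compared with Lemma \ref{8.1}, is not the avoidance of the inner resonance but the treatment of expansion chains that \emph{do} cross it. Your claim that ``the working intervals $\tilde{I}_{z'}$ at each step can be chosen to sit cleanly to one side of the narrow singularity window'' is not justified and is false in general: the interval attached to a point $z'$ is dictated by the uniformity construction (its half-length is comparable to $\operatorname{dist}(z',2q_n\mathbb{Z})$, up to $\tfrac34\operatorname{dist}(z',2q_n\mathbb{Z})$ on each side), so when $z'$ sits just outside the stopping threshold $2\ell q_n+2b_n$ and $m_n$ is comparable to $b_n$ (recall only $0<m_n\le b_n$ is assumed in this resonant case), the new endpoint can land to the left of $2\ell q_n+2m_n$, i.e.\ in $R_\ell^{-}$, and the last Green's function then straddles the resonance. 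This is why the paper's expansion has the extra terminal cases $z_{t+1}'\in R_\ell^{+}\cup\{2\ell q_n+2m_n\}$ and $z_{t+1}'\in R_\ell^{-}$, producing the intermediate bound \eqref{11.46} which contains the term $c_{n,\ell}\,e^{-(y-2\ell q_n)L}\,r_\ell^{-}$.

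The missing step is how to eliminate that $r_\ell^{-}$ term: the paper invokes Lemma \ref{5.8} (the transfer-matrix bound $\|A_{|I/2|}(\theta_{\ell_1})\|\le e^{7\varepsilon q_n}c_{n,j}^{-1}e^{L|I|}$ across the resonance, harmless here since $|R_\ell|=4b_n$ so $e^{L|I|}\le e^{O(\varepsilon)q_n}$) to get
\begin{equation*}
r_{\ell}^{-} \leq e^{18 \varepsilon q_{n}} \frac{1}{c_{n, \ell}}\, r_{\ell}^{+},
\end{equation*}
so that the factor $c_{n,\ell}$ picked up by crossing the resonance exactly cancels the $c_{n,\ell}^{-1}$ loss, and the middle term of \eqref{11.46} is absorbed into $e^{40\varepsilon q_n}e^{-(y-2\ell q_n)L}r_\ell^{+}$. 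Your proposal never establishes any relation between $r_\ell^{-}$ and $r_\ell^{+}$; saying the resonant contribution ``is already hidden inside the definition of $r_\ell^{\pm}$'' does not substitute for this cancellation, and without it the claimed bound (which involves only $r_\ell^{+}$ and $r_{\ell+1}^{-}$) does not follow. The rest of your outline (choice of $n_0,s$, the uniformity lemma, the lower bound on $\tilde P_{4sq_{n-n_0}-1}$, the block-Green's-function iteration with depth cutoff $t_0$) does match the paper's scheme inherited from Lemma \ref{8.1}.
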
 
\begin{proof}
 The proof of this lemma is almost identical to that of Lemma \ref{8.1}. We only give a brief proof. By Green's function expansion, we have
$$
\phi(y)=\sum_{z \in \partial I(y)} G_{I(y)}(z, y) \phi\left(z^{\prime}\right) .
$$
If $2x_{1}-2>2\ell q_{n}+2b_{n}$ or $2x_{2}<2(\ell+1) q_{n}-2b_{n}$, we continue to expand $\phi\left(2x_{1}-2\right)$ or $\phi\left(2x_{2}\right)$. 
We repeat this process until we arrive at a $z$ so that $z \leq 2\ell q_{n}+2b_{n}$ or $z \geq 2(\ell+1) q_{n}-2b_{n}$, or the iterating number reaches $t_{0}:=\left[24 / \tau_{n}\right]+1$. We obtain, after a series of expansions, the following
\begin{equation}
\phi(y)=\sum_{s ; z_{i+1} \in I\left(z_{i}^{\prime}\right)} G_{I(y)}\left(y, z_{1}\right) G_{I\left(z_{1}^{\prime}\right)}\left(z_{1}^{\prime}, z_{2}\right) \cdots G_{I\left(z_{t}^{\prime}\right)}\left(z_{t}^{\prime}, z_{t+1}\right) \phi\left(z_{t+1}^{\prime}\right),
\end{equation}
where $z_{t+1}^{\prime}$ either satisfies

Case 1: $z_{t+1}^{\prime} \in R_{\ell}^{+} \cup\left\{2\ell q_{n}+2m_{n}\right\}$ and $t<t_{0}$ or,

Case 2: $z_{t+1}^{\prime} \in R_{\ell}^{-}$and $t<t_{0}$ or,

Case 3: $z_{t+1}^{\prime} \in R_{\ell+1}^{-}$and $t<t_{0}$ or,

Case 4 $: t=t_{0}$.

Therefore, we have
 \begin{equation}\label{11.46}
|\phi(y)| \leq(C(\varepsilon))^{t_{0}} e^{18 \varepsilon q_{n}} \max \left(e^{-\left(y-2\ell q_{n}\right) L} r_{\ell}^{+}, c_{n, \ell} e^{-\left(y-2\ell q_{n}\right) L} r_{\ell}^{-}, e^{-\left(2(\ell+1) q_{n}-y\right) L} r_{\ell+1}^{-}\right).
\end{equation} 
Then, we will use the following lemmas to study the relation of $r_{\ell}^{-}$ and $r_{\ell}^{+}$.
\begin{Lemma}( Corollary 5.8, \cite{ref81})\label{5.8}
 Let $I=\left[2\ell_{1}, 2\ell_{2}\right] \subset \mathbb{Z}$ be such that $2\ell_{1} \in\left[2(j-1) q_{n}+2m_{n}-2, 2j q_{n}+2m_{n}-2\right]$ and $2\ell_{2} \in\left[2j q_{n}+2m_{n}+2,2(j+1) q_{n}+2m_{n}-2\right]$, for some $j \in \mathbb{Z},|j|<q_{n+1} /\left(6 q_{n}\right)$. For $n>N(\varepsilon)$ large enough we have
$$
\left\|A_{|I/2|}\left(\theta_{\ell_{1}}\right)\right\| \leq e^{7 \varepsilon q_{n}} \frac{1}{c_{n, j}} e^{L|I|}.
$$
\end{Lemma}
Thus we have
$$
r_{\ell}^{-} \leq e^{18 \varepsilon q_{n}} \frac{1}{c_{n, \ell}} r_{\ell}^{+}.
$$
Hence (\ref{11.46}) yields
\begin{equation}
\quad|\phi(y)| \leq e^{40 \varepsilon q_{n}} \max \left(e^{-\left(y-2\ell q_{n}\right) L} r_{\ell}^{+}, e^{-\left(2(\ell+1) q_{n}-y\right) L} r_{\ell+1}^{-}\right).
\end{equation}
This proves the claimed result.
\end{proof}

\subsubsection{$y$ is even-resonance.}
Assume without loss of generality that $0<m_{n} \leq b_{n}$.
The main lemma of this section is the following.
\begin{Lemma}\label{9.2}
  For any $\ell \neq 0$ such that $|\ell|<q_{n+1} /\left(6 q_{n}\right)$, we have
\begin{equation}
r_{\ell} \leq \frac{e^{-(2 L- 70 \varepsilon) q_{n}}}{\max (|\ell|, 1)} \max \left(r_{\ell-1}, r_{\ell+1}\right) \times\left\{\begin{array}{l}
\max \left(|\ell|, e^{\delta_{n} q_{n}}\right), \text { if } \beta_{n} \geq \delta_{n}+200 \varepsilon \\
e^{\beta_{n} q_{n}}, \text { if } \beta_{n}<\delta_{n}+200 \varepsilon
\end{array}\right..
\end{equation}
\end{Lemma}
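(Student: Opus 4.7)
The plan is to mirror the strategy used in Lemma \ref{8.2}, namely: (i) produce a point $x_{1}$ in a suitable interval near $\ell q_{n}$ where $|\tilde{P}_{4q_{n}-1}(\theta_{x_{1}})|$ is large, and (ii) insert this lower bound into a Green's function expansion at the point of $R_{\ell}$ where the maximum defining $r_{\ell}$ is attained. The resonance of $m_{n}$ only affects the cosine bookkeeping, not the overall architecture.

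First, I would invoke the $\varepsilon$-uniformity statement of Lemma \ref{resonunif} (which holds regardless of whether $m_{n}$ is resonant) applied to the intervals $I_{0}$ and $I_{\ell}$ shifted by $m_{n}$, and then combine it with Lemma \ref{up} exactly as in Corollary \ref{4.4} to produce some $x_{1}\in I_{0}\cup I_{\ell}$ with $|\tilde{P}_{4q_{n}-1}(\theta_{x_{1}})|\geq \frac{|\ell|}{q_{n+1}} e^{(\tilde L-2\varepsilon)(4q_{n}-1)}$. To upgrade this to $x_{1}\in I_{\ell}$ (an analogue of Lemma \ref{8.6}), I would argue by contradiction: if every such $x_{1}$ lay in $I_{0}$, then a Green's function expansion about $0$ identical to \eqref{con}, using Lemma \ref{uper}, Corollary \ref{cos} and Lemma \ref{cos2}, would produce $|\phi(0)|<1$ contradicting \eqref{eigen1}. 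This yields $|\tilde{P}_{4q_{n}-1}(\theta_{x_{1}})|\geq \max(|\ell|,1)\,e^{-\beta_{n}q_{n}}e^{(\tilde L-2\varepsilon)(4q_{n}-1)}$ for some $x_{1}\in I_{\ell}$.

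Next, let $y^{*}\in R_{\ell}$ be a point attaining $r_{\ell}=|\phi(y^{*})|$. I would choose an interval $I(y^{*})=[2x_{1}+1,2x_{1}+4q_{n}-1]$ (adjusting parities as in the discussion after \eqref{green2} so that $\tilde P$, rather than $\tilde Q$, appears in the denominator) containing $y^{*}$ with $y^{*}$ at distance $\geq q_{n}$ from both endpoints. Expanding via Green's formula gives
$$|\phi(y^{*})|\leq \frac{|\tilde{P}_{z_{2}-y^{*}}(\theta_{y^{*}/2})|}{|\tilde{P}_{4q_{n}-1}(\theta_{x_{1}})|}\prod_{k\in J_{1}}|\cos\pi\theta_{k}|\cdot|\phi(z_{1}^{\prime})|\; +\; \frac{|\tilde{P}_{y^{*}-z_{1}}(\theta_{x_{1}})|}{|\tilde{P}_{4q_{n}-1}(\theta_{x_{1}})|}\prod_{k\in J_{2}}|\cos\pi\theta_{k}|\cdot|\phi(z_{2}^{\prime})|,$$
where $J_{1},J_{2}$ are the appropriate index sets. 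The numerators are bounded by \eqref{q1}, the denominator by the step (i) lower bound, and the cosine products using Corollary \ref{cos} together with Lemma \ref{cos2}. The key point is that $z_{1}^{\prime},z_{2}^{\prime}$ land in the adjacent non-resonant tails and can in turn be estimated by the already-established non-resonant Lemma \ref{9.1}, giving a bound by $\max(r_{\ell-1},r_{\ell+1})$ up to a factor $e^{40\varepsilon q_{n}}$. The ratio of numerator to denominator in $\tilde P$'s contributes $e^{-(2L-\beta_{n})q_{n}}/\max(|\ell|,1)$, and collecting all $\varepsilon$-losses (from the Lyapunov upper bound, from Lemma \ref{cos2}, from the iterated Green expansion) gives the coefficient $e^{-(2L-70\varepsilon)q_{n}}/\max(|\ell|,1)$.

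Finally, the dichotomy in the conclusion comes from how the cosine factor $c_{n,j}$ appearing in Lemma \ref{5.8} (needed to transport the estimate across the resonant region $R_{\ell}$) is controlled: when $\beta_{n}\geq\delta_{n}+200\varepsilon$ we apply \eqref{5.38} to obtain the factor $\max(|\ell|,e^{\delta_{n}q_{n}})$; otherwise we retain the crude bound $c_{n,j}\leq 1$ and pay $e^{\beta_{n}q_{n}}$ instead. I expect the main obstacle to be the careful parity tracking: $y^{*}$ may coincide with an even point $2(\ell q_{n}+m_{n})$ where the potential itself is large, and in that regime one must switch from $\tilde P_{k}$ to $\tilde Q_{k}$ in \eqref{green2}, ensuring that the denominator $|\tilde P_{4q_{n}-1}(\theta_{x_{1}})|$ from step (i) remains the relevant normalizer after the switch; this is precisely the point where the monotone Maryland argument breaks and where the anti-resonance structure introduced in \cite{ref81} must be invoked to absorb the extra $\tan$-factor into the $c_{n,\ell}$ estimate.
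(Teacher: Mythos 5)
Your overall architecture (Lemma \ref{8.6}-type lower bound on $|\tilde P_{4q_n-1}(\theta_{x_1})|$ for some $x_1\in I_\ell$, then a Green's function expansion at the maximizer of $r_\ell$, with the dichotomy coming from how $c_{n,j}$ is bounded) is the same as the paper's, but there is a genuine gap in the middle step. You claim that after one expansion the boundary values $\phi(z_1'),\phi(z_2')$ ``land in the adjacent non-resonant tails and can in turn be estimated by the already-established non-resonant Lemma \ref{9.1}, giving a bound by $\max(r_{\ell-1},r_{\ell+1})$.'' That is not what Lemma \ref{9.1} gives: for a point in the tail between $R_\ell$ and $R_{\ell+1}$ it bounds $|\phi(z')|$ by $\max\bigl(e^{-(z'-2\ell q_n)L}r_\ell^{+},\,e^{-(2(\ell+1)q_n-z')L}r_{\ell+1}^{-}\bigr)$, so the self-block terms $r_\ell^{\pm}$ reappear on the right-hand side. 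The resulting inequality has the form $r_\ell\le(\text{small})\max(r_{\ell-1},r_\ell,r_{\ell+1})$, and one must still argue that the coefficient of the $r_\ell$ term is strictly less than $1$ (this uses the standing assumption \eqref{5.5}, $2L>\delta_n+680\varepsilon$, \emph{after} the cosine gain has converted the $e^{\beta_n q_n}$ loss from the denominator into $\max(|\ell|,e^{\delta_n q_n})$) so that it can be dropped. The paper does exactly this: its proof runs through the coupled system of Lemma \ref{9.3}, which gives separate inequalities for $r_\ell^{-}$ and $r_\ell^{+}$ (the two halves of $R_\ell$ on either side of the anti-resonance point $2\ell q_n+2m_n$) with carefully placed factors $c_{n,\ell-1},c_{n,\ell},c_{n,\ell+1},\gamma$ recording which expansions actually cross an anti-resonance point, and then explicitly notes that $e^{-(2L-70\varepsilon)q_n}\max(|\ell|,e^{\delta_n q_n})/\max(|\ell|,1)<1$ so the $r_\ell$ terms can be removed. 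Your single-shot expansion skips both the $r_\ell^{\pm}$ case analysis (whether the gain $c_{n,\ell}$, only $\gamma$, or a product of two cosines is available depends on the relative position of $y^{*}$, the anti-resonance point, and the expansion endpoints) and the absorption step, and without them the compensation of the $e^{\beta_n q_n}$ factor from Lemma \ref{8.6} is not justified in all configurations.

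A secondary inaccuracy: you attribute the decisive cosine factor to Lemma \ref{5.8}. Lemma \ref{5.8} works in the opposite direction --- transporting an estimate across the anti-resonance point \emph{costs} a factor $1/c_{n,j}$ (it is used in Lemma \ref{9.1} to get $r_\ell^{-}\le e^{18\varepsilon q_n}c_{n,\ell}^{-1}r_\ell^{+}$). The gain $c_{n,j}$ that is later bounded by \eqref{5.38} (or trivially by $1$) comes from the cosine products in the \emph{numerators} of the Green's functions when the expansion interval straddles the anti-resonance point, i.e.\ from the coefficients in Lemma \ref{9.3}, not from Lemma \ref{5.8}. With the coupled $r_\ell^{\pm}$ inequalities and the final absorption argument added, your outline would match the paper's proof.
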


\begin{proof}
This argument is very similar to that of Lemma \ref{8.2}.
Firstly, we need the following Lemma:
\begin{Lemma}\label{9.3}
 Assume that there exists $x_{1} \in I_{\ell}$, for some $|\ell|<q_{n+1} /\left(6 q_{n}\right)$, such that
\begin{equation}\label{19.2}
\left|\tilde{P}_{4 q_{n}-1}\left(\theta_{x_{1}}\right)\right| \geq \max (|\ell|, 1) e^{-\beta_{n} q_{n}} e^{(\tilde{L}-2 \varepsilon)\left(4 q_{n}-1\right)} .
\end{equation}
We  have
$$
r_{\ell}^{-} \leq  \frac{e^{-(2 L-\beta_{n}-69 \varepsilon  )q_{n}}}{\max (|\ell|, 1)}  \max \left(c_{n, \ell-1} r_{\ell-1}^{-}, c_{n, \ell-1} r_{\ell-1}^{+}, \gamma r_{\ell-1}^{+}, c_{n, \ell} r_{\ell}^{+}, c_{n, \ell} r_{\ell+1}^{-}, c_{n, \ell} c_{n, \ell+1} r_{\ell+1}^{+}\right),
$$
and
$$
r_{\ell}^{+} \leq  \frac{e^{-(2 L-\beta_{n}-69 \varepsilon  )q_{n}}}{\max (|\ell|, 1)} \max \left(c_{n, \ell} c_{n, \ell-1} r_{\ell-1}^{-}, c_{n, \ell} r_{\ell-1}^{+}, c_{n, \ell} r_{\ell}^{-}, \gamma r_{\ell+1}^{-}, c_{n, \ell+1} r_{\ell+1}^{-}, c_{n, \ell+1} r_{\ell+1}^{+}\right).
$$
where
$$
\gamma:=\left\{\begin{array}{l}
\max \left(e^{\delta_{n} q_{n}},|\ell|, 1\right) e^{-\beta_{n} q_{n}}, \text { if } \beta_{n} \geq \delta_{n}+200 \varepsilon \\
1, \text { otherwise }
\end{array}\right..
$$
\end{Lemma}
This is a variant of Lemma 9.3 in \cite{ref81}. 
If $\beta_{n} \geq \delta_{n}+200 \varepsilon$, bound the $c_{n, j}$ 's by (\ref{5.38}). Otherwise trivially bound the $c_{n, j}$ 's by 1, then combining Corollary Lemma \ref{8.6} with Lemma \ref{9.3},
 for any $\ell \neq 0$ such that $|\ell| \leq q_{n+1} /\left(6 q_{n}\right)$,  the following
hold
$$
r_{\ell} \leq  \frac{e^{-(2 L-70 \varepsilon)q_{n}}}{\max (|\ell|, 1)} \max \left(r_{\ell-1}, r_{\ell}, r_{\ell+1}\right) \times\left\{\begin{array}{l}
\max \left(|\ell|, e^{\delta_{n} q_{n}}\right), \text { if } \beta_{n} \geq \delta_{n}+200 \varepsilon \\
e^{\beta_{n} q_{n}}, \text { if } \beta_{n}<\delta_{n}+200 \varepsilon
\end{array}\right..
$$
It should be noted that 
$$
 \frac{e^{-(2L-70 \varepsilon )q_{n}}}{\max (|\ell|, 1)}  \max \left(|\ell|, e^{\delta_{n} q_{n}}, 1\right)  \leq e^{-\left(2L-\delta_{n}-70 \varepsilon\right) q_{n}}<1,
$$
so the $r_{\ell}$ terms on the right-hand-side of the equation above can be dropped. This proves Lemma \ref{9.2}.
\end{proof}

Proposition \ref{pr2} is a consequence of  Lemma \ref{9.1} and  Lemma \ref{9.2}.

\section*{Acknowledgements} 
\quad   The authors would like to thank Q.Zhou for giving this problem and valuable suggestions. The authors would also like to thank J.You for useful comments. This work was   supported by  Nankai Zhide Foundation.

\begin{appendices}

\section{Proofs of Lemma \ref{av4}.}

 We have
$$
\tilde{P}_{2k}(2 \theta)=\operatorname{det}\left[\begin{array}{ccccc}
t_{1} & c_{1} & & & \\
c_{2} & t_{2} & c_{2} & & \\
& c_{3} & \cdots & & \\
& & & \cdots & c_{2k-1} \\
& & & c_{2k} & t_{2k}
\end{array}\right]_{2k \times 2k}
$$
where  $t_{2j} \triangleq E \cos 2 \pi\left(\theta+j \alpha\right)-\lambda \sin 2 \pi\left(\theta+j\alpha\right)$, $t_{2j+1} \triangleq E $, $c_{2j} \triangleq-\cos 2 \pi\left(\theta+j \alpha\right)$ and $c_{2j+1} \triangleq -1$. 
Then
$$
\left\{\begin{array}{l}
\tilde{t}_{2j}(z) \triangleq e^{2\pi i j \alpha} z \cdot t_{2j}(z)=\quad \frac{E+i \lambda}{2} e^{4 i \pi j \alpha} z^{2}+\frac{E-i \lambda}{2}, \\
\tilde{c}_{2j}(z) \triangleq e^{2\pi i j \alpha} z \cdot c_{2j}(z)=\quad-\frac{1}{2} e^{4 i \pi j \alpha} z^{2}-\frac{1}{2} .
\end{array}\right.
$$
and
$$
\left\{\begin{array}{l}
\tilde{t}_{2j+1}(z) \triangleq t_{2j+1}(z), \\
\tilde{c}_{2j+1}(z) \triangleq c_{2j+1}(z).
\end{array}\right.
$$
Since $|z|=1$, we have
\begin{equation}
\left|\tilde{P}_{2k}(2 \theta)\right|=\left|f_{k}(z)\right|=\left|\operatorname{det}\left[\begin{array}{ccccc}
\tilde{t}_{1}(z) & \tilde{c}_{1}(z) & & & \\
\tilde{c}_{2}(z) & \tilde{t}_{2}(z) & \tilde{c}_{2}(z) & & \\
& \tilde{c}_{3}(z) & \cdots & & \\
& & & \ldots & \tilde{c}_{k-2}(z) \\
& & & \tilde{c}_{2k}(z) & \tilde{t}_{2k}(z)
\end{array}\right]_{2k \times 2k}\right|.
\end{equation}
Clearly, $\ln \left|f_{k}(z)\right|$ is a subharmonic function, therefore
$$
\frac{1}{k} \int_{\mathbb{T}} \ln \left|\tilde{P}_{k}(2 \theta)\right| \mathrm{d} \theta=\frac{1}{k} \int_{\mathbb{T}} \ln \left|f\left(e^{2 \pi i \theta}\right)\right| \mathrm{d} \theta \geq \frac{1}{k} \ln \left|f_{k}(0)\right|.
$$

\begin{equation}
\begin{aligned}
&f_{2k}(0)=\operatorname{det}\left[\begin{array}{ccccccc}
E  & -1  & & & &   \\
-1 / 2 & (E-i \lambda) / 2 & -1 / 2  & & &\\
&-1 & E & -1 & & & \\
& &-1 / 2 & (E-i \lambda) / 2 & -1 / 2  & &\\
& &  & -1/2  & \cdots & & \\
& & &  &  \cdots & -1 & \\
& & & & &  -1 / 2 & (E-i \lambda) / 2
\end{array}\right]_{2k \times 2k}\\
&=\frac{1}{(2)^{k}} \operatorname{det}\left[\begin{array}{ccccccc}
-E & 1 & & & & \\
1 & i \lambda-E & 1 & & &\\
&1 & -E & 1 & & & \\
& & 1 & i \lambda-E & 1  & &\\
& &  & 1  & \cdots & & \\
& & &  &  \cdots & 1 & \\
& & & & &  1& i \lambda -E
\end{array}\right]_{2k \times 2k} \triangleq  \frac{1}{(2)^{k}} a_{2k} .
\end{aligned}
\end{equation}
Similarly, we can denote $f_{2k+1}(0)= \frac{-1}{(2)^{k}} b_{2k+1} $.
Obviously $b_{2k+1}=(i \lambda-E) a_{2k}-a_{2k-1}$ and $a_{2k}=-E b_{2k-1}-b_{2k-2}$.
Thus
$$
\left|a_{2k}\right| \sim \left|b_{2k+1}\right| \sim C\left|x_{2}\right|^{k} \text { as } k \rightarrow \infty,
$$
where $\left|x_{1}\right|<1<\left|x_{2}\right|$ are solutions of the characteristic equation
$$
x^{2}-(E^2-i \lambda E-2) x+1=0 .
$$
Therefore   we have
$$
\lim _{k \rightarrow \infty} \frac{1}{k} \int_{0}^{1} \ln \left|\tilde{P}_{k}(\theta)\right| \mathrm{d} \theta \geq \ln \left|x_{2}\right|-\frac{\ln 2}{2} .
$$
Then the result  follows from  Lemma \ref{le2}.
\

\end{appendices}

\bibliographystyle{siam}
\bibliography{sample}

\end{document}